  \providecommand\BibTeX{{%
    \normalfont B\kern-0.5em{\scshape i\kern-0.25em b}\kern-0.8em\TeX}}}
\newcommand{\pagerank}{{Pagerank}}
\newcommand{\neighborlocal}{{\sc LFPR$_N$}}
\newcommand{\uniformlocal}{{\sc LFPR$_U$}}
\newcommand{\proportionallocal}{{\sc LFPR$_P$}}
\newcommand{\sensitive}{{\sc FSPR}}
\newcommand{\allPR}{\mathcal{PR}}
\newcommand{\residuallocal}{{\sc LFPR$_O$}}
\newcommand{\Ivec}{\mathbf{I}}
\newcommand{\pvec}{\mathbf{p}}
\newcommand{\wvec}{\mathbf{w}}
\newcommand{\vvec}{\mathbf{v}}
\newcommand{\evec}{\mathbf{e}}
\newcommand{\rvec}{\mathbf{r}}
\newcommand{\fvec}{\mathbf{f}}
\newcommand{\Pmatrix}{\mathbf{P}}
\newcommand{\Qmatrix}{\mathbf{Q}}
\newcommand{\Imatrix}{\mathbf{I}}
\newcommand{\Xmatrix}{\mathbf{X}}
\newcommand{\Ymatrix}{\mathbf{Y}}
\newcommand{\PR}{\textsc{PR}}
\newcommand{\OPR}{\textsc{OPR}}
\newcommand{\FPR}{\textsc{FPR}}
\newcommand{\PPR}{\overline{\PR_i}}
\newcommand{\dvec}{\mathbf{\delta}}
\newcommand{\xvec}{\mathbf{x}}
\newcommand{\yvec}{\mathbf{y}}
\newtheorem{problem}{Problem}
\newtheorem{definition}{Definition}
\newtheorem{lemma}{Lemma}
\newtheorem{theorem}{Theorem}
\begin{document}

\copyrightyear{2021}
\acmYear{2021}
%\setcopyright{iw3c2w3}
\acmConference[WWW '21]{Proceedings of the Web Conference 2021}{April 19--23, 2021}{Ljubljana, Slovenia}
\acmBooktitle{Proceedings of the Web Conference 2021 (WWW '21), April 19--23, 2021, Ljubljana, Slovenia}
\acmPrice{}
\acmDOI{10.1145/3442381.3450065}
\acmISBN{978-1-4503-8312-7/21/04}

%%
%% The "title" command has an optional parameter,
%% allowing the author to define a "short title" to be used in page headers.
\title{Fairness-Aware PageRank}

%%
%% The "author" command and its associated commands are used to define
%% the authors and their affiliations.
%% Of note is the shared affiliation of the first two authors, and the
%% "authornote" and "authornotemark" commands
%% used to denote shared contribution to the research.
\author{Sotiris Tsioutsiouliklis}
\affiliation{ \institution{University of Ioannina}
	\country{Greece}}
\email{stsiouts@cse.uoi.gr}

%\authornote{Both authors contributed equally to this research.}
%\email{stsiouts@cse.uoi.gr}
%\orcid{1234-5678-9012}
%\affiliation{%
%  \institution{University of Ioannina}
%  %\streetaddress{P.O. Box 1212}

%  %\state{Ohio}
%  %\postcode{43017-6221}
%}

\author{Evaggelia Pitoura}
\affiliation{ \institution{University of Ioannina}
	\country{Greece}}
\email{pitoura@cse.uoi.gr}

\author{Panayiotis Tsaparas}
\affiliation{ \institution{University of Ioannina}
	\country{Greece}}
\email{tsap@cse.uoi.gr}

\author{Ilias Kleftakis}
\affiliation{ \institution{University of Ioannina}
	\country{Greece}}
\email{ikleftakis@cse.uoi.gr}

\author{Nikolaos Mamoulis}
\affiliation{ \institution{University of Ioannina}
\country{Greece}}
\email{nikos@cse.uoi.gr}

  %%
%% By default, the full list of authors will be used in the page
%% headers. Often, this list is too long, and will overlap
%% other information printed in the page headers. This command allows
%% the author to define a more concise list
%% of authors' names for this purpose.
\renewcommand{\shortauthors}{Tsioutsiouliklis and Pitoura, et al.}

%%
%% The abstract is a short summary of the work to be presented in the
%% article.
\begin{abstract}
Algorithmic fairness has attracted significant attention in the past years. In this paper, we consider fairness for link analysis and in particular for the celebrated Pagerank algorithm. Given that the nodes in a network belong to groups (for example, based on demographic or other characteristics), we provide a parity-based definition of fairness that imposes constraints on the proportion of Pagerank allocated to the members of each group. We propose two families of fair Pagerank algorithms: the first (Fairness-Sensitive Pagerank) modifies the jump vector of the Pagerank algorithm to enforce fairness; the second (Locally Fair Pagerank) imposes a fair behavior per node. We then define a stronger fairness requirement, termed universal personalized fairness, that asks that the derived personalized pageranks of all nodes are fair. We prove that the locally fair algorithms achieve also universal personalized fairness, and furthermore, we prove that this is the only family of algorithms with this property, establishing an equivalence between universal personalized fairness and local fairness. We also consider the problem of achieving fairness while minimizing the utility loss with respect to the original Pagerank algorithm. We present experiments with real and synthetic networks that examine the fairness of the original Pagerank and demonstrate qualitatively and quantitatively the properties of our algorithms.

\end{abstract}

%%
%% The code below is generated by the tool at http://dl.acm.org/ccs.cfm.
%% Please copy and paste the code instead of the example below.
%%

%\begin{CCSXML}
%<ccs2012>
% <concept>
%  <concept_id>10010520.10010553.10010562</concept_id>
%  <concept_desc>Computer systems organization~Embedded systems</concept_desc>
%  <concept_significance>500</concept_significance>
% </concept>
% <concept>
%  <concept_id>10010520.10010575.10010755</concept_id>
%  <concept_desc>Computer systems organization~Redundancy</concept_desc>
%  <concept_significance>300</concept_significance>
% </concept>
% <concept>
%  <concept_id>10010520.10010553.10010554</concept_id>
%  <concept_desc>Computer systems organization~Robotics</concept_desc>
%  <concept_significance>100</concept_significance>
% </concept>
% <concept>
%  <concept_id>10003033.10003083.10003095</concept_id>
%  <concept_desc>Networks~Network reliability</concept_desc>
%  <concept_significance>100</concept_significance>
% </concept>
%</ccs2012>
%\end{CCSXML}
%
%\ccsdesc[500]{Computer systems organization~Embedded systems}
%\ccsdesc[300]{Computer systems organization~Redundancy}
%\ccsdesc{Computer systems organization~Robotics}
%\ccsdesc[100]{Networks~Network reliability}

\begin{CCSXML}
	<ccs2012>
	<concept>
	<concept_id>10002951.10003227.10003351</concept_id>
	<concept_desc>Information systems~Data mining</concept_desc>
	<concept_significance>500</concept_significance>
	</concept>
	<concept>
	<concept_id>10002951.10003260.10003282.10003292</concept_id>
	<concept_desc>Information systems~Social networks</concept_desc>
	<concept_significance>500</concept_significance>
	</concept>
	</ccs2012>
\end{CCSXML}

\ccsdesc[500]{Information systems~Data mining}
\ccsdesc[500]{Information systems~Social networks}

%%
%% Keywords. The author(s) should pick words that accurately describe
%% the work being presented. Separate the keywords with commas.
\keywords{networks, pagerank, link analysis, fairness, homophily}

%%
%% This command processes the author and affiliation and title
%% information and builds the first part of the formatted document.
\maketitle

\section{Introduction}
Today, algorithmic systems driven by large amounts of data
are increasingly being used in all aspects of life. 
Often, such systems are being used to assist, or, even replace
human decision-making.
This increased dependence on algorithms has given rise to the field of algorithmic fairness, where the goal is to ensure that algorithms do not exhibit biases towards specific individuals, or groups of users (see e.g., \cite{fairness-study} for a survey).
We also live in a connected world where networks, be it, social, communication,
interaction, or cooperation networks, play a central role.
However, surprisingly,  fairness in networks   
has received less attention.

Link analysis algorithms, such as Pagerank~\cite{pagerank}, take a graph as input and use the structure of the graph to determine the relative importance of its nodes.
The output of the algorithms is a numerical weight for each node that reflects its importance. The weights are
used to produce a ranking of the nodes, but also as  input features in a variety of
machine learning algorithms including classification~\cite{spam-classifier}, and
 search result ranking~\cite{pagerank}.

Pagerank performs a random walk on the input graph, and weights the nodes according to the stationary probability distribution of this walk.
At each step, the random walk restarts with probability $\gamma$, where
the restart node is selected according to a``jump'' distribution vector $\vvec$.
Since its introduction
in the Google search engine, Pagerank has been the
cornerstone algorithm in several applications (see, e.g., \cite{pagerank-survey}).
Previous research on the fairness of centrality measures has considered only degrees and found biases that arise as a network
evolves
\cite{glass-ceiling,glass-ceiling-recommend}, or has studied general notions of fairness in graphs based on the premise that similar nodes should get similar outputs  \cite{inform}.
In this work, we focus on the fairness of the Pagerank algorithm.
 
 As in previous research, we view fairness as  lack of discrimination against a
 protected group  defined by the value of a sensitive attribute, such as, gender, or race \cite{fairness-study}. 
 We operationalize this view by saying that a link analysis algorithm is \textit{$\phi$-fair}, if  the fraction of the total weight allocated to the members of the protected group is $\phi$. 
 The value of $\phi$ is a parameter that can be used to implement different fairness policies. For example, by setting $\phi$  equal to the ratio of the protected nodes in the graph,  we ask that the protected nodes have a share in the weights proportional to their share in the population, a property also known as demographic parity
 \cite{fairness-awarness}. 
We also consider \emph{targeted} fairness, where we focus on a specific subset of nodes to which we want to allocate weights in a fair manner.

We revisit  Pagerank  through the lens of  our fairness definitions,
and we consider the problem of defining families of Pagerank algorithms that are fair.
We also define the \emph{utility loss} of a fair algorithm as the difference between its output and the output of the original Pagerank algorithm, and we pose the problem of achieving fairness while minimizing utility.

We consider two approaches for achieving fairness. 
The first family of algorithms we consider is  the \emph{fairness-sensitive} Pagerank family which exploits the jump vector $\vvec$. 
There has been a lot of work on modifying the jump vector to obtain variants of Pagerank biased towards a specific set of nodes. 
The topic-sensitive Pagerank algorithm \cite{topic-sensitive} is such an example, where the probability is assigned to nodes of a specific topic. 
In this paper, we take the novel approach of using the jump vector to achieve $\phi$-fairness.
We determine the conditions under which this is feasible 
and formulate  the problem of finding the jump vector that achieves $\phi$-fairness while minimizing utility loss 
as a convex optimization problem. 

Our second family of algorithms takes a microscopic view of fairness by looking at the behavior of each individual node in the graph.
Implicitly, a link analysis algorithm assumes that links in the graph correspond to endorsements between the nodes. Therefore, we can view each node, as an agent that \emph{endorses} (or \emph{votes for}) the nodes that it links to. Pagerank defines a process that takes these individual actions of the nodes and transforms them into a global weighting of the nodes.
We thus introduce, the \textit{locally fair PageRank algorithms}, where each individual node acts fairly by distributing its own pagerank to the protected and non-protected groups according to the fairness ratio $\phi$.
Local fairness defines a dynamic process that can be viewed as a  \textit{fair random walk}, where \emph{at each step} of the random walk (not only at convergence), the  probability of being at a node of the protected group is $\phi$.
 
In our first locally fair PageRank algorithm, termed the \textit{neighborhood locally fair} Pagerank algorithm,  each node distributes its pagerank fairly among its immediate neighbors, allocating a fraction $\phi$ to the neighbors in the protected group, and $1-\phi$ to the neighbors in the non-protected group.
The \textit{residual-based locally fair} Pagerank algorithms generalizes this idea.
Consider a node $i$ that has less neighbors in the protected group than $\phi$. The node distributes an equal portion of its pagerank to each of its neighbors and a residual portion $\delta(i)$
to members in the protected group but not necessarily in its own neighborhood.
The residual is allocated based on \textit{a residual redistribution policy}, which allows us to control the fairness policy. 
In this paper, we exploit a residual redistribution policy that minimizes the utility loss.

We then define a stronger fairness requirement, termed universal personalized fairness, that asks that the derived personalized pageranks of all nodes are fair. We prove that the locally fair algorithms achieve also universal personalized fairness. 
Surprisingly, the locally fair algorithms are the \emph{only} family of algorithms with this property. Thus, we show that an algorithm is locally fair, if and only if, it is universally personalized fair.

We use real and synthetic datasets to study the conditions under which   {\pagerank} and personalized {\pagerank} are fair.
We also evaluate both quantitatively and qualitatively  the output of our fairness-aware algorithms.

In summary, in this paper, we  make the following contributions:
\begin{itemize}
	\item We initiate a study of fairness for the {\pagerank} algorithm. 
	\item We propose the fairness-sensitive  Pagerank family that modifies the jump vector so as to achieve fairness, and the locally fair Pagerank family that guarantees that individually each node behaves in a fair manner. 
\item 	We prove that local fairness implies universal personalized fairness and also that this is the only family of algorithms with this property, establishing an equivalence between local fairness and universal personalized fairness.
	\item We perform experiments on several datasets to study the conditions under which Pagerank unfairness emerges and evaluate the utility loss for enforcing fairness. 

\end{itemize}

The remainder of this paper is structured as follows.
In Section \ref{sec:definitions}, we provide definitions of fairness and we formulate our problems. In Sections \ref{sec:fairness-sensitive} and
\ref{sec:local-fair}, we introduce the fairness sensitive and 
the locally fair families of {\pagerank} algorithms.
In Section \ref{sec:universal}, we discuss personalized fairness and we show an equivalence between local and universal personalized fairness. The results of our experimental evaluation are presented in Section \ref{sec:experiments}. Finally, we present related research in Section \ref{sec:related-work} and our conclusions in Section \ref{sec:conclusions}.

\section{Definitions}
\label{sec:definitions}
In this section, we first present background material and then we define Pagerank fairness.

\subsection{Preliminaries}
The {\pagerank} algorithm~\cite{pagerank} pioneered link analysis for weighting and ranking the nodes of a graph. It was popularized by its application in the Google search engine, but it has found a wide range of applications in different settings~\cite{pagerank-survey}. The algorithm takes as input a graph $G = (V,E)$, and produces a scoring vector, that assigns a weight to each node $v \in V$ in the graph. The scoring vector is the stationary distribution of a random walk on the graph $G$. 

The {\pagerank} random walk is a random walk with restarts. It is parameterized by the value $\gamma$, which is the probability that the random walk will restart at any step.
The node from which the random walk restarts is selected according to the jump vector $\vvec$, which defines a distribution over the nodes in the graph.
Typically, the jump probability is set to $\gamma = 0.15$, and the jump vector is set to the uniform vector.

The ``organic'' part of the random walk is governed by the transition matrix $\Pmatrix$, which  defines the transition probability $P[i,j]$ between any two nodes $i$ and $j$. The transition matrix is typically defined as the normalized adjacency matrix of graph $G$. Special care is required for the sink nodes in the graph, that is, nodes with no outgoing edges. In our work, we adopt the convention that, when at a sink node,  the random walk performs a jump to a node chosen uniformly at random~\cite{pagerank-survey}. That is, the corresponding zero-rows in the matrix $\Pmatrix$ are replaced by the uniform vector.

The Pagerank vector $\pvec$ satisfies the equation:
\begin{equation}
\pvec^T= (1-\gamma) \pvec^T \Pmatrix + \gamma \, \vvec^T
\label{eqn:PR}
\end{equation}
It can be computed either by solving the above equation, or by iteratively applying it to any initial probability vector.

The Pagerank algorithm is fully defined by the three parameters we described above: the transition matrix $\Pmatrix$,  the restart probability $\gamma$, and the restart (or jump) vector $\vvec$. Different settings for these parameters result in different algorithms.
Given a graph $G = (V,E)$, let $\allPR(G)$  denote the family of all possible Pagerank algorithms on graph $G$. Each algorithm in $\allPR(G)$ corresponds to a triplet $(\Pmatrix(G), \gamma, \vvec(G))$ for the parameters of the random walk. This is a very general family that essentially includes all possible random walks defined over the nodes of graph $G$.
We will refer to the algorithm that uses the typical settings as the \emph{original} Pagerank algorithm, $\OPR$, and use $\pvec_O$ to denote its pagerank vector.

Several variations of the original Pagerank algorithm have been proposed, that modify the above parameters to achieve different goals~\cite{pagerank-survey}.
We are interested in defining \emph{fair} Pagerank algorithms.

\subsection{Fair Pagerank}
We focus on graphs where a set of nodes defines a protected group based on the value of some protected attribute.
For example, in the case of social, collaboration, or citation networks where each node is a person, protected attributes may correspond to gender, age, race, or  religion. In the following for simplicity, we assume binary such attributes, but our algorithms can be extended for the general case.
We consider two types of nodes, red and blue nodes, and the corresponding groups denoted $R$ and $B$ respectively. Group $R$ is the protected group. 
We denote with $r = \frac{|R|}{n}$, and $b = \frac{|B|}{n}$, the fraction of nodes that belong to the red and  blue group respectively. 

Let $\PR\in\allPR(G)$ be a Pagerank algorithm on graph $G$. We will use $\PR(u)$ to denote the pagerank mass that $\PR$ assigns to node $u$, and, abusing the notation, $\PR(R)$ to denote the total pagerank mass that $\PR$ assigns to the nodes in the red group (for the blue group, $\PR(B) = 1-\PR(R))$. 
We will say that $\PR$  is \emph{fair}, if it assigns weights to each group according to a specified ratio $\phi$.

\begin{definition} [Pagerank Fairness]
	Given a graph $G = (V,E)$ containing the protected group $R\subseteq V$, and a value $\phi \in (0,1)$, a Pagerank algorithm $\mathrm{PR} \in \allPR(G)$ is $\phi$-fair on graph $G$, if $\mathrm{PR}(R) = \phi$. 
\end{definition}

The ratio $\phi$ may be specified so as to implement specific affirmative action policies,
or other fairness enhancing interventions.
For example, $\phi$ may be set in accordance to the 80-percent rule advocated by the US
Equal Employment Opportunity Commission (EEOC), or some other formulation of disparate impact \cite{disparate-impact}. 
Setting $\phi$ = $r$, we ask for a fair Pagerank algorithm that assigns weights proportionally to the sizes of the two groups.
In this case, fairness  is analogous to  demographic parity, i.e., the requirement that the demographics of those receiving a 
positive outcome are identical to the demographics of the
population as a whole \cite{fairness-awarness}.

Our goal is to define fair Pagerank algorithms. We say that a \emph{family} of Pagerank algorithms $\textsl{FPR} \subseteq \allPR(G)$ is $\phi$-fair if all the Pagerank algorithms in the family are $\phi$-fair. The first problem we consider is to find such families of algorithms.

\begin{problem}
Given a graph $G = (V,E)$ containing a protected group of nodes $R \subseteq V$, and a value $\phi \in (0,1)$, define a family of algorithms $\textsl{FPR} \subseteq \allPR(G)$ that is $\phi$-fair.
\end{problem}

We can achieve fairness by modifying the parameters of the Pagerank algorithm.
For the following, we assume the jump probability $\gamma$ to be fixed, and we only consider modifications to the transition matrix $\Pmatrix$ and the jump vector $\vvec$. A $\phi$-fair family of algorithms is defined by a specific process, parameterized by $\phi$, for defining $\Pmatrix$ and $\vvec$.

A fair Pagerank algorithm will clearly output a different weight vector than the original Pagerank algorithm.
We assume that the weights of the original Pagerank algorithm carry some \emph{utility}, and use these weights to measure the \emph{utility loss} for achieving fairness. Concretely, if $\fvec$ is the output of a fair Pagerank algorithm $\FPR$ and $\pvec_O$ is the output of the original Pagerank algorithm $\OPR$, we define the utility loss of $\FPR$ as: $L(\FPR) = L(\fvec,\pvec_O) = \|\fvec-\pvec_O\|^2$.

The second problem we consider is finding a fair algorithm that minimizes the utility loss.

\begin{problem}
	Given a $\phi$-fair family of algorithms $\textsl{FPR} \subset \allPR(G)$, find an algorithm $\mathrm{PR}$ $\in \textsl{FPR}$ that minimizes the utility loss $L(\mathrm{PR})$.
\end{problem}

Finally, we introduce an extension of the fairness definition, termed \emph{targeted fairness}, that asks for a fair distribution of
weights among a specific set of  nodes $S$ that is given as input. The subset $S$ contains a protected group of nodes $S_R$.
Targeted fairness asks that a fraction $\phi$ of the pagerank mass that $\PR$ assigns to $S$ goes to the protected group $S_R$. 
For example, assume a co-authorship graph $G$, where $S$ is the set of authors of a particular male-dominated field. We want to allocate to the female authors $S_R$ in this field a fraction $\phi$ of the total pagerank allocated to $S$.  

\begin{definition} [Targeted Fairness]
	Given a graph $G = (V,E)$, a subset of nodes $S\subseteq V$ containing a protected group $S_R\subseteq S$, and a value $\phi \in (0,1)$,  a Pagerank algorithm $\mathrm{PR}\in \allPR(G)$, is targeted $\phi$-fair on the subset $S$ of $G$, if $\mathrm{PR}(S_R) = \phi \mathrm{PR}(S)$. 
\end{definition}

The two problems we defined above can also be defined for targeted fairness.

\section{Fairness Sensitive PageRank}
\label{sec:fairness-sensitive}
Our first family of algorithms achieves fairness by keeping the transition matrix $\Pmatrix$ fixed and changing the jump vector $\vvec$ so as to meet the fairness criterion. We denote this family of algorithms as the \emph{Fairness-Sensitive Pagerank} (\textsl{FSPR}) algorithms.

\subsection{The \textsl{FSPR} Algorithm}
First, we note that that pagerank vector $\pvec$ can be written as a linear function of the jump vector $\vvec$.
Solving Equation (\ref{eqn:PR}) for $\pvec$, 
we have that $\pvec^T = \vvec^T\Qmatrix$, where

\[
\Qmatrix = \gamma \left[\Imatrix - (1-\gamma)\Pmatrix \right]^{-1}
\]

Note that if we set $\vvec = \evec_j$, the vector with $\evec_j[j] = 1$ and zero elsewhere, then $\pvec^T = \Qmatrix_{j}^T$, the $j$-th row of matrix $\Qmatrix$. Therefore, the row vector $\Qmatrix_{j}^T$ corresponds to the personalized pagerank vector of node $j$. The pagerank vector $\pvec$ is a linear combination of the personalized pagerank vectors of all nodes, as defined by the jump vector.

Let $\pvec[R]$ denote the pagerank mass that is allocated to the nodes of the protected category. 
We have that 
$$
\pvec[R] =  \sum_{i\in R} \left(\vvec^T \Qmatrix \right) [i]  
		= \vvec^T \left( \sum_{i\in R} \Qmatrix_{i} \right) 
		= \vvec^T \Qmatrix_R 
$$
where $\Qmatrix_{i}$ is the $i$-th column of matrix $\Qmatrix$, and $\Qmatrix_R$ is the vector that is the sum of the columns of $\Qmatrix$ in the set $R$. $\Qmatrix_R[j]$ is the total personalized pagerank that node $j$ allocates to the red group. 

For the algorithm to be fair, we need $\pvec[R] = \phi$. 
Thus, our goal is to find a jump vector $\vvec$ such that $\vvec^T \Qmatrix_R = \phi$. Does such a vector always exist? We prove the following:

\begin{lemma}
\label{lemma:solution-feasibiliy}
Given the vector $\Qmatrix_R$, there exists a vector $\vvec$ such that $\vvec^T \Qmatrix_R  = \phi$, if and only if, there exist nodes $i,j$ such that $\Qmatrix_R[i] \leq \phi$ and $\Qmatrix_R[j] \geq \phi$
\end{lemma}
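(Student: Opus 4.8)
The plan is to exploit the fact that the jump vector $\vvec$ is constrained to be a probability distribution, so its entries are nonnegative and sum to one. Under this constraint the quantity $\vvec^T \Qmatrix_R = \sum_k \vvec[k]\,\Qmatrix_R[k]$ is exactly a convex combination of the scalar entries of the vector $\Qmatrix_R$. The set of all values attainable as such a convex combination is precisely the closed interval $[\min_k \Qmatrix_R[k],\, \max_k \Qmatrix_R[k]]$. Hence the whole lemma reduces to the statement that $\phi$ lies inside this interval, and the stated condition — that some entry is at most $\phi$ and some entry is at least $\phi$ — is simply the restatement ``$\min_k \Qmatrix_R[k] \le \phi \le \max_k \Qmatrix_R[k]$''.

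For the forward direction I would assume $\vvec^T \Qmatrix_R = \phi$ for some probability vector $\vvec$. Since $\phi$ is a weighted average of the $\Qmatrix_R[k]$ with nonnegative weights summing to one, it can neither exceed $\max_k \Qmatrix_R[k]$ nor fall below $\min_k \Qmatrix_R[k]$. Taking $i$ to be an index achieving the minimum of $\Qmatrix_R$ and $j$ an index achieving the maximum then produces nodes with $\Qmatrix_R[i] \le \phi$ and $\Qmatrix_R[j] \ge \phi$, as required.

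For the converse, given nodes with $\Qmatrix_R[i] \le \phi \le \Qmatrix_R[j]$, I would construct a jump vector supported only on $\{i,j\}$. If $\Qmatrix_R[i] = \Qmatrix_R[j]$, then both equal $\phi$ and $\vvec = \evec_i$ already works. Otherwise set
$$\lm = \frac{\phi - \Qmatrix_R[i]}{\Qmatrix_R[j] - \Qmatrix_R[i]} \in [0,1], \qquad \vvec = (1-\lm)\,\evec_i + \lm\,\evec_j,$$
which is a valid distribution and satisfies $\vvec^T \Qmatrix_R = (1-\lm)\,\Qmatrix_R[i] + \lm\,\Qmatrix_R[j] = \phi$ by direct substitution.

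The argument is essentially elementary once the convex-combination viewpoint is adopted, so there is no substantial obstacle. The two points that genuinely require care are, first, making explicit the (implicit) assumption that $\vvec$ ranges over probability distributions rather than arbitrary real vectors — this is exactly what forces the attainable set to be the interval $[\min,\max]$ and makes the biconditional nontrivial — and second, treating the degenerate case $\Qmatrix_R[i] = \Qmatrix_R[j]$ separately so as to avoid division by zero in the explicit construction of $\vvec$.
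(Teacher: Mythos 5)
Your proof is correct and takes essentially the same approach as the paper's: both rest on viewing $\vvec^T \Qmatrix_R$ as a convex combination of the entries of $\Qmatrix_R$, with the converse realized by a jump vector supported on the two witness nodes $\{i,j\}$. Your write-up merely makes explicit what the paper leaves implicit --- the min/max witnesses in the forward direction, the formula for the mixing weight, and the degenerate case $\Qmatrix_R[i]=\Qmatrix_R[j]$.
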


\begin{proof}
We have  
$
%\pvec[R]= \sum_{j = 1}^N \vvec[j]\Qmatrix_R[j]
\pvec[R]= \vvec^T \Qmatrix_R = \sum_{j = 1}^N \vvec[j]\Qmatrix_R[j]
$,
%that is,  $\pvec[R]$ is the weighted average of the values $\Qmatrix_R[j]$, with weights $\vvec[j]$, where 
with $0 \leq \vvec[j] \leq 1$. If $\vvec^T \Qmatrix_R  = \phi$, there must exist $i,j$ with  $\Qmatrix_R[i] \leq \phi$, and $\Qmatrix_R[j] \geq \phi$.
Conversely, if there exists two such entries $i,j$, then we can find values $\pi_i$ and $\pi_j$, such that 
$\pi_i \Qmatrix_R[i] + \pi_j \Qmatrix_R[j] = \phi$ and $\pi_i + \pi_j = 1$.
\end{proof}

\noindent {\bf Complexity.} Note that there is a very efficient way to compute the personalized pagerank, $\Qmatrix_R[j]$, that node $j$ allocates to the red group. We can add a red and a blue absorbing node in the random walk  and estimate the probability for each node $j$ to be absorbed by the corresponding absorbing node. This can be done in the time required for running Pagerank. Thus, it is possible to compute the $\Qmatrix_R$ vector without doing matrix inversion to compute $\Qmatrix$.
%, by performing multiple such random walks. We can improve performance 
%further using approximate computation, e.g., \cite{approximatepr}. 

\subsection{Minimizing Utility Loss}
\label{sec:optimization-fair-sensitive}
An implication of Lemma~\ref{lemma:solution-feasibiliy} is that, in most cases, there are multiple jump vectors that give a fair pagerank vector. We are interested in the solution that minimizes the utility loss. 

To solve this problem we exploit the fact that the utility loss function $L(\pvec_\vvec,\pvec_O) = \|\pvec_\vvec - \pvec_O\|^2$, where $\pvec_\vvec$ is the fair pagerank vector and $\pvec_O$ the original vector, is convex. We also can express the fairness requirement as a linear constraint. Thus, we define the following convex optimization problem.

\begin{equation*}
	\begin{aligned}
		& \underset{\xvec}{\text{minimize}} & & \|\xvec^T \Qmatrix - \pvec_O\|^2  \\
		& \text{subject to} & & \xvec^T \Qmatrix_R  = \phi\\
		& & & \sum_{i = 1}^{n} \xvec[i] = 1 \\
		& & & 0 \leq \xvec[i] \leq 1, \; i = 1, \ldots, n \\
	\end{aligned}
\end{equation*}

This problem can be solved using standard convex optimization solvers. 
In our work, we used the CVXOPT software package\footnote{https://cvxopt.org/}. 
The complexity of the algorithm is dominated by the computation of matrix $\Qmatrix$ which requires a matrix inversion. We can speed up this process by exploiting the fact that the rows of $\Qmatrix$ are personalized pagerank vectors, which can be computed (in parallel) by performing multiple random walks. We can improve performance further using approximate computation, e.g., \cite{approximatepr}.

\subsection{Targeted Fairness \textsl{FSPR} Algorithm}
We can formulate a similar convex optimization problem for the targeted fairness problem.
Let $\Qmatrix_S = \sum_{i\in S} \Qmatrix_i$ be the sum of columns of $\Qmatrix$ for the nodes in $S$, and $\Qmatrix_{S_R} = \sum_{i\in S_R} \Qmatrix_i$be  the sum of columns of $\Qmatrix$ for the nodes in $S_R$. We define a convex optimization problem that is exactly the same as in Section~\ref{sec:optimization-fair-sensitive}, except for the fact that we replace the constraint $\xvec^T \Qmatrix_R  = \phi$ with the constraint $\xvec^T \Qmatrix_{S_R}  = \phi \xvec^T \Qmatrix_S $.

\section{Locally Fair PageRank}
\label{sec:local-fair}

Our second family of fair Pagerank algorithms,  termed \emph{Locally Fair Pagerank}
(\textsl{LFPR}), takes a microscopic view of fairness,
by asking that \textit{each individual node} acts fairly, i.e., each node distributes its own pagerank to red and blue nodes fairly.
In random walk terms, local fairness defines a dynamic process 
that can be viewed as a random walk that is
fair at each step, and not just at convergence.

The \textsl{LFPR} contains all Pagerank algorithms, where all rows of the transition matrix $\Pmatrix$ are $\phi$-fair vectors. That is, for every node $i \in V$, $\sum_{j \in R} P[i,j] = \phi$. Also, the jump vector $\vvec$ is $\phi$-fair: $\sum_{j \in R} \vvec[j] = \phi$.

We now consider specific algorithms from the family of locally fair algorithms.

\subsection{The Neighborhood \textsl{LFPR} Algorithm}
We first consider a node that treats its neighbors fairly by allocating a fraction $\phi$ of its pagerank to its red neighbors
and the remaining $1 - \phi$ fraction to its blue neighbors.
In random walk terms, at each node the probability of transitioning to a red neighbor is $\phi$ and the probability of transitioning to a blue neighbor $1-\phi$.

Formally, we define the \textit{neighborhood locally fair pagerank} ({\neighborlocal})  as follows.
Let $out_R(i)$ and $out_B(i)$  be the number of  outgoing edges  from node $i$ to red nodes and blue nodes respectively.  We define $\Pmatrix_R$ as the stochastic transition matrix that handles transitions to red nodes, or random jumps to red nodes
if such links do not exist:
\[
\Pmatrix_R[i,j] = \left \{
\begin{tabular}{cl}
$\frac{1}{out_R(i)}$, & if $(i, j)$ $\in$ E and $j$ $\in$ $R$   \\
$\frac{1}{|R|}$, & if $out_R(i) = 0$ and $j$ $\in$ $R$  \\
0, & otherwise \\
\end{tabular}
\right.
\]
The transition matrix 
$\Pmatrix_B$ for the blue nodes is defined similarly. The transition matrix $\Pmatrix_N$ of the {\neighborlocal} algorithm is defined as:

\begin{center}
$\Pmatrix_N = \phi \, \Pmatrix_R + (1 - \phi) \, \Pmatrix_B
$
\end{center}

We also define a $\phi$-fair jump vector  $\vvec_N$ with $\vvec_N[i]$ = $\frac{\phi}{|R|}$, if $i \in R$, and
$\vvec_N[i]$ = $\frac{1-\phi}{|B|}$, if $i \in B$.
The neighborhood locally-fair pagerank vector $\pvec_N$ is defined as:

\[
\pvec_N^T = (1 - \gamma) \pvec_N^T \Pmatrix_N + \gamma \, \vvec_N^T
\]

\subsection{The Residual-based \textsl{LFPR} Algorithms}
We consider an alternative fair behavior for individual nodes.  
Similarly to the {\neighborlocal} algorithm,   each node $i$ acts fairly by respecting the $\phi$ ratio when distributing its own pagerank to red and blue nodes. However, now node $i$
treats its neighbors the same, independently of their color and assigns to each of them the same portion of its pagerank. When a node is in a ``biased'' neighborhood, i.e., the ratio of its red neighbors is different than
$\phi$, to be fair, node $i$ distributes only a fraction of its pagerank to its neigbors, and the remaining portion of its pagerank
to nodes in the underrepresented group. We call the remaining portion \emph{residual} and denote it by $\delta(i)$.
How $\delta(i)$ is distributed to the underrepresented group is determined by a \textit{residual policy}.

Intuitively, this corresponds to a fair random walker that upon arriving at a node $i$,  with  probability 1-$\delta(i)$ follows one of $i$'s outlinks and with probability $\delta(i)$
jumps to one or more node belonging to the group that is locally underrepresented.

We now describe the algorithm formally. 
We divide the (non sink) nodes in $V$ into two sets, $L_R$ and $L_B$, based on the fraction of their red and blue neighbors.
Set $L_R$ includes 
all nodes $i$ such that
$out_R(i)/out(i) < \phi$, where $out(i)$ the out-degree of node $i$,
that is, the nodes for which the ratio of red nodes in their neighborhoods is smaller than the required $\phi$ ratio. These nodes have a residual that needs to be distributed to red nodes.
Analogously, $L_B$ includes 
all nodes $i$ such that
$out_R(i)/out(i) \geq \phi$,
that have a residual to be distributed to blue nodes.

%%%%% CR-version %%%%%
Consider a node $i$ in $L_R$. 
To compute $\delta_R(i)$ note that each neighbor of $i$ gets a fraction $\rho_R(i) = \frac{1-\delta_R(i)}{out(i)}$ of $i$'s pagerank. The residual  $\delta_R(i)$ of $i$'s pagerank  goes to red nodes. In order for node $i$ to be $\phi$-fair, we have:
\begin{equation}
\frac{1-\delta_R(i)}{out(i)}out_R(i) + \delta_R(i)= \phi \label{eq:excess1}
\end{equation}
Solving for $\delta_R(i)$ and using the fact that $out(i) = out_R(i)+out_B(i)$ we get $\delta_R(i) = \phi - \frac{(1-\phi)\,out_R(i)}{out_B(i)}$, and $\rho_R(i) = \frac{1-\phi}{out_B(i)}$.

%To better understand the intuition behind this formula, note that in order for node $i$ to be $\phi$-fair, it must allocate $\phi$ fraction of its pagerank to red nodes, and $1-\phi$ fraction of its pagerank to blue nodes. The only blue nodes that $i$ sends pagerank to are the ones in its neighborhood. Therefore, it must be that $out_B(i)\rho_R(i) = 1-\phi$. Therefore, $\rho_R(i) = \frac{1-\phi}{out_B(i)}$. The fraction of pagerank that $i$ sends to its red neighbors is $out_R(i)\rho_R(i) = \frac{(1-\phi)out_R(i)}{out_B(i)}$. The residual for reaching the $\phi$ ratio is $\delta_R(i) = \phi - \frac{(1-\phi)out_R(i)}{out_B(i)}$.

\iffalse %%% pre-CR version
Consider a node $i$ in $L_R$. 
Each neighbor of $i$ gets the same portion of $i$'s pagerank, let  $\rho_R(i)$ be this portion. To attain the $\phi$ ratio, the residual  $\delta_R(i)$ of $i$'s pagerank  goes to the red nodes. Portions $\rho_R(i)$ and $\delta_R(i)$ must be such that:
%\vspace*{-0.1in}
\begin{equation}
(1- \phi) \,\, (out_R(i) \, \rho_R(i) + \delta_R(i)) = \phi \,\, (out_B(i) \, \rho_R(i)) \label{eq:excess1}
\end{equation}
%\vspace*{-0.2in}
\begin{equation}
out_R(i) \,\rho_R(i) + out_B(i) \, \rho_R(i) + \delta_R(i) = 1 \label{eq:excess2}
\end{equation}

From Equations (\ref{eq:excess1}) and (\ref{eq:excess2}), we get $\rho_R(i)$ = $\frac{1-\phi}{out_B(i)}$ and the residual is $\delta_R(i) = \phi - \frac{(1-\phi)\,out_R(i)}{out_B(i)}$.
\fi %%%%%%% pre-CR version

Analogously, for a node $i$ in $L_B$, we get a residual $\delta_B(i) = (1 -\phi) - \frac{\phi\,out_B(i)}{out_R(i)}$ that goes to the blue nodes, and $\rho_B(i)$ = $\frac{\phi}{out_R(i)}$.

For a sink node $i$, we assume that $i$ belongs to both $L_R$ and $L_B$ with residual $\delta_R(i)$ = $\phi$ and $\delta_B(i)$ = $1- \phi$.

\vspace*{0.1in}
\noindent \textit{Example.}
Consider a node $i$ with 5 out-neighbors, 1 red and 4 blue, and let $\phi$  be $0.5$.
This is a ``blue-biased''node, that is, $i$ $\in$ $L_R$.
%In the original {\pagerank} algorithm, 
%each of the 5 neighbors gets $1/5$ of $i$'s pagerank, resulting in
%red nodes getting  $1/5$ and blue nodes $4/5$ of $i$'s pagerank,   which is an unfair behavior of node $i$.
%node $i$ is not $\phi$-fair, since it allocates $1/5$ of its pagerank to red nodes.
With the residual algorithm, each of $i$'s  neighbors gets
$\rho_R(i)$ = $0.5/4 = 1/8$ portion of $i$'s pagerank, resulting in red neighbors getting $1/8$ and blue neighbors  $4/8$ of i's pagerank. The residual $\delta_B(i)$ = $3/8$ goes to nodes in the red group so as to attain the $\phi$ ratio and make $i$ fair. 
%Which of the nodes  in the
%red group will get the residual is determined by the residual policy.
In terms of the random walker interpretation, a random walker that arrives at $i$, with probability $5/8$ chooses one of $i$'s outlinks uniformly at random and
with probability $3/8$ jumps to nodes in the red group. \hfill$\qed$.

%\vspace*{0.1in}

Formally, we define $\Pmatrix_L$ 
%that handles the transitions to  
as follows:

\[
\Pmatrix_L[i, j] = \left \{
\begin{tabular}{cl}
$\frac{1-\phi}{out_B(i)}$, & if $(i, j)$ $\in$ $E$ and  $i \in  L_R$\\
$\frac{\phi}{out_R(i)}$, & if $(i, j)$ $\in$ $E$ and  $i \in L_B$ \\
%$\frac{\phi}{|R|}$, & if $out(i) = 0$ and $j \in R$  \\
%$\frac{1 - \phi}{|B|}$, & if $out(i) = 0$ and $j \in B$  \\
$0$ & otherwise
\end{tabular}
\right.
\]

\iffalse%%%%%%
Let $\dvec_R$ be the vector carrying the red residual, that is,
$\dvec_R[i] = \phi - \frac{(1-\phi)\,out_R(i)}{out_B(i)}$, if $i \in L_R$ and 0 otherwise.
Similarly, let $\dvec_B$ be the vector carrying the blue residual, that is,
$\delta_B(i) = (1 -\phi) - \frac{\phi\,out_B(i)}{out_R(i)}$, if $i \notin L_B$ and 0 otherwise.
We have a total red residual $\Delta_R = \pvec_L^T \, \dvec_R $ and a total blue residual
$\Delta_B = \pvec_L^T \, \dvec_B$, where $\pvec_L$ is the locally fair pagerank vector.
\fi%%%%%%%%%

%two vectors, namely vectors $\xvec$ and $\yvec$, to 
%capture the excess policy that we follow in distributing the excess to red and blue nodes respectively.
%Specifically, let $\xvec$ be the (column) vector with $\xvec_R[i]$ being the portions of $\Delta_R$ going to red node $i$, 
%and $\xvec_B$ be the vector with $\yvec[i]$ being the portions of $\Delta_B$ going to blue node $i$.
%
%The  locally-fair pagerank vector $\pvec_L$ is defined as:
%\vspace*{-0.05in}
%\[
%\pvec_L^T = (1 - \gamma) (\pvec_L^T \Pmatrix_L +  \pvec_L^T \, \dvec_R  \, \xvec^T +  \pvec_L^T \, \dvec_B \, \yvec^T) + \gamma \, \vvec_N^T
%\]
$\Pmatrix_L$ handles the transitions of nodes to their neighborhood.
To express the residual distribution policy,  we introduce matrices $\Xmatrix$ and $\Ymatrix$, that 
capture the policy for distributing the residual to red and blue nodes respectively.
Specifically,  $\Xmatrix[i, j]$  denotes the portion of the $\delta_R(i)$ of node $i$ $\in$ $L_R$ that goes to node $j$ $\in$ $R$, and
$\Ymatrix[i, j]$  the portion of the $\delta_B(i)$ of node $i$ $\in$ $L_B$ that goes to node $j$ $\in$ $B$.

The  locally-fair pagerank vector $\pvec_L$ is defined as:
%\vspace*{-0.05in}
\[
\pvec_L^T = (1 - \gamma) \pvec_L^T \, (\Pmatrix_L +  \Xmatrix +  \Ymatrix) + \gamma \, \vvec_N^T
\]

%++ in the appendix, the X matrix for neighborhood nodes

\noindent \textbf{Residual Distribution Policies.} The $\Xmatrix$ and $\Ymatrix$ allocation matrices allow us the flexibility to 
specify appropriate policies for distributing the residual. 
For example, 
%In particular,   the following holds:
%
%\begin{lemma}
%	\label{lemma:local}
%The 
the {\neighborlocal} algorithm is a special case of the residual-based algorithm, with
\vspace*{-0.05in}
\[
\Xmatrix_N[i, j] = \left \{
\begin{tabular}{cl}
$\frac{\delta_R(i)}{out_R(i)}$, &  if $i \in L_R$, $(i,j) \in E$, and $j \in R$ \\
$\frac{\delta_R(i)}{|R|}$, &  if $i \in L_R$, $out_R(i) = 0$, and $j \in R$ \\
0 &  otherwise \\
\end{tabular}
\right.
\]

and $\Ymatrix_N[i, j]$ defined analogously.

%\[
%\Ymatrix_N[i, j] = \left \{
%\begin{tabular}{cl}
%$\frac{\delta_B(i)}{out_B(i)}$, &  if $i \in L_B$, $(i,j) \in E$, and $j %\in B$ \\
%$\frac{\delta_B(i)}{|B|}$, &  if $i \in L_B$, $out_B(i) = 0$, and $j \in B$ %\\
%0 &  otherwise \\
%\end{tabular}
%\right.
%\]
%\end{lemma}

%Note that the {\neighborlocal} algorithm is a special case, where the residual
%of a node is distributed only among its neighbors (details in the Appendix). 

We also consider residual policies where all nodes follow the same policy in distributing their residual, that is, each red node gets the same portion of the red residuals
and each blue node the same portion and the blue residuals. In this case, the residual policy is expressed through two (column) vectors $\xvec$ and $\yvec$, with
$\xvec[j]$ = 0 if $j \in B$ and $\yvec[j]$ = 0, $j \in R$.
Each node $i \in L_R$ sends a fraction $\delta_R(i)\xvec[j]$ of its pagerank to red node $j$, while each node $i \in L_B$ sends  a fraction $\delta_B(i)\yvec[j]$ of its pagerank to blue node $j$.

%
%with $\xvec[i]$ being the portion of residual going to red node $i$, 
%and $\yvec[i]$ the portions of residual going to blue node $i$. 
Let $\dvec_R$ be the vector carrying the red residual, and 
%that is, $\dvec_R[i] = \phi - \frac{(1-\phi)\,out_R(i)}{out_B(i)}$, if $i \in L_R$ and 0 otherwise.
%Similarly, let 
$\dvec_B$ the vector carrying the blue residual.
%(assume that $\dvec_R[i] = 0$ for $i \in L_B$ and $\dvec_B[i] = 0$ for $i \in L_R$)
%, that is,
%$\delta_B(i) = (1 -\phi) - \frac{\phi\,out_B(i)}{out_R(i)}$, if $i \notin L_B$ and 0 otherwise.
We have: 
\begin{equation*}
\pvec_L^T = (1 - \gamma)\pvec_L^T\, (\Pmatrix_L +  \dvec_R  \, \xvec^T + \dvec_B \, \yvec^T) + \gamma \, \vvec_N^T.
\end{equation*}

We define two locally fair {\pagerank} algorithms based on two intuitive policies of distributing the residual:
%\begin{itemize}

\vspace{0.05in}
\noindent
The \textit{Uniform Locally Fair {\pagerank}} ({\uniformlocal}) algorithm distributes the residual uniformly. Specifically, 
%for the {\uniformlocal} algorithm, 
we define the vector $\xvec$,  as $\xvec[i]$  = $\frac{1}{|R|}$  for $i \in R$,
% and 0 otherwise, 
and the vector $\yvec$, as $\yvec[i]$ =	$\frac{1}{|B|}$, for $i \in B$.
%,   if $i \in B$ and 0 otherwise.

\vspace{0.05in}
\noindent
The \textit{Proportional Locally Fair {\pagerank}} ({\proportionallocal}) algorithm distributes the residual proportionally to the original pagerank weights $\pvec_O$
% of node $i$. 
Specifically, 
%for the {\proportionallocal} algorithm, 
we define the vector $\xvec$, as
$\xvec[i]$ = 	$\frac{ \pvec_O[i]  }{\sum_{i \in R}\pvec_O[i]}$, for $i \in R$, 
%and 0 otherwise,
and the vector  $\yvec$, as $\yvec[i]$ = 
$\frac{ \pvec_O[i] }{\sum_{i \in B}\pvec_O[i]}$, for $i \in B$.
% and 0, otherwise.
%\end{itemize}
%%++ Then X is a vector

%The {\uniformlocal} pagerank vector $\pvec_U$ is defined as:

%\[
%\pvec_U^T = (1 - \gamma) (\pvec_U^T \, \Pmatrix_L + \svec_U) + \gamma \, \vvec_N^T
%\]

%The {\uniformlocal} pagerank vector $\pvec_U$ is defined as:

%\[
%\pvec_P^T = (1 - \gamma) (\pvec_P^T \, \Pmatrix_L + \svec_P) + \gamma \, \vvec_N^T
%\]

\subsection{Fairness of the \textsl{LFPR} Algorithms}
%In the locally fair pagerank algorithms, each node in the graph
%treats the red and blue nodes fairly by respecting the $\phi$ ratio. 
%However, 
Although each node acts independently of the other nodes in the network,
%. It is interesting to see how 
this  microscopic view of  fairness 
%relates to
results in a macroscopic view of fairness. Specifically, we prove the following theorem.

\begin{theorem}
	The  locally fair {\pagerank} algorithms are $\phi$-fair. \label{theorem:local}
\end{theorem}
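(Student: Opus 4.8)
The plan is to prove one general fact that covers the entire {\local} family at once, and then observe that each concrete algorithm is an instance of it. The general claim I would isolate is this: \emph{if the transition matrix $\Pmatrix$ places total mass $\phi$ on the red group in every row, i.e. $\sum_{j\in R}P[i,j]=\phi$ for all $i$, and the jump vector satisfies $\sum_{j\in R}\vvec[j]=\phi$, then the stationary Pagerank vector $\pvec$ satisfies $\pvec[R]=\phi$.} This is exactly the definition of local fairness given before the theorem, so proving the claim proves the theorem for every member of the family simultaneously.

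To establish the general claim I would test the Pagerank fixed-point equation~(\ref{eqn:PR}) against the indicator (column) vector $\mathbbm{1}_R$ of the red group, where $\mathbbm{1}_R[j]=1$ for $j\in R$ and $0$ otherwise, so that $\pvec[R]=\pvec^T\mathbbm{1}_R$. The decisive observation is that the per-row fairness hypothesis is precisely the statement $\Pmatrix\,\mathbbm{1}_R=\phi\,\mathbbm{1}$, where $\mathbbm{1}$ is the all-ones vector, since the $i$-th entry of $\Pmatrix\,\mathbbm{1}_R$ equals $\sum_{j\in R}P[i,j]=\phi$. Right-multiplying~(\ref{eqn:PR}) by $\mathbbm{1}_R$ then gives $\pvec^T\mathbbm{1}_R=(1-\gamma)\,\pvec^T\Pmatrix\,\mathbbm{1}_R+\gamma\,\vvec^T\mathbbm{1}_R=(1-\gamma)\phi\,(\pvec^T\mathbbm{1})+\gamma\phi$. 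Because $\pvec$ is a probability distribution we have $\pvec^T\mathbbm{1}=1$, so the right-hand side collapses to $(1-\gamma)\phi+\gamma\phi=\phi$, yielding $\pvec[R]=\phi$. Note that both hypotheses are used: the per-row condition handles the organic term, and the jump-vector condition handles the restart term.

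It then remains to verify that every concrete algorithm actually meets the two hypotheses. The jump vector is $\vvec_N$ in all cases, and $\sum_{j\in R}\vvec_N[j]=|R|\cdot\frac{\phi}{|R|}=\phi$, so the jump hypothesis is immediate. For the per-row hypothesis, the {\neighborlocal} algorithm uses $\Pmatrix_N=\phi\,\Pmatrix_R+(1-\phi)\,\Pmatrix_B$, where $\Pmatrix_R$ is row-stochastic and supported on red nodes and $\Pmatrix_B$ is row-stochastic and supported on blue nodes, whence $\sum_{j\in R}\Pmatrix_N[i,j]=\phi\cdot 1+(1-\phi)\cdot 0=\phi$. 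For the residual-based algorithms the relevant matrix is $\Pmatrix_L+\Xmatrix+\Ymatrix$ (equivalently $\Pmatrix_L+\dvec_R\,\xvec^T+\dvec_B\,\yvec^T$), and for a node $i\in L_R$ the row-sum over $R$ is the left-hand side of the defining Equation~(\ref{eq:excess1}) for $\delta_R(i)$, hence equal to $\phi$ by construction; the symmetric computation with the analogue for $\delta_B(i)$ handles $i\in L_B$, and sink nodes are covered by the convention $\delta_R(i)=\phi$.

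I expect the core fixed-point computation to be entirely routine; the only real care lies in the verification step. One must confirm that each combined transition matrix is genuinely row-stochastic (so that $\pvec^T\mathbbm{1}=1$ is legitimately available), and that the residual-distribution matrices $\Xmatrix,\Ymatrix$ (or the vectors $\xvec,\yvec$) route mass \emph{strictly} within the correct color class — all red residual to red nodes, all blue residual to blue nodes — so that they do not perturb the $\phi$ allocation already contributed by $\Pmatrix_L$. This bookkeeping, together with the sink-node convention, is the main thing to get right.
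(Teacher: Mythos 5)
Your proposal is correct and follows essentially the same route as the paper's proof: multiply the fixed-point equation on the right by the red indicator vector $\evec_R$, use the per-row condition $\Pmatrix\,\evec_R=\phi\,\evec$ together with $\vvec_N^T\evec_R=\phi$ and $\pvec^T\evec=1$, and conclude $\pvec[R]=(1-\gamma)\phi+\gamma\phi=\phi$. The only difference is that you explicitly verify the per-row hypothesis for each concrete algorithm (via Equation~(\ref{eq:excess1}) and the sink convention), a check the paper leaves implicit with the phrase ``due to the local fairness property'' --- a worthwhile addition, not a different argument.
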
 

%\begin{proof}
%We must show that $\sum_{i \in R} \pvec_N(i) = \phi$.
%For a node $i\in R$ we have that 
%\[
%\pvec_N(i) = (1-\gamma) \sum_{j:(j,i)\in E} P_{ji} \pvec_N(j) + (1-\gamma) \svec(i) + \gamma \vvec(i)
%\]
%and the vector $\svec$ is either the zero vector for {\neighborlocal}, or $\svec_U$ and $\svec_P$ for %{\uniformlocal} and {\proportionallocal} respectively.

%First note that by design, it holds that 
%$\sum_{i\in R} \vvec_N(i) = \sum_{i\in R} \svec_U(i) = \sum_{i\in R} \svec_P(i) =\phi$.
%So the pagerank weight that the nodes in $R$ get through these vectors is fair.
%
%Therefore, it suffices to show that 
%\[
%\%sum_{i\in R} \sum_{j:(j,i)\in E} P_{ji} \pvec_N(j) = \phi
%\]
%We can rewrite this as:
%\[
%\sum_{j\in V} \sum_{i\in R:(j,i)\in E} P_{ji} \pvec_N(j) = \sum_{j\in V} \pvec_N(j) \sum_{i\in R:(j,i)\in E} P_{ji} 
%\]
%By design of the locally fair algorithms, we know that $\sum_{i\in R:(j,i)\in E} P_{ji} =\phi$.
%Also $\sum_{j\in V} \pvec_N(j) = 1$, which concludes our proof.

\iffalse%%%%%% pre-CR
\begin{proof}
	We must show that $\sum_{v \in R} \pvec_N(u) = \phi$.
	Since each node in the graph gives a portion $\phi$ of its pagerank to red nodes, we have
	\[\sum_{v \in R} \pvec_N(u) = \sum_{v \in V} \phi \, \pvec_N(u)
	\]
	which proves the theorem.
\end{proof}
\fi %%%% pre-CR

\begin{proof}
Let $\evec_R$ denote the vector with 1's at the red nodes and zero at the blue nodes. The amount of pagerank that vector $\pvec_L$ gives to the red nodes can be expressed as $\pvec_L^T\evec_R$. Let $\Pmatrix_D$ = $\Pmatrix_L + \Xmatrix + \Ymatrix$, we have:
\[
\pvec_L^T\evec_R= (1-\gamma)\pvec_L^T \Pmatrix_D\evec_R + \gamma \vvec_N^T\evec_R
\]
By design we have that $\vvec_N^T\evec_R = \phi$. For transition matrix $\Pmatrix_D$, due to the local fairness property, we know that each row has probability $\phi$ of transitioning to a red node. Therefore, $\Pmatrix_D\evec_R = \phi\evec$, where $\evec$ is the vector of all ones. Note that since $\pvec_L$ is a distribution we have that $\pvec_L^T\evec = 1$. We thus have:
$\pvec_L^T\evec_R= (1-\gamma)\phi + \gamma\phi = \phi$.
\end{proof}

%However, the locally fair {\pagerank} algorithms do not always attain fairness at values of $k$ other than $N$.

%\note{evi}{Add a note that the computation converges}

%\note{evi}{motivate proportional??}

%\note{evi}{move theorem here and check - perhaps add a counter-example not fair at k to motivate sensitive, or say that this is a very strong form of fairness that does not necessarily respect utility}

\subsection{Minimizing Utility Loss}
We now consider how to distribute the residual so as to minimize the utility loss of the locally fair Pagerank. We denote this algorithm as {\residuallocal}. To this end,
we compute the $\xvec$ and $\yvec$  residual distribution vectors by formulating an optimization problem.
%given that the jump vector is uniform. 

We can write the vector $\pvec_L$ as a function of the vectors $\xvec$ and $\yvec$ as follows:
\[
\pvec_L^T(\xvec,\yvec) = \gamma \, \vvec_N^T \left[\Imatrix - (1-\gamma)(\Pmatrix_L + \delta_R\, \xvec^T + \delta_B\, \yvec^T) \right]^{-1}
\]
We can now define the optimization problem of finding the vectors $\xvec$ and $\yvec$ that minimize the loss function $L(\pvec_L, \pvec_O) = \|\pvec_L(\xvec,\yvec) - \pvec_O\|^2$ subject to the constraint that the vectors $\xvec$ and $\yvec$ define a distribution over the nodes in $R$ and $B$ respectively.
%
%We now define the following optimization problem:
%\begin{subequations}
%	\label{eq:optim-fair-local}
%	\begin{align}
%		\underset{\xvec,\yvec}{\text{minimize}}
%		& \; \|\pvec_L(\xvec,\yvec) - \pvec_\uvec\|^2 \nonumber \\
%		\text{subject to} 
%		& \quad \sum_{i = 1}^{n} \xvec_i = 1, \quad \sum_{i = 1}^{n} \yvec_i = 1  \label{sum-constraints} \\ 
%		%& \quad \sum_{i = 1}^{n} \yvec_i = 1 \label{ysum-constraint} \\ 
%	%	& \quad \xvec_i = 0, if i \in B, \; \;  
%		& \quad 0 \leq \xvec_i \leq 1, \; i \in R,  \quad  \yvec_i \leq 1, \; i \in B \label{positive-constraints} 
%	%	& \quad \yvec_i = 0, if i \in R, \;\; 0 \leq \yvec_i \leq 1, \; i \in B  \label{ypositive-constraint} 
%	\end{align}
%\end{subequations}
%

Since our problem is not convex, we implement a Stochastic Random Search algorithm for solving it, that looks at randomly selected directions for the gradient, and selects the one that causes the largest decrease.
We enforce the distribution constraints
%constraints \ref{sum-constraints} %and \ref{ysum-constraint} 
by adding a penalty term 
%\linebreak 
$\lambda$ $\left ( (\sum_{i = 1}^{n} \xvec_i - 1)^2 + (\sum_{i = 1}^{n} \yvec_i - 1)^2\right )$. 
We enforce the positivity constraints
%constraints \ref{positive-constraints} 
%and \ref{ypositive-constraint} 
through proper bracketing at the line-search step.
The complexity of the algorithm is $O\left ( I\cdot K\cdot T_{PR}\right )$, where $I$ is the number of iterations, $K$ the number of random directions that are examined, and $T_{PR}$ the cost of running Pagerank. In our experiments, $I$ and $K$ are in the order of a few hundreds. 

%Note that  we can also formulate a convex optimization problem asking for the jump vector that minimizes utility loss, as in Section~\ref{sec:optimization-fair-sensitive}. In this case, since the transition matrix is fair, we just need to constrain the jump vector to obey the $\phi$ ratio.

%\begin{equation*}
%	\begin{aligned}
%		& \underset{\xvec}{\text{minimize}} & & \|\Qmatrix_L\xvec - \pvec_\uvec\|^2  \\
%		& \text{subject to} & & \sum_{i = 1}^{n} \xvec_i = 1 \\
%		& & & \sum_{i\in R} \xvec_i = \phi \\
%		& & & 0 \leq \xvec_i \leq 1, \; i = 1, \ldots, n \\
%	\end{aligned}
%\end{equation*}

\subsection{Targeted Fairness \textsl{LFPR} Algorithms}      
We can apply the locally fair algorithms 
%for achieving fair distribution of the PageRank among a given set of nodes $S$.
to the targeted fairness problem. 
Let $S_R$ and $S_B$ be the red and blue nodes in the set $S$ respectively, and let $I_S$ be the set of in-neighbors of $S$.
The idea is that the nodes in $I_S$ should distribute their pagerank to $S_R$ and $S_B$ fairly, 
%Specifically, each node $u$ in $I_S$ redistributes its PageRank 
such that the 
%ratio of the portion that goes to nodes in  $S_R$ 
%and the portion that goes to nodes in $S_B$ is equal to $\frac{\phi}{1-\phi}$.
portion that goes to nodes in  $S_R$ is a $\phi$ fraction of the total pagerank that goes to the set $S$.
We can implement the same redistribution policies as in the case of the neighborhood local and the residual-based local fair algorithms. 

We also need the (global) jump vector $\vvec$ to obey the $\phi$ ratio for the nodes in $S$. We can achieve this by redistributing the probability $|S|/n$ of the jump vector according to the $\phi$ ratio.
%
%Note that there is a variety of policies one could implement, depending on a specific objective. For example if we want to increase the weight of the nodes in $S$, we can make the jump vector allocate all probability to the nodes in $S$. 
%
% There are various policies that one could implement for obtaining  a fair jump vector. One is to redistribute the probability $|S|/n$ in a fair manner. Another is to always jump in to

%There are many policies for the (global) jump vector, $\vvec$.
%\begin{enumerate}
%	\item 
%	Complete favoritism, where we only jump to nodes within $S$: $\vvec(i) = 0$, if  $i$ $\notin$ $S_R$ $\cup$ $S_B$,
%	$\vvec(i) = \frac{\phi}{|S_R|}$, if $i$ $\in$ $S_R$, and $\vvec(i) = \frac{1-\phi}{|S_B|}$, if $i$ $\in$ $S_B$.
%	\item  Fair redistribution among the $S$ nodes: $\vvec(i) = \frac{1}{n}$, (or, as before) if  $i$ $\notin$ $S$,
%	$\vvec(i) = \frac{\phi}{|S_R|}\frac{|S|}{n}$, if $i$ $\in$ $S_R$, and $\vvec(i) = \frac{1-\phi}{|S_B|}\frac{|S|}{n}$, if $i$ $\in$ $S_B$.
%	\item As in (1) and (2) above, but proportionally to the original PageRank values.
%	\item Determined by an optimization problem so as to minimize the distance from the original PageRank.
%\end{enumerate}

%Problem: The top-$k$ nodes of the modified PageRank may not be the same with the top-$k$ nodes in the original PageRank.

\section{Personalized Fairness}
\label{sec:universal}

A special case of the {\pagerank} algorithm is when the restart vector is defined to be the unit vector $\evec_i$ that puts all the mass at a single node $i$. For any {\pagerank} algorithm $\PR \in \allPR$, we can define the corresponding personalized algorithm $\PR_i$ by setting $\vvec = \evec_i$. 

The output of the algorithm  $\PR_i$ is a probability vector, where $\PR_i(u)$ is the probability that a random walk that always restarts at node $i$ is at $u$ after infinite number of steps. We say that node $i$ allocates this probability to node $u$. Personalized random walks have found several applications in network analysis~\cite{pagerank-survey}. For example, the probability $\PR_i(u)$ can be seen as a measure of proximity between node $i$ and node $u$, and it has been used for recommendations. 

For a personalized Pagerank algorithm $\PR_i$, we define $\PR_i(R)$ and $\PR_i(B)$ to be the probability that node $i$ allocates to the red and blue groups respectively. Recall that if $\vvec$ is the jump vector, then $\PR(R) = \sum_{i\in V} \vvec[i] \PR_i(R)$. We can think of the probability $\PR_i(R)$, as a measure of how a specific node $i$ ``views'' the red group, while $\PR(R)$ captures the value that the network places on the red nodes on aggregate.

We could define fairness for the $\PR_i$ algorithm using the standard fairness definition. However, note that since the random walk jumps with probability $\gamma$ to node $i$ at every step, this immediately adds probability $\gamma$ to the category of node $i$. This probability is due to the random jump and not due to the ''organic'' random walk, and the structure of the graph.  We thus subtract this probability, and we define the vector $\overline{\PR_i}$, where $\overline{\PR_i}(i) = \PR_i(i)-\gamma$, and $\overline{\PR_i}(u) = \PR_i(u)$, for $u \neq i$.  Another way to think of this is that an amount $\gamma$  of probability is reserved for restarting, and the remaining $1-\gamma$ is allocated through the random walk. This is the probability mass that we want to allocate fairly. We say that the \emph{personalized} Pagerank algorithm $\PR_i$ is $\phi$-fair if $\overline{\PR_i}(R) = \phi(1-\gamma)$.

Intuitively, fairness of $\PR_i$ implies that node $i$ treats the red and blue groups fairly.  For example, if we think of $\PPR(R)$ as a proximity measure, and $\phi = 0.5$, $\phi$-fairness implies that node $i$ is equally close to the red and blue groups. Given that this probability is often used for recommendations, this has also implications to the fairness of the recommendations.

%\begin{definition} [Personalized Pagerank Fairness]
%	Given a graph $G = (V,E)$ containing the protected group $R\subseteq V$,  a value $\phi \in (0,1)$, and a Pagerank algorithm $PR\in \allPR(G)$, the Personalized Pagerank algorithm $PR_i$ is $\phi$-fair on graph $G$, if $\overline{PR_i}(R) = \phi$. 
%\end{definition}

%Intuitively, the probability $\PR(R)$ that a Pagerank algorithm assigns to the red group, captures the value that the network on average places on the red nodes. The fairness of algorithm $\PR$ implies that the network on average gives to the red nodes their fair share of value, as defined by $\phi$.  
%that the red nodes receive the value ensures that this value meets a requirement placed by $\phi$. 
%The probability $\PR_i(R)$, of the personalized random walk captures how a specific node $i$ "views" the red group. Fairness of $PR_i$ implies that node $i$ assigns to the red nodes a fair amount of probability $\phi$.  

\begin{figure}[]
	%\centering
	\includegraphics[width = \columnwidth]{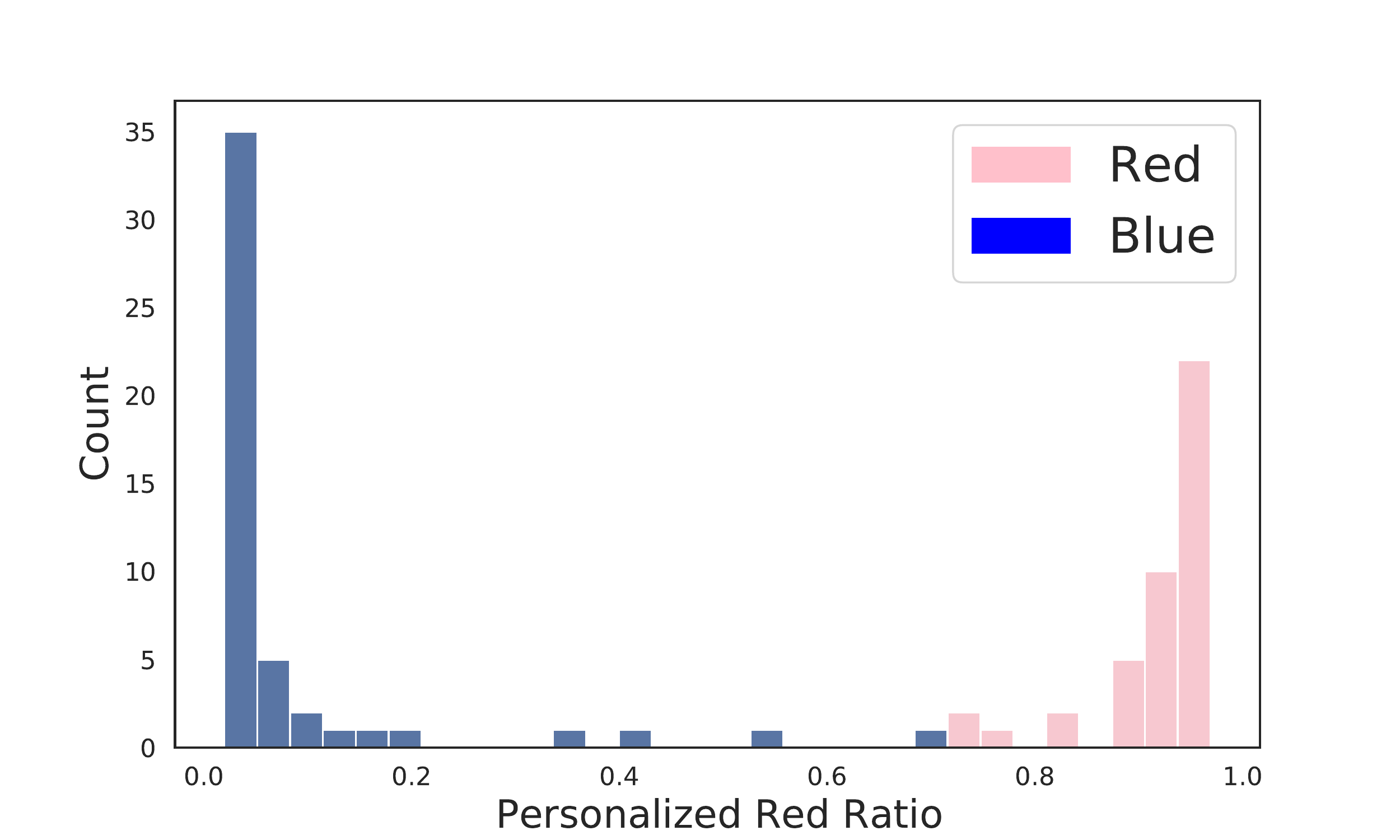}
	\caption{Histogram of personalized red ratio values.}
	\label{books-histogram}
\end{figure}

%In this section we prove a property for the locally fair Pagerank algorithms that actually characterizes the family of these algorithms. We first give the following definition.

Note that a Pagerank algorithm $\PR$ may be fair, while the corresponding personalized Pagerank algorithms are not. In Figure~\ref{books-histogram}, we consider the original Pagerank algorithm $\OPR$, and we show the histogram of the $\overline{\OPR_i}(R)$ values for the \textsc{books} dataset (described in Section~\ref{sec:experiments}), for the red and blue nodes, in red and blue respectively. For this dataset, we have that $r = 0.47$ and $\OPR(R)$ is 0.46. That is, the original Pagerank algorithm is almost fair for $\phi = r$. However, we observe that the distribution of the $\overline{\OPR_i}(R)$ values is highly polarized. %, and no Personalized Pagerank is $\phi$-fair. 
The values for the red nodes (shown in red) are concentrated in the interval $[0.8,1]$, while the values for the blue nodes (in blue) are concentrated in the interval $[0,0.2]$. Thus, although the network as a whole has a fair view of the two groups, the individual nodes are highly biased in favor of their own group.

Motivated by this observation, we consider a stronger definition of fairness, where given an algorithm $\PR$ we require that \emph{all} derivative Personalized Pagerank versions of this algorithm are fair. That is, it is not sufficient that the algorithm treats the red group fairly on aggregate, but we require that each individual node is also fair. 

\begin{definition}[Universal Personalized Fairness]
	Given a graph $G = (V,E)$ containing the protected group $R\subseteq V$, and a value $\phi \in (0,1)$, a Pagerank algorithm $\PR \in \allPR(G)$ is universally personalized $\phi$-fair on graph $G$, if for every node $i \in V$, the personalized Pagerank algorithm $\PR_i$ is personalized $\phi$-fair.
\end{definition}

Since we want all personalized algorithms to be fair, universal personalized fairness (universal fairness for short) is a property of the transition matrix $\Pmatrix$ of the Pagerank algorithm. Since the \textsl{FSPR} family does not change the matrix $\Pmatrix$, it is not universally fair. We will show that the locally fair algorithms are universally fair. Furthermore, we can prove that universally fair algorithms are locally fair. Therefore, universal fairness is equivalent to local fairness.
 
%We can prove that the family of locally fair algorithms is equivalent to the family of universally personalized fair algorithms.

\begin{theorem}
	A Pagerank algorithm $\PR$ is universally personalized $\phi$-fair if and only if it is locally fair.
\end{theorem}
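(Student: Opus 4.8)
The plan is to reduce both properties to a single linear-algebraic condition on the transition matrix, namely $\Pmatrix\evec_R = \phi\evec$, where $\evec_R$ is the indicator vector of the red group and $\evec$ the all-ones vector. This condition is exactly the matrix part of local fairness, since the $i$-th coordinate of $\Pmatrix\evec_R$ is $\sum_{j\in R}P[i,j]$. Since universal fairness constrains only $\Pmatrix$, the honest reading of the equivalence is between universal fairness and the matrix condition of local fairness; I would note that the jump-vector requirement $\sum_{j\in R}\vvec[j]=\phi$ is an orthogonal condition that the personalized versions (which all use $\vvec=\evec_i$) do not probe.

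First I would introduce the vector $\rvec = \Qmatrix\evec_R$, whose $i$-th entry is $\PR_i(R)=\sum_{u\in R}\Qmatrix[i,u]$, the probability that the personalized walk restarting at $i$ allocates to the red group. From $\Qmatrix=\gamma[\Imatrix-(1-\gamma)\Pmatrix]^{-1}$ I get the identity $\Qmatrix = \gamma\Imatrix + (1-\gamma)\Pmatrix\Qmatrix$, and multiplying by $\evec_R$ yields the fixed-point relation $\rvec = \gamma\evec_R + (1-\gamma)\Pmatrix\rvec$. This holds unconditionally and determines $\rvec$ uniquely, because $\Imatrix-(1-\gamma)\Pmatrix$ is invertible.

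Next I would translate universal personalized fairness into a statement about $\rvec$. Because $\overline{\PR_i}$ subtracts $\gamma$ from the self-probability $\PR_i(i)$, the condition $\overline{\PR_i}(R)=\phi(1-\gamma)$ reads $\PR_i(R)=\phi(1-\gamma)+\gamma$ when $i\in R$ and $\PR_i(R)=\phi(1-\gamma)$ when $i\in B$. Collecting these over all $i$ gives the compact form $\rvec = \phi(1-\gamma)\evec + \gamma\evec_R$. Handling this $-\gamma$ correction correctly, and the resulting asymmetry between red and blue source nodes, is the step that requires the most care; everything downstream is linear algebra driven by the stochasticity identity $\Pmatrix\evec=\evec$.

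Finally I would close the equivalence using the fixed-point relation. In one direction, substituting the universal-fairness form $\rvec=\phi(1-\gamma)\evec+\gamma\evec_R$ into $\rvec=\gamma\evec_R+(1-\gamma)\Pmatrix\rvec$ and using $\Pmatrix\evec=\evec$, the $\gamma\evec_R$ terms cancel and, after dividing by $1-\gamma$, the identity collapses to $\gamma\Pmatrix\evec_R=\phi\gamma\evec$, i.e. $\Pmatrix\evec_R=\phi\evec$, which is local fairness. Conversely, assuming $\Pmatrix\evec_R=\phi\evec$, I would verify directly that $\phi(1-\gamma)\evec+\gamma\evec_R$ satisfies the fixed-point relation; by uniqueness of the fixed point it must equal $\rvec$, and reading off its coordinates recovers the personalized $\phi$-fairness of every $\PR_i$, i.e. universal personalized fairness. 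The only genuine obstacle is bookkeeping the $\gamma$ self-correction to obtain the clean vector equation; once that is in place the two implications are short cancellations.
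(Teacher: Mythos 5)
Your proof is correct, and although it rests on the same core objects as the paper's proof --- the vector $\rvec = \Qmatrix\evec_R$ of personalized red masses, the vector form $\rvec = \phi(1-\gamma)\evec + \gamma\evec_R$ of universal fairness (with the $\gamma$ self-correction handled exactly as the paper does), and the matrix condition $\Pmatrix\evec_R = \phi\evec$ --- your organization is genuinely different. The paper uses separate machinery for the two implications: for local $\Rightarrow$ universal it argues node by node from the personalized equation $\pvec_i^T = (1-\gamma)\pvec_i^T\Pmatrix + \gamma\evec_i^T$, using stochasticity of $\pvec_i$; for universal $\Rightarrow$ local it multiplies $\Qmatrix\evec_R = \phi(1-\gamma)\Qmatrix\evec + \gamma\evec_R$ through by $\Qmatrix^{-1}$ and unwinds the explicit inverse $\Qmatrix^{-1} = \frac{1}{\gamma}\left(\Imatrix - (1-\gamma)\Pmatrix\right)$. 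You instead extract the single fixed-point identity $\rvec = \gamma\evec_R + (1-\gamma)\Pmatrix\rvec$ from $\Qmatrix = \gamma\Imatrix + (1-\gamma)\Pmatrix\Qmatrix$ and run it in both directions: substitution and cancellation (using $\Pmatrix\evec = \evec$) give universal $\Rightarrow$ local, while verifying that the candidate vector satisfies the relation and invoking uniqueness --- invertibility of $\Imatrix - (1-\gamma)\Pmatrix$, guaranteed since $(1-\gamma)\Pmatrix$ has spectral radius below $1$ --- gives local $\Rightarrow$ universal. This buys symmetry and economy (one identity instead of two arguments, no explicit matrix inversion) at the small cost of the uniqueness appeal, which the paper's direct computations sidestep. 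Your side remark is also accurate: the personalized walks all use $\vvec = \evec_i$, so universal fairness probes only $\Pmatrix$, and the equivalence is properly with the transition-matrix half of local fairness; the paper implicitly concedes this when it notes that universal fairness is a property of the transition matrix, and its forward direction indeed uses only $\Pmatrix\evec_R = \phi\evec$, never the fairness of the jump vector.
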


\begin{proof}
	We first prove that if an algorithm $\PR$ is locally fair then it is  personalized fair. The proof is similar to that of Theorem~\ref{theorem:local}.
	Let $\pvec_i$ denote the personalized pagerank vector of algorithm $\PR_i$. We know that 
	$
	\pvec_i^T = (1-\gamma)\pvec_i^T\Pmatrix + \gamma \evec_i^T
	$
	where $\evec_i$ is the vector with 1 at the position $i$, and 0 everywhere else.
	The amount of probability that $\PR_i$ allocates to the red category can be computed as $\PR_i(R) = \pvec_i^T \evec_R$, where $\evec_R$ is the vector with 1 at the positions of all red nodes and 0 everywhere else. Multiplying the equation for $\pvec_i^T$ with $\evec_R$ we have:
	\[
	\PR_i(R) = (1-\gamma)\pvec_i^T\Pmatrix\evec_R + \gamma \evec_i^T\evec_R
	\]
	Since $\PR$ is fair, for every row of  the transition matrix $\Pmatrix$, the probability of transitioning to a red node is $\phi$. Therefore we have that $\Pmatrix \evec_R = \phi\evec$, where $\evec$ is the vector of all 1's. Also, since $\pvec_i$ defines a distribution $\pvec_i^T\evec = 1$. Therefore $(1-\gamma)\pvec_i^T\Pmatrix\evec_R = \phi(1-\gamma)$. We have: 
	\[
	\PR_i(R) = \phi(1-\gamma) + \gamma\evec_i^T \evec_R 
	\]
	The value of the second term $\gamma\evec_i^T \evec_R$ depends on whether the node $i$ is red or blue. If $i$ is blue, $\gamma\evec_i^T \evec_R = 0$, and we have $\PR_i(R) = \phi(1-\gamma)$. If $i$ is red, $\gamma\evec_i^T \evec_R = \gamma$, and thus $\PR_i(R) = \phi(1-\gamma) + \gamma$, which proves our claim.
	
	For the opposite direction we make use of the fact that the pagerank vector can be written as $\pvec^T = \vvec^T\Qmatrix$, where $\vvec$ is the jump vector and $\Qmatrix = \gamma\left[\Imatrix - (1-\gamma)\Pmatrix\right]^{-1}$.
	%
	%Note that if we set $\vvec = \evec_i$ we obtain the personalized pagerank vector $\pvec_i$. 
	%
	%Therefore, 
	From Section~\ref{sec:fairness-sensitive} we know that 
	the $i$-th row of matrix $\Qmatrix$ is equal to the personalized pagerank vector $\pvec_i$. The product $\rvec = \Qmatrix\evec_R$ is a vector where $\rvec[i] = \PR_i(R)$. We have assumed that the $\PR$ algorithm is universally personalized $\phi$-fair. Therefore $\rvec[i] = \phi(1-\gamma) + \gamma$ if $i$ is red, and $\rvec[i] = \phi(1-\gamma)$ if $i$ is blue. That is, $\rvec = \phi(1-\gamma)\evec + \gamma\evec_R$. Using the fact that $\rvec = \Qmatrix\evec_R$, and that $\Qmatrix\evec = \evec$, since $\Qmatrix$ is stochastic, we have the following derivations:
	\begin{align}
		\Qmatrix\evec_R & = \phi(1-\gamma)\Qmatrix\evec + \gamma\evec_R \nonumber\\
		\Qmatrix^{-1}\Qmatrix\evec_R & = \phi(1-\gamma)\Qmatrix^{-1}\Qmatrix\evec + \gamma\Qmatrix^{-1}\evec_R \nonumber\\
		\evec_R & = \phi(1-\gamma)\evec 
		 	+ \gamma\frac1\gamma \left(\Ivec - (1-\gamma)\Pmatrix \right)\evec_R \nonumber\\
		\evec_R & = \phi(1-\gamma)\evec + \evec_R - (1-\gamma) \Pmatrix \evec_R \nonumber\\
		%0 & = (1-\gamma)\left(\frac{\phi-\phi(1-\gamma)}{\gamma}\evec - \Pmatrix \evec_R \right)\\
		\Pmatrix \evec_R & = \phi \evec \nonumber
	\end{align}
	The last equation means that the probability of transitioning from any node in the graph to a red node is $\phi$, which proves our claim. 
\end{proof}

The theorem holds also when we consider targeted fairness. We can prove that an algorithm is universally personalized targeted fair, if and only if it is locally fair. We omit the details of the proof due to lack of space.

%!TeX root=FairLar.tex
\begin{table*}[ht]
	\centering
	\caption{Real dataset characteristics. $r$, $b$: relative size, $\textit{cross}_R$, $\textit{cross}_B$: percentage of cross-edges, $p_R$, $p_B$: original pagerank assigned to the protected and unprotected group respectively.}
	%	\vspace{0.2cm}}
		\begin{tabular}{c c c c c c c c c c }
		\toprule
		{\textbf{Dataset}} & \#nodes & \#edges &Protected attribute & $r$ & $b$ & $\textit{cross}_R$ & $\textit{cross}_B$ & $p_R$ & $p_B$ \\
		\midrule
		\sc{books} &  92  & 748 & political (left) & 0.47 & 0.53 & 0.063 & 0.065& 0.46 & 0.54 \\
		\sc{blogs} & 1,222  & 19,089 & political (left) & 0.48 & 0.52 & 0.284 & 0.036&  0.33 & 0.67 \\
		\sc{dblp} & 13,015 & 79,972 & gender (women) & 0.17 & 0.83 & 0.96 & 0.86& 0.16  & 0.84\\
		\sc{twitter} & 18,470 & 61,157 & political (left) & 0.61 & 0.39 & 0.07 & 0.03& 0.57 & 0.43 \\
		\bottomrule
	\end{tabular}
	\label{table:real}
\end{table*}

\begin{figure}[]
	\centering
	\subfigure[{symmetric}]{
		{\includegraphics[width = 0.22\textwidth]{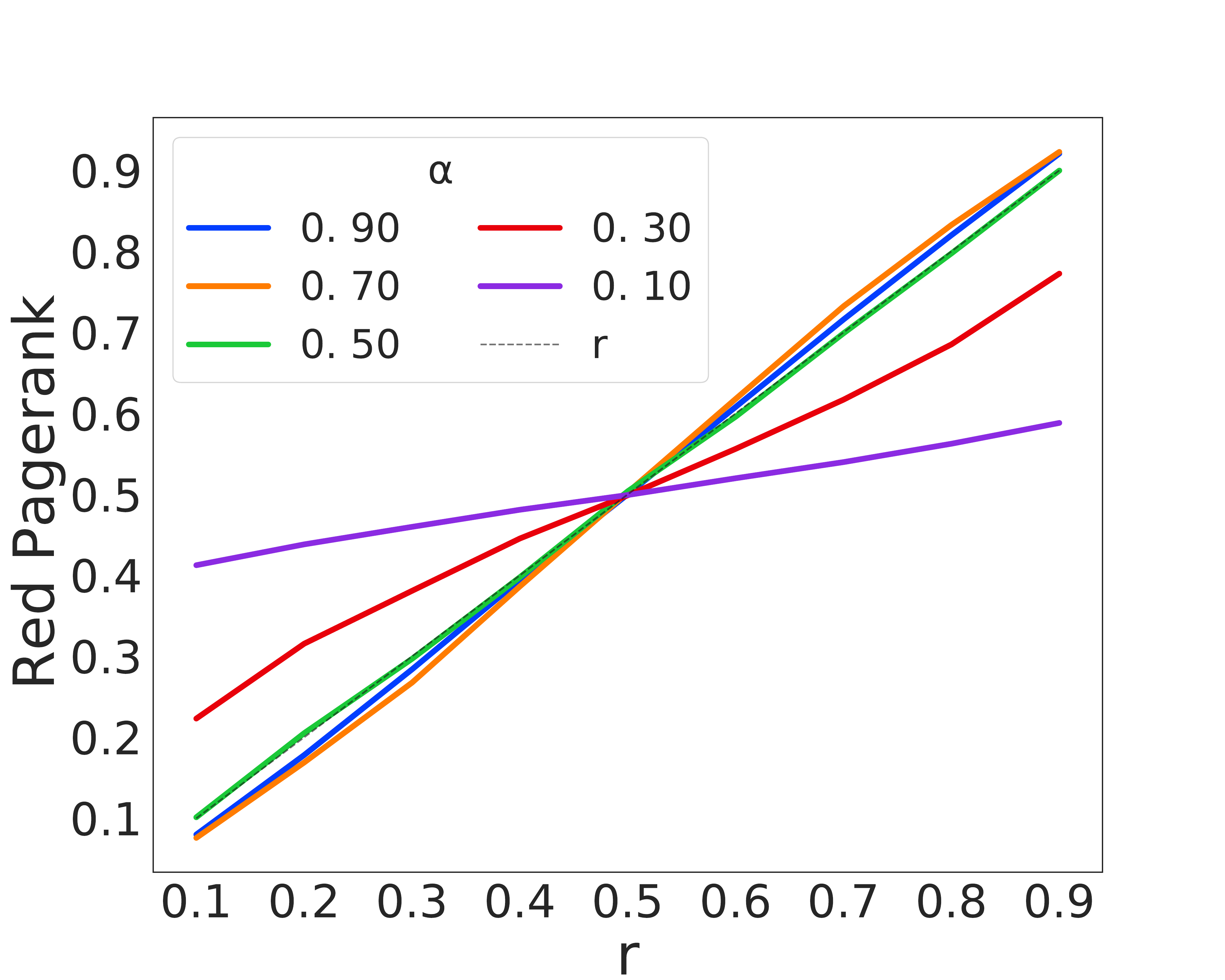}}
	}
	\subfigure[asymmetric]{
	{\includegraphics[width = 0.22\textwidth]{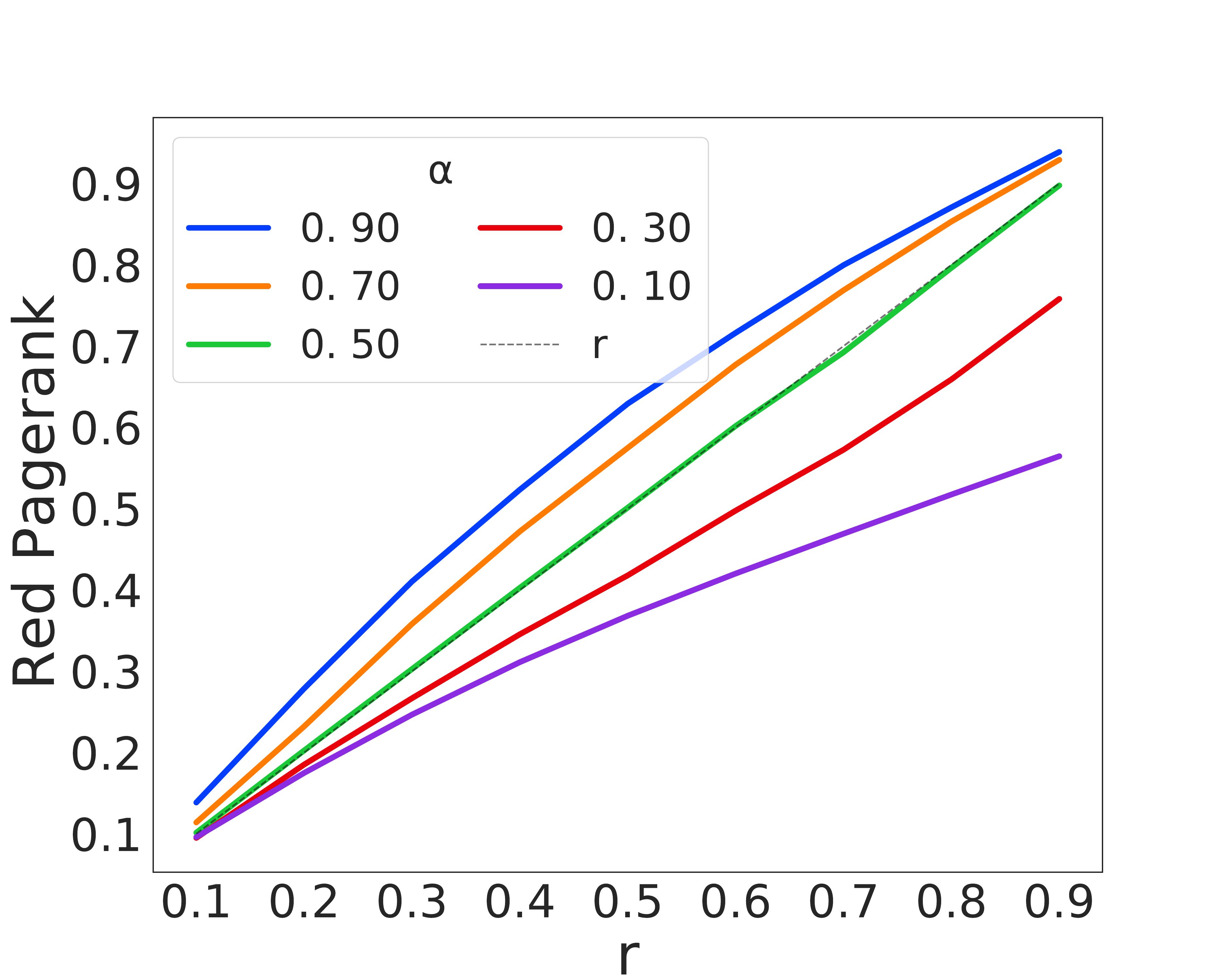}}
	}
	\caption{Red  {\pagerank}  with varying $r$ and $\alpha$, $\phi$-fairness corresponds to the identical line.}
	\label{fig:local-synthetic-all}
\end{figure}

\begin{figure*}[]
	\centering
	\subfigure[{$r$ = 0.1}]{
      {\includegraphics[width = 0.32\textwidth]{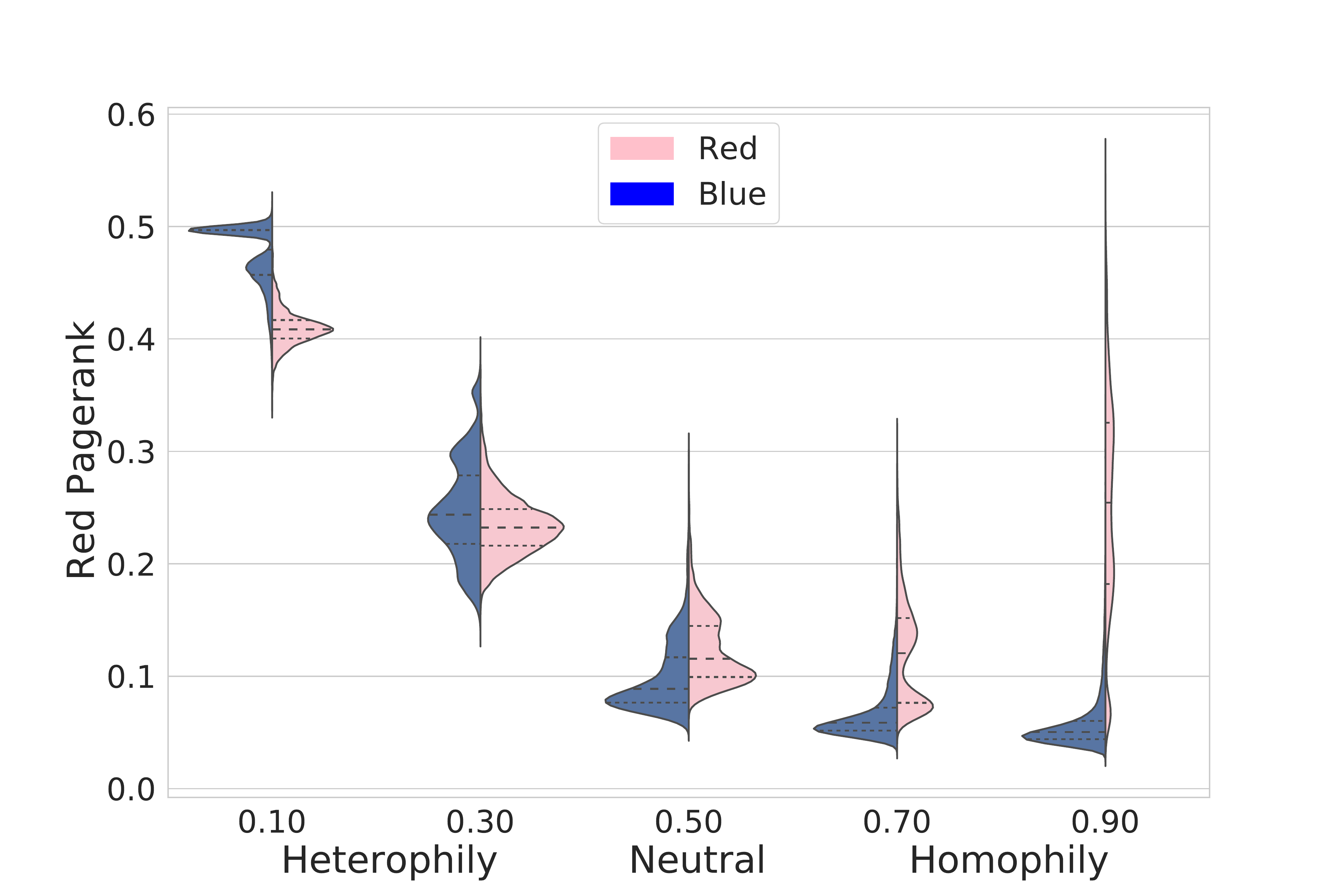}}
	}
	\subfigure[{$r$ = 0.3}]{
		{\includegraphics[width = 0.32\textwidth]{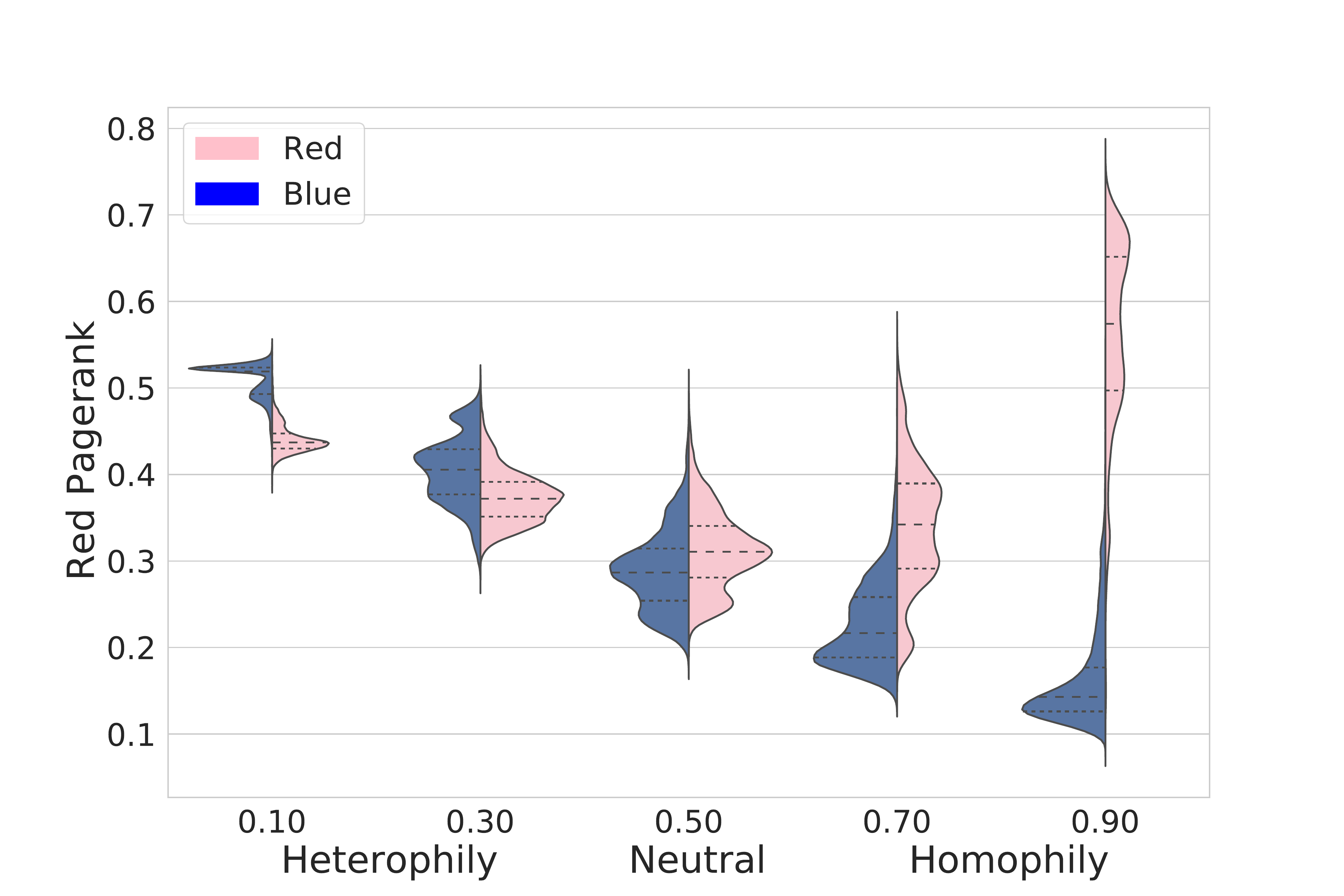}}
	}
	\subfigure[{$r$ = 0.5}]{
	{\includegraphics[width = 0.32\textwidth]{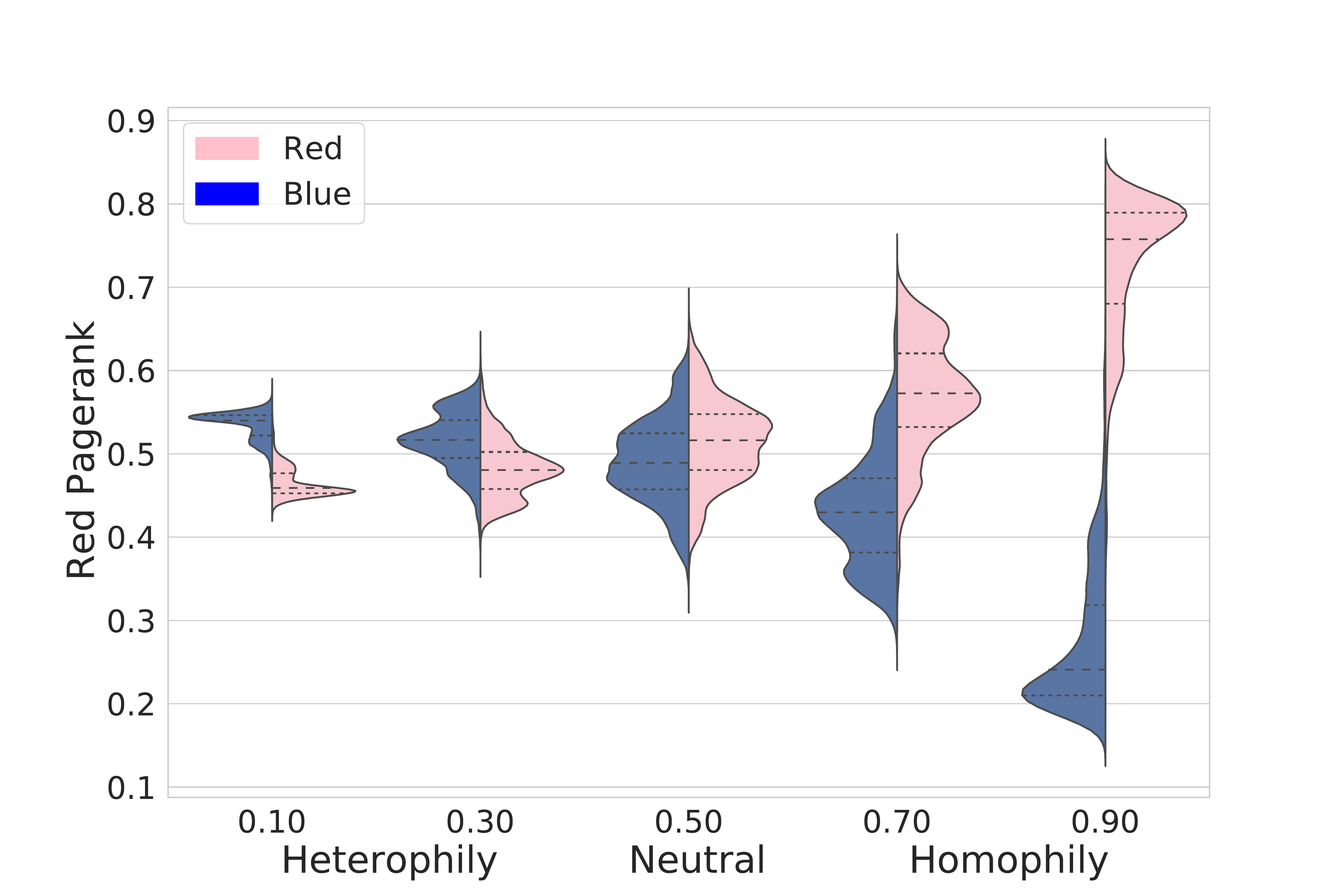}}
	}
	\caption{Distribution of the red personalized pagerank of the red and blue nodes with varying $\alpha$ in the symmetric case, $\phi$-fairness achieved when the red pagerank is $r$.}
	\label{fig:sym-violin-plots}
\end{figure*}

\begin{figure*}[]
	\centering
	\subfigure[{$r$ = 0.1}]{
  	{\includegraphics[width = 0.32\textwidth]{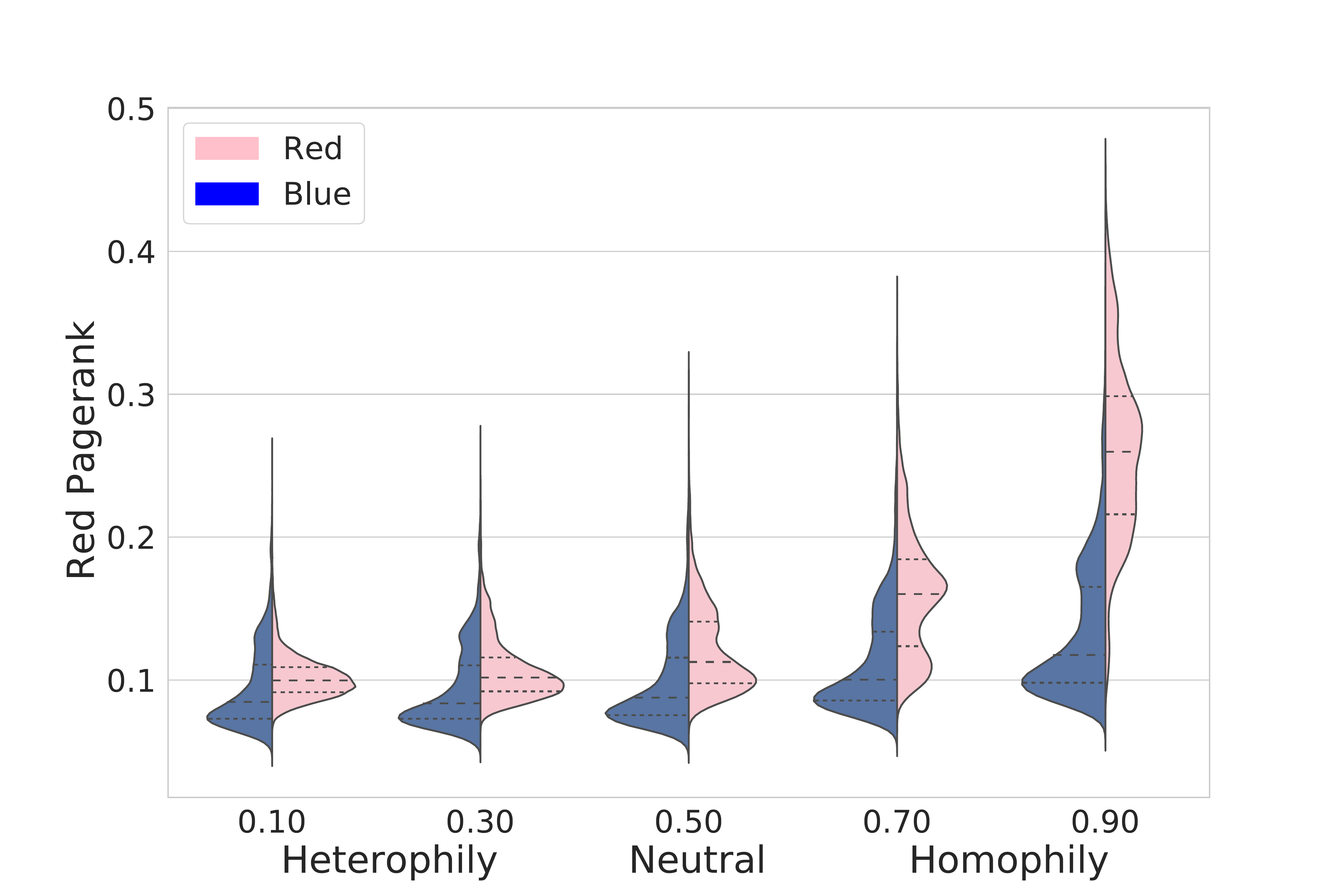}}
	}
	\subfigure[{$r$ = 0.3}]{
		{\includegraphics[width = 0.32\textwidth]{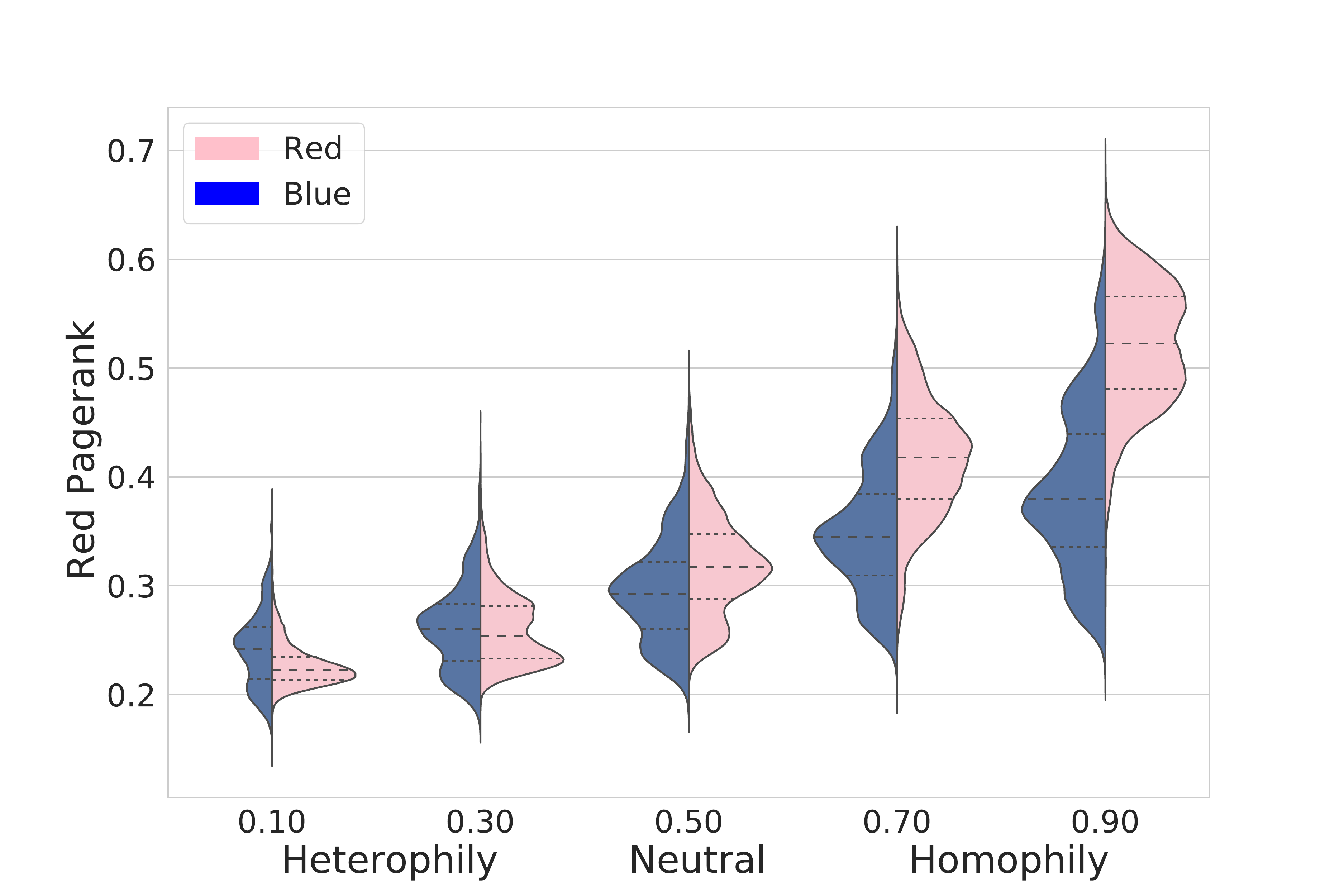}}
	}
	\subfigure[{$r$ =  0.5}]{
		{\includegraphics[width = 0.32\textwidth]{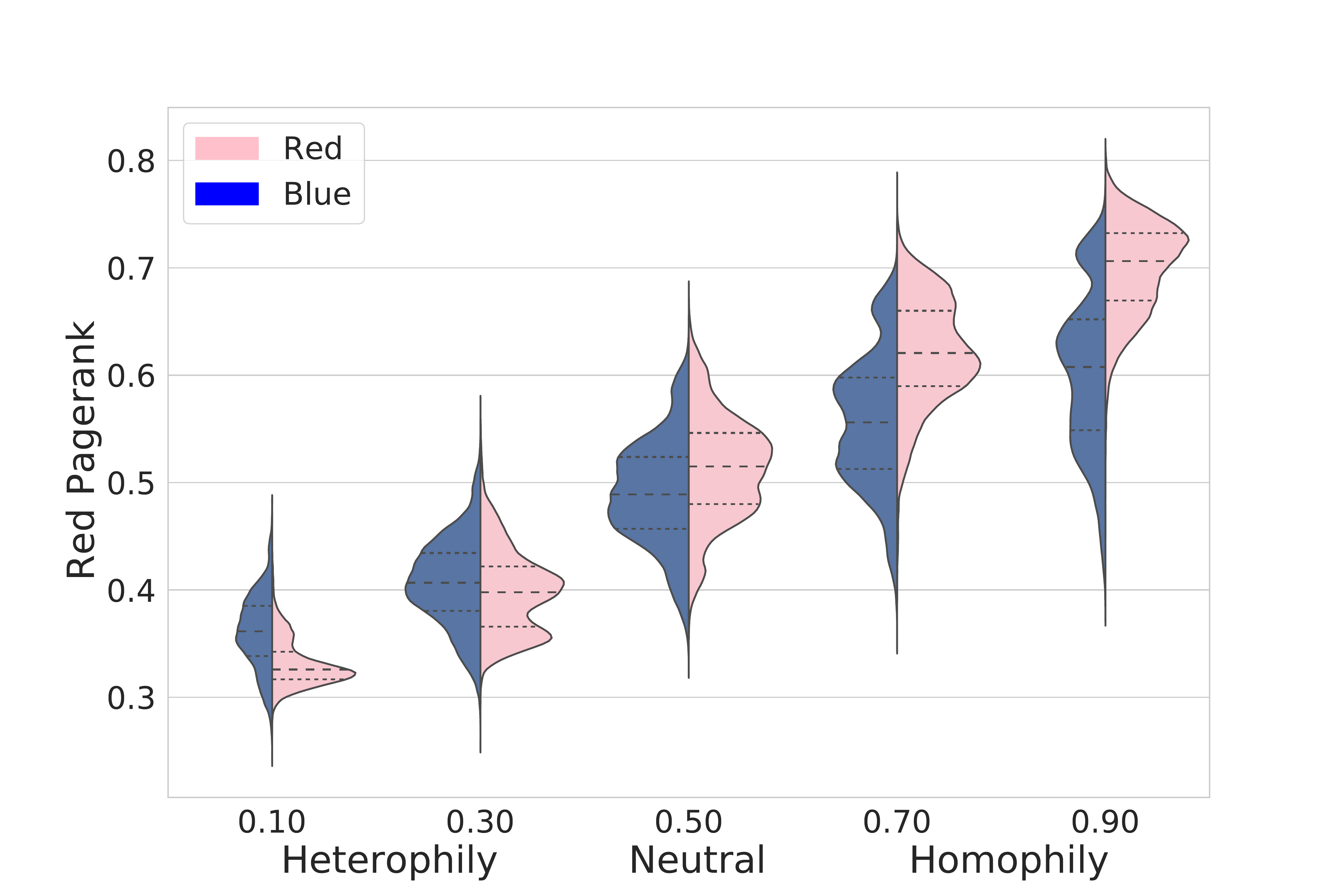}}
	}
	\caption{Distribution of the red personalized pagerank  of the red and blue nodes for varying $\alpha$ in the asymmetric case, $\phi$-fairness achieved when the red pagerank is $r$.}
	\label{fig:asym-violin-plots}
\end{figure*}

\section{Experimental Evaluation}
\label{sec:experiments}
Our goal is to evaluate Pagerank fairness in different kinds of networks, identify the conditions under which Pagerank unfairness emerges and evaluate the effect of the proposed fair Pagerank algorithms. Previous research has shown that homophily and size imbalance may lead to degree unfairness \cite{glass-ceiling,glass-ceiling-recommend,xyz}. Is this the case for Pagerank unfairness? 

Specifically, we address the following three research questions:

\vspace*{0.03in}
\noindent \textbf{RQ1:} Under which conditions are the original Pagerank and personalized Pagerank algorithms fair?

\noindent \textbf{RQ2:} What is the utility loss incurred by the proposed fair Pagerank algorithms in different networks?

\noindent \textbf{RQ3:} What are the qualitative characteristics of the proposed fair Pagerank algorithms?

\begin{figure}[]
	\centering
		{\includegraphics[width = 0.4\textwidth]{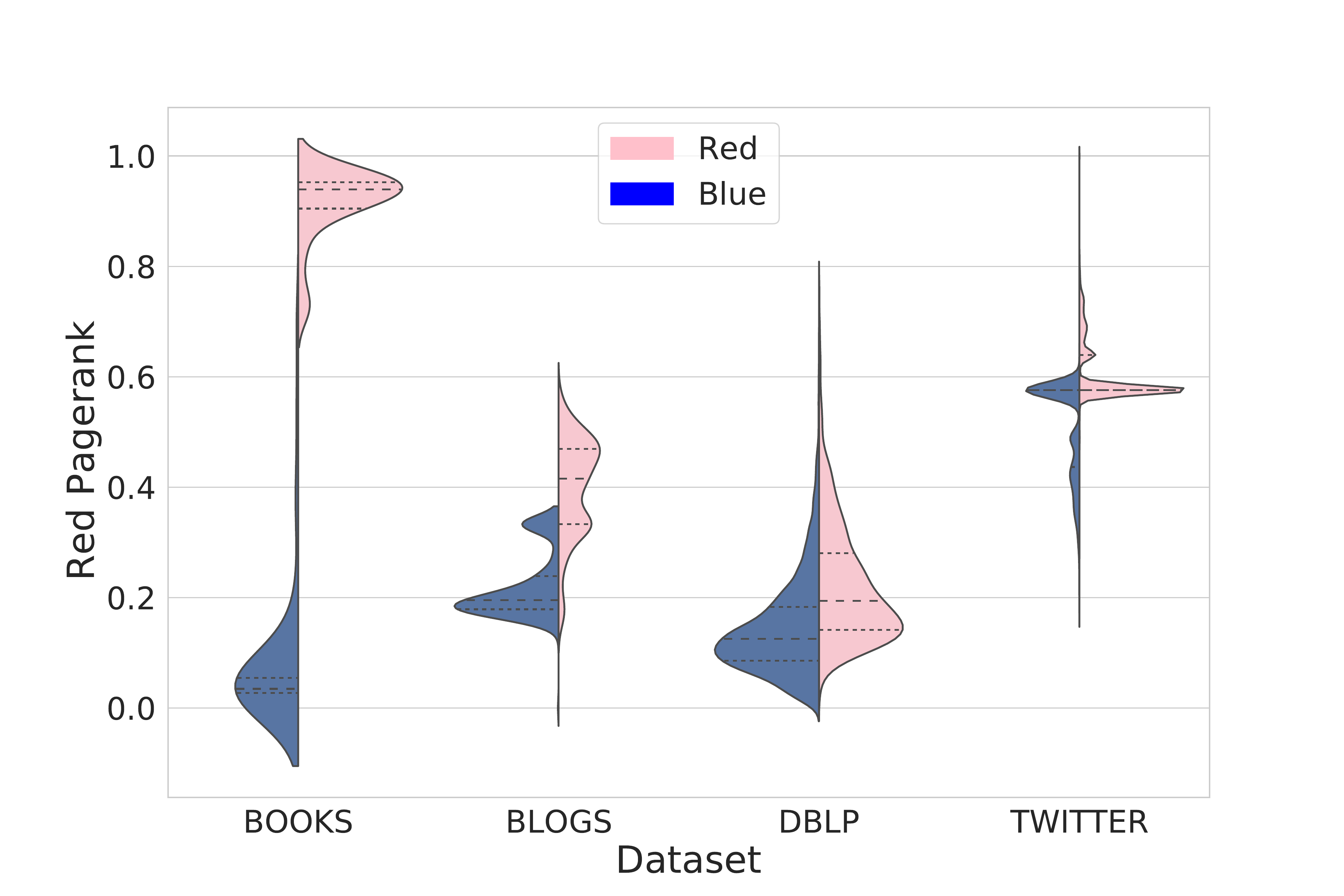}}
	\caption{Distribution of the red personalized pagerank of the red and blue nodes for the real datasets,  $\phi$-fairness when the red pagerank is $r$ (\sc{books} $r$ = 0.47, 	\sc{blogs} $r$ = 0.48, 	\sc{dblp} $r$ = 0.17, and 	\sc{twitter} $r$ = 0.61).}
	\label{fig:real-violin-plots}
\end{figure}

\vspace*{0.03in}
\noindent \textbf{Datasets.}
We use both real and synthetic datasets. Our real datasets are the following:

\begin{itemize}
\item  {\sc{books}}: A network of books about US politics where edges between books represented co-purchasing\footnote{\url{http://www-personal.umich.edu/~mejn/netdata/}}. 
\item {\sc{blogs}}: A directed network of hyperlinks between weblogs on US politic  \cite{blogs-dataset}. 
\item {\sc{dblp}}: An author collaboration network constructed from DBLP including a subset of data mining and database conferences.
\item {\sc{twitter}}: A political retweet graph from \cite{nr}. 
\end{itemize}

The characteristics of the real datasets are summarized in Table \ref{table:real}.
To infer the gender in  {\sc{dblp}}, we used   the python \textit{gender guesser} package\footnote{\url{https://pypi.org/project/gender-guesser/}}.
Regarding $homophily$,
we measure for each group, the percentage of the edges that are \textit{cross-edges} that is they point to nodes of the other group divided by the expected
number of such edges. We denote these quantities as $\textit{cross}_R$ and $\textit{cross}_B$.
Values  significantly smaller than 1 indicate that the corresponding group exhibits homophily \cite{network-book}. 

Synthetic networks are generated using the biased preferential attachment model introduced in \cite{glass-ceiling}. The graph evolves over time as follows.
Let  $G_t = (V_t, E_t)$ and $d_t(v)$ denote the graph and the degree of node $v$ at time $t$, respectively.
The process starts with an arbitrary connected graph $G_0$, with $n_0 \, r$ red and $n_0 \,(1 - r)$ blue nodes.
At  time step $t + 1$, $t > 0$, a new node $v$ enters the graph. The color of $v$ is red with probability $r$ and blue with probability $1-r$. Node $v$ chooses to connect with an existing node $u$ with probability
$\frac{d_t(u)}{\sum_{w \in G_{t} d_t(w)}}$. If the color of the chosen node $u$ is the same with the color of the new node $v$, then an edge between them is inserted with probability $\alpha$; otherwise an edge is inserted with probability
$1-\alpha$. If no edge is inserted, the
process of selecting a neighbor for node $v$ is repeated until an edge is created.

\begin{figure*}[]
	\centering
\subfigure[{symmetric, $r$ = 0.3}]{
	{\includegraphics[width = 0.22\textwidth]{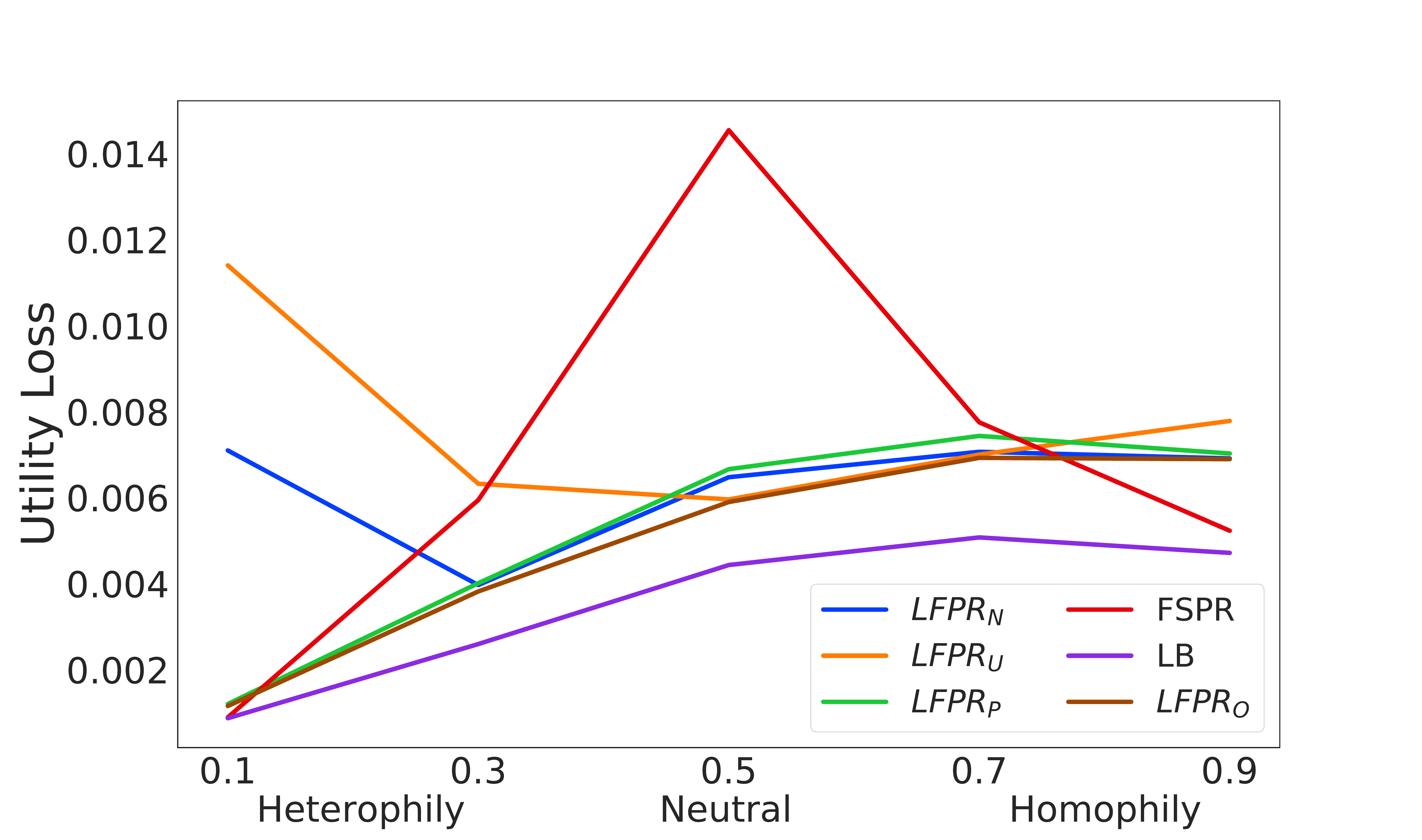}}
}
	\subfigure[{asymmetric, $r$ =  0.3}]{
		{\includegraphics[width=0.22\textwidth]{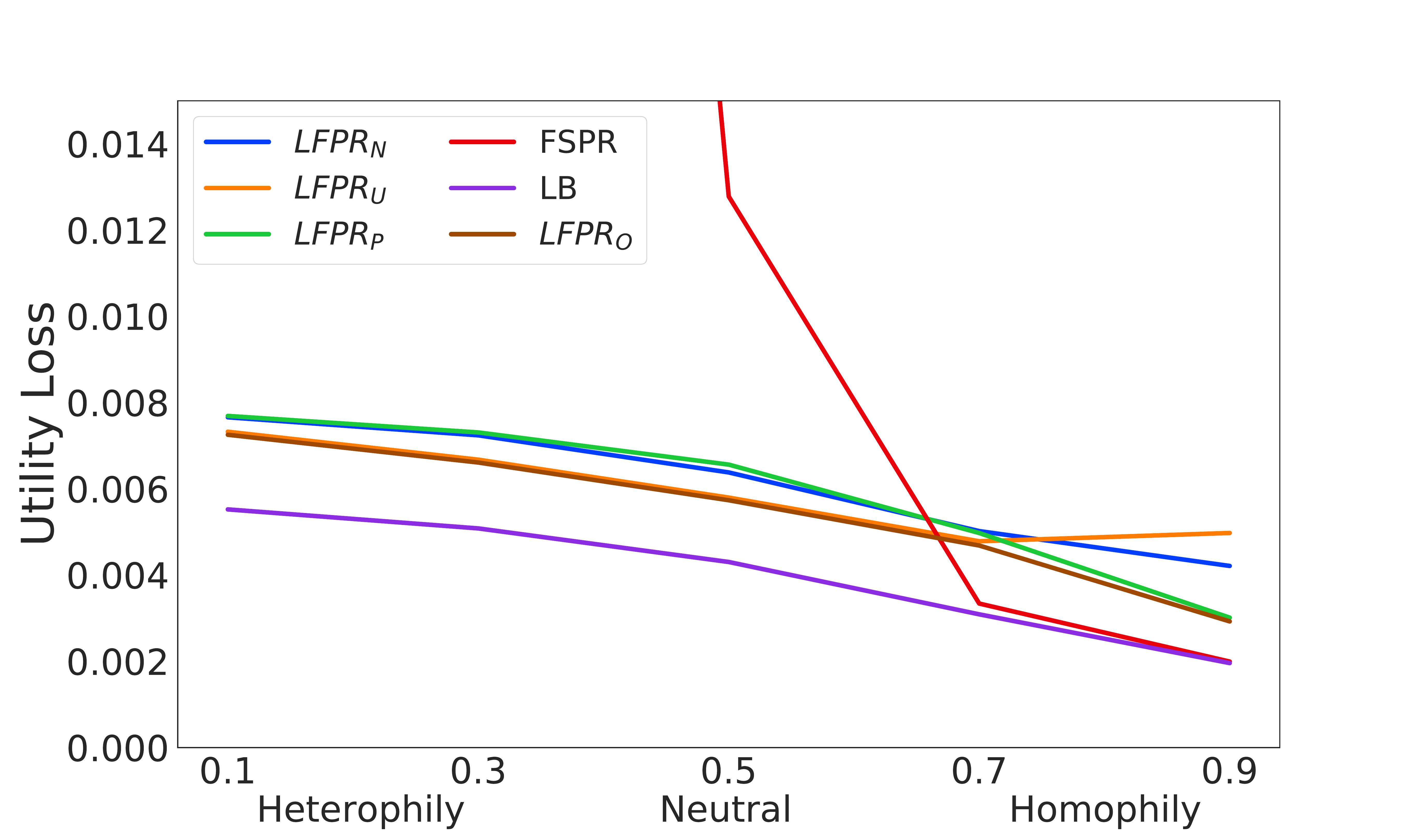}}
	}
\subfigure[{symmetric, $r$ = 0.5}]{
	{\includegraphics[width = 0.22\textwidth]{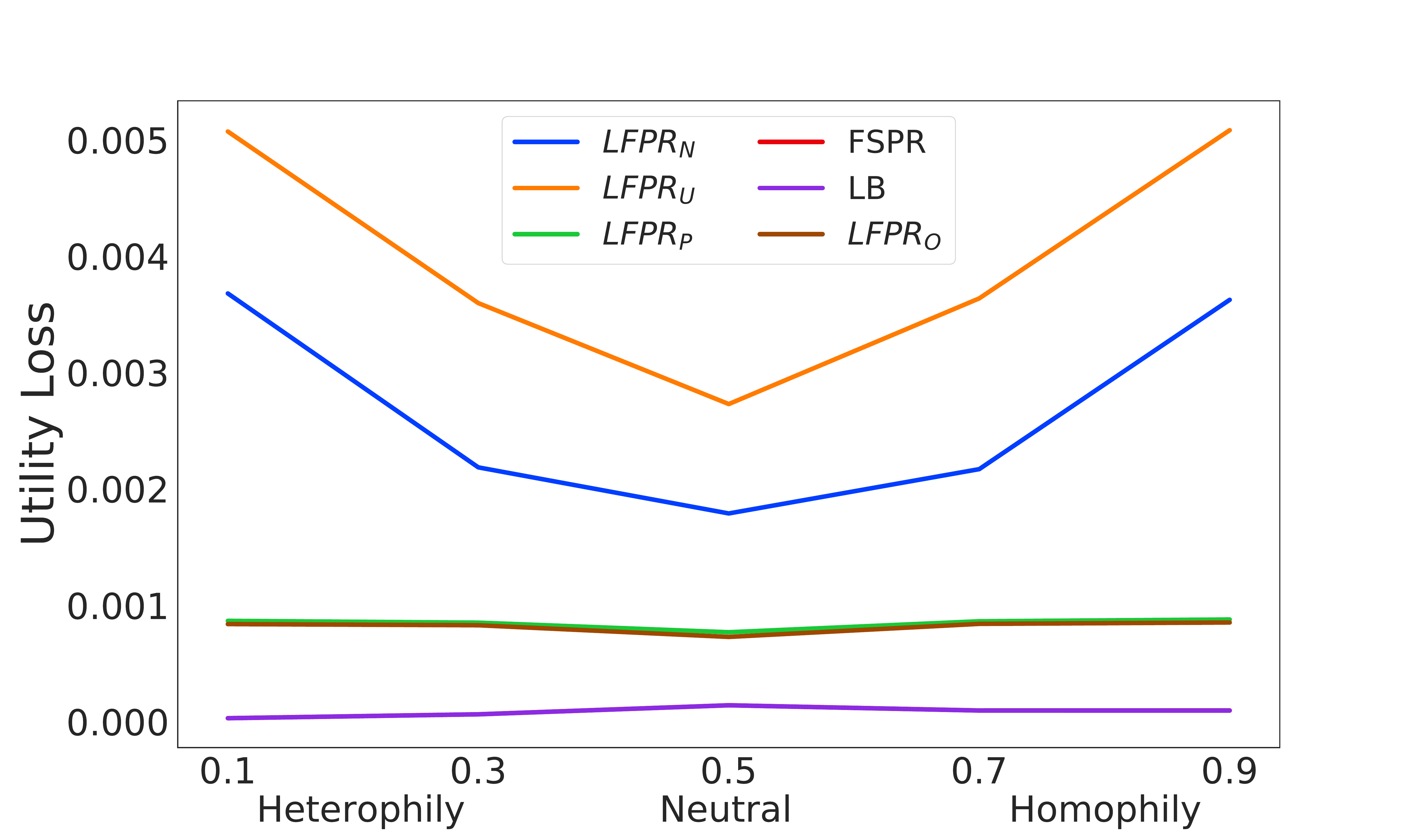}}
}
\subfigure[{asymmetric, $r$ =  0.5}]{
	{\includegraphics[width=0.22\textwidth]{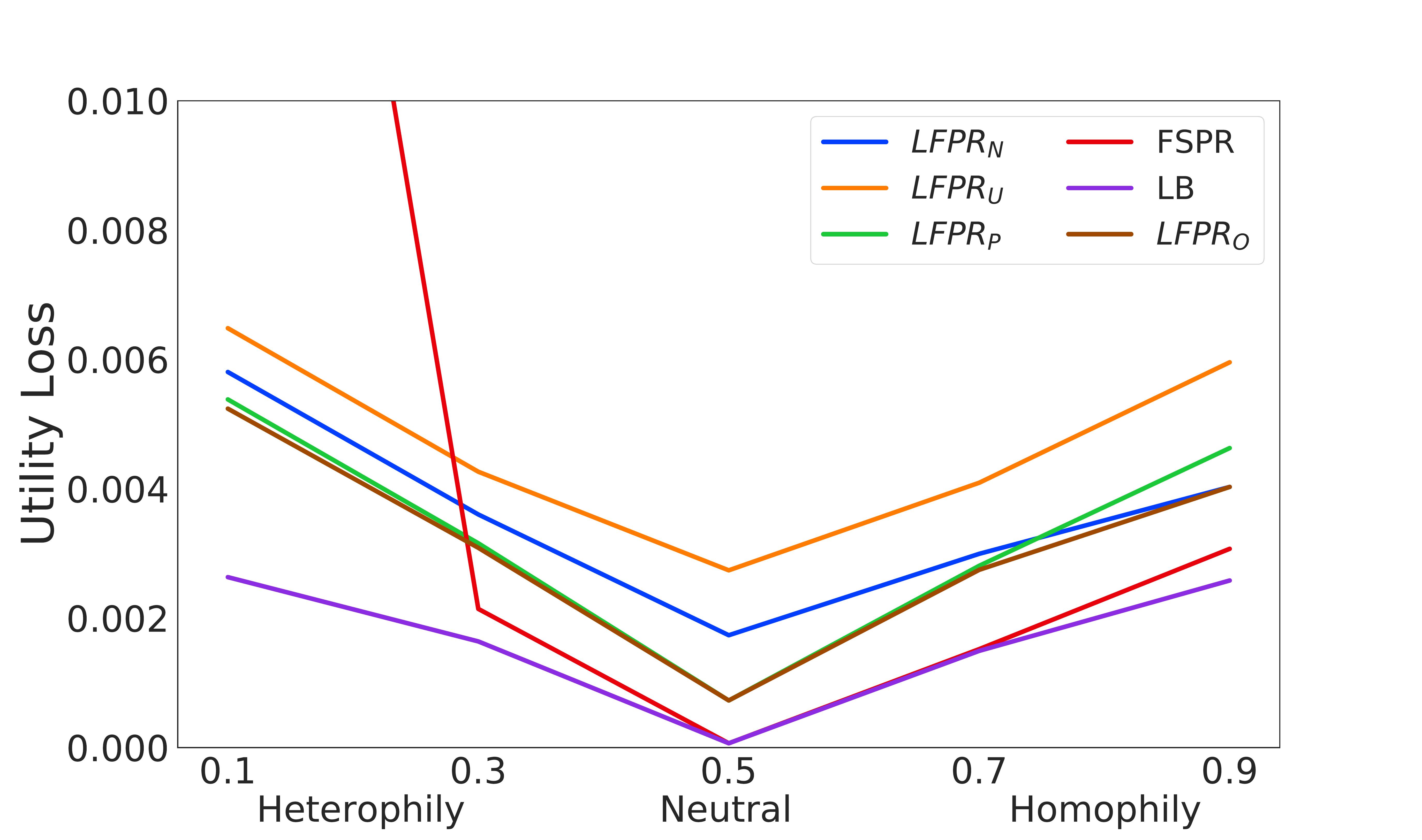}}
}
	\caption{Utility loss for synthetic networks, $\phi$ = 0.5.}
	\label{fig:synth-utility-homo}
\end{figure*}

\begin{figure}[]
	\centering
	\subfigure[{$r$ = 0.3}]{
		{\includegraphics[width = 0.22\textwidth]{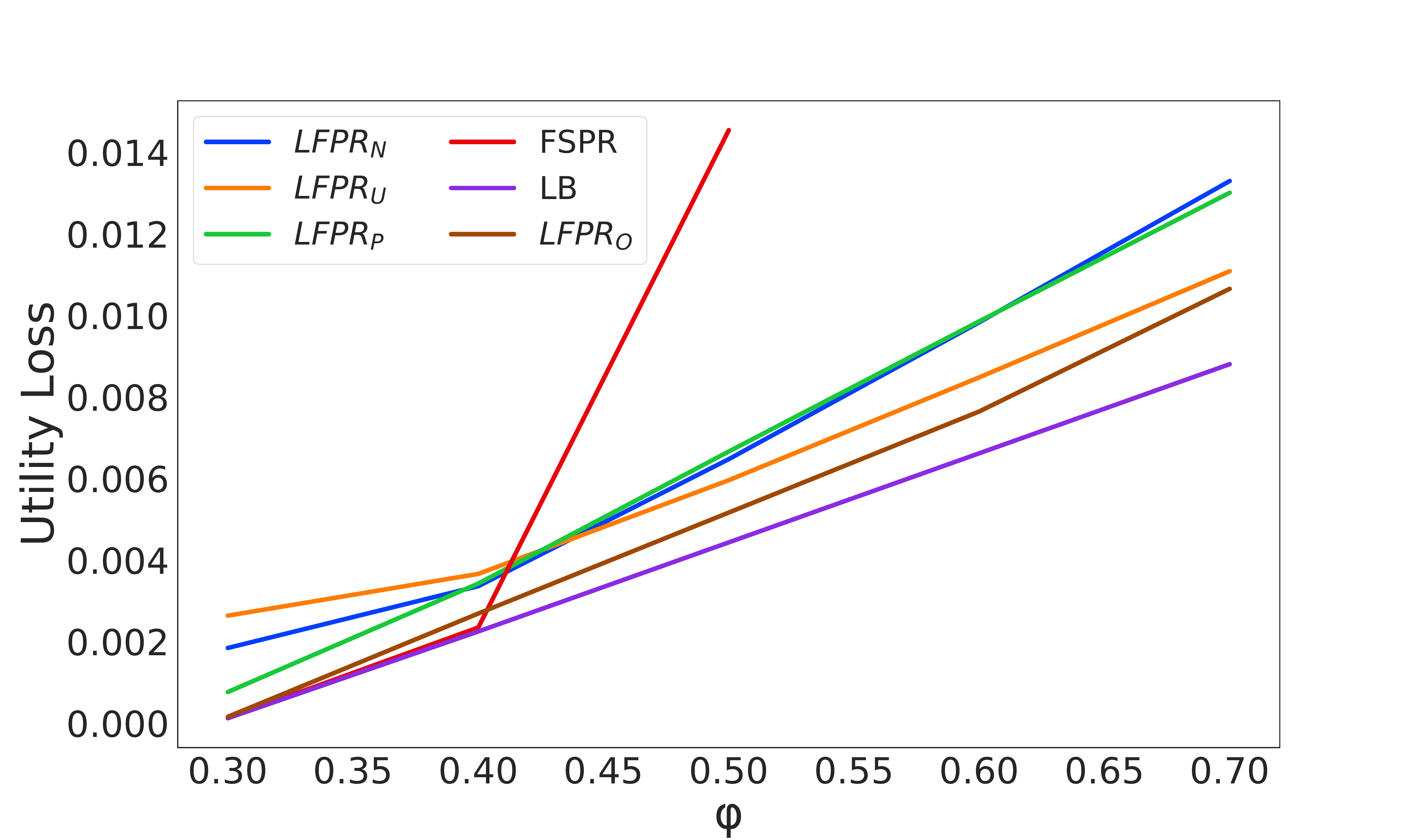}}
	}
	\subfigure[{$r$ = 0.5}]{
		{\includegraphics[width = 0.22\textwidth]{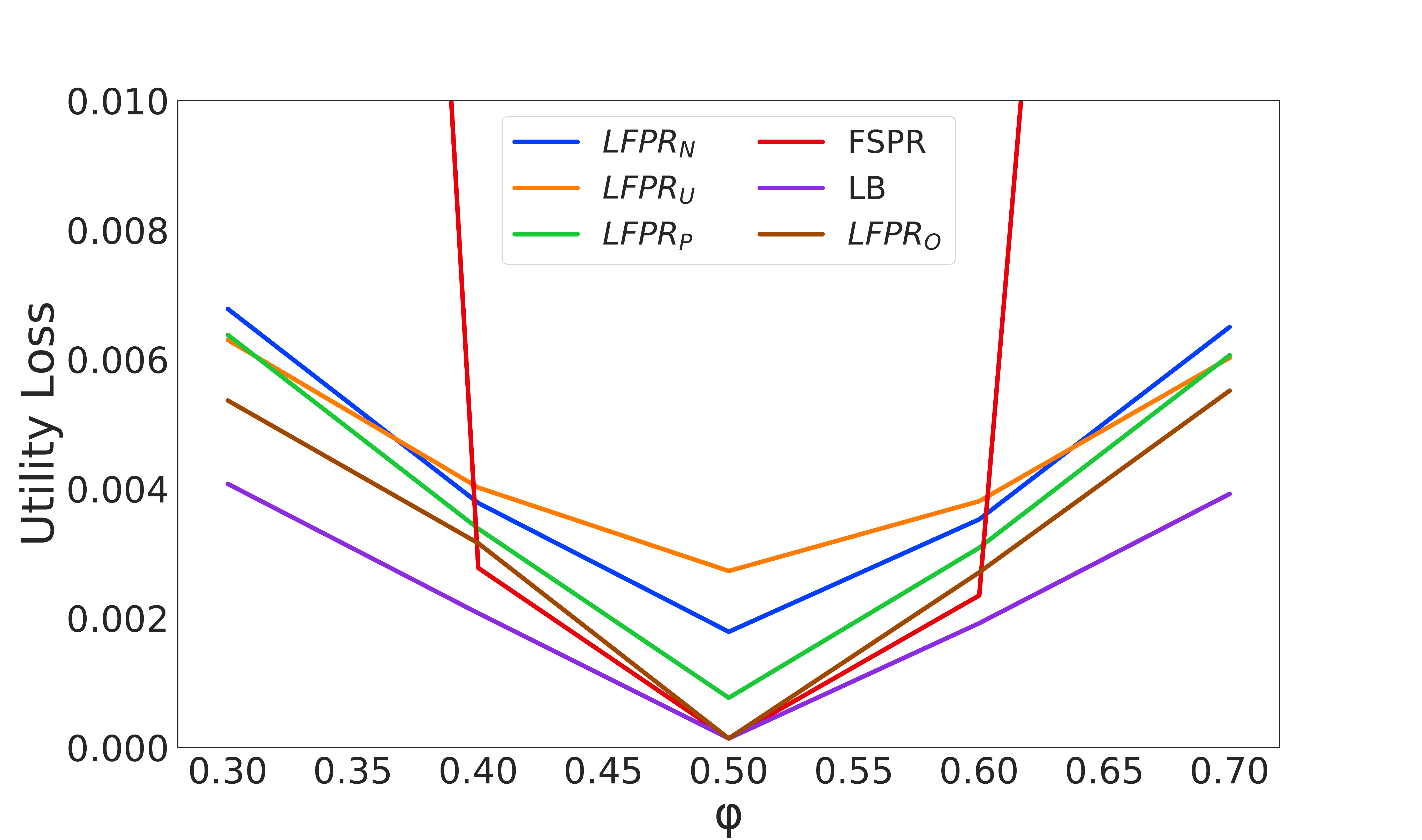}}
	}
	\caption{Utility loss for the synthetic datasets, $\alpha$ = 0.5.}
	\label{fig:synth-utility-phi}
\end{figure}

\begin{figure*}[]
	\centering
	\subfigure[{\sc{books}}]{
		{\includegraphics[width = 0.22\textwidth]{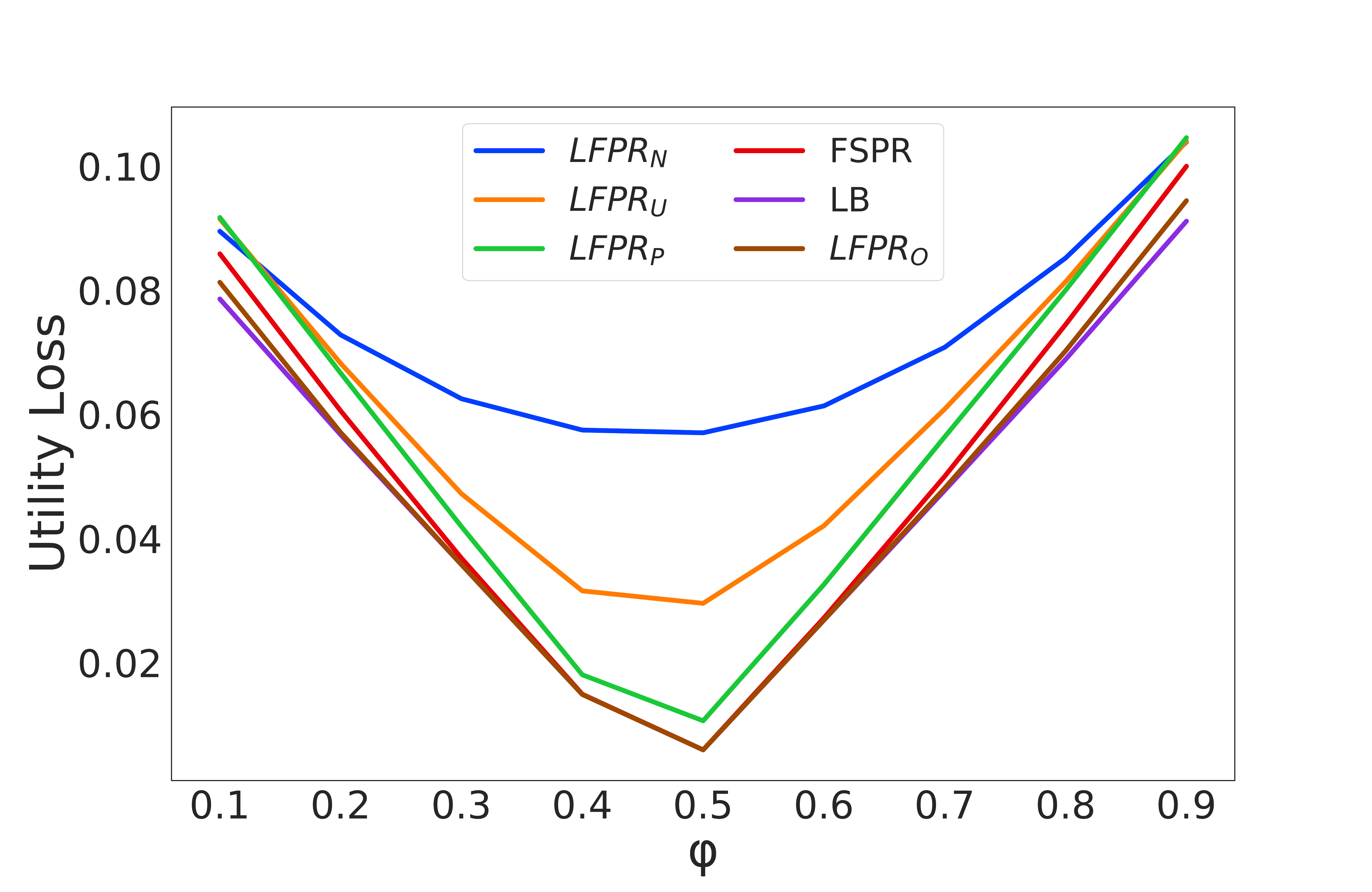}}
	}
	\subfigure[{\sc{blogs}}]{
		{\includegraphics[width = 0.22\textwidth]{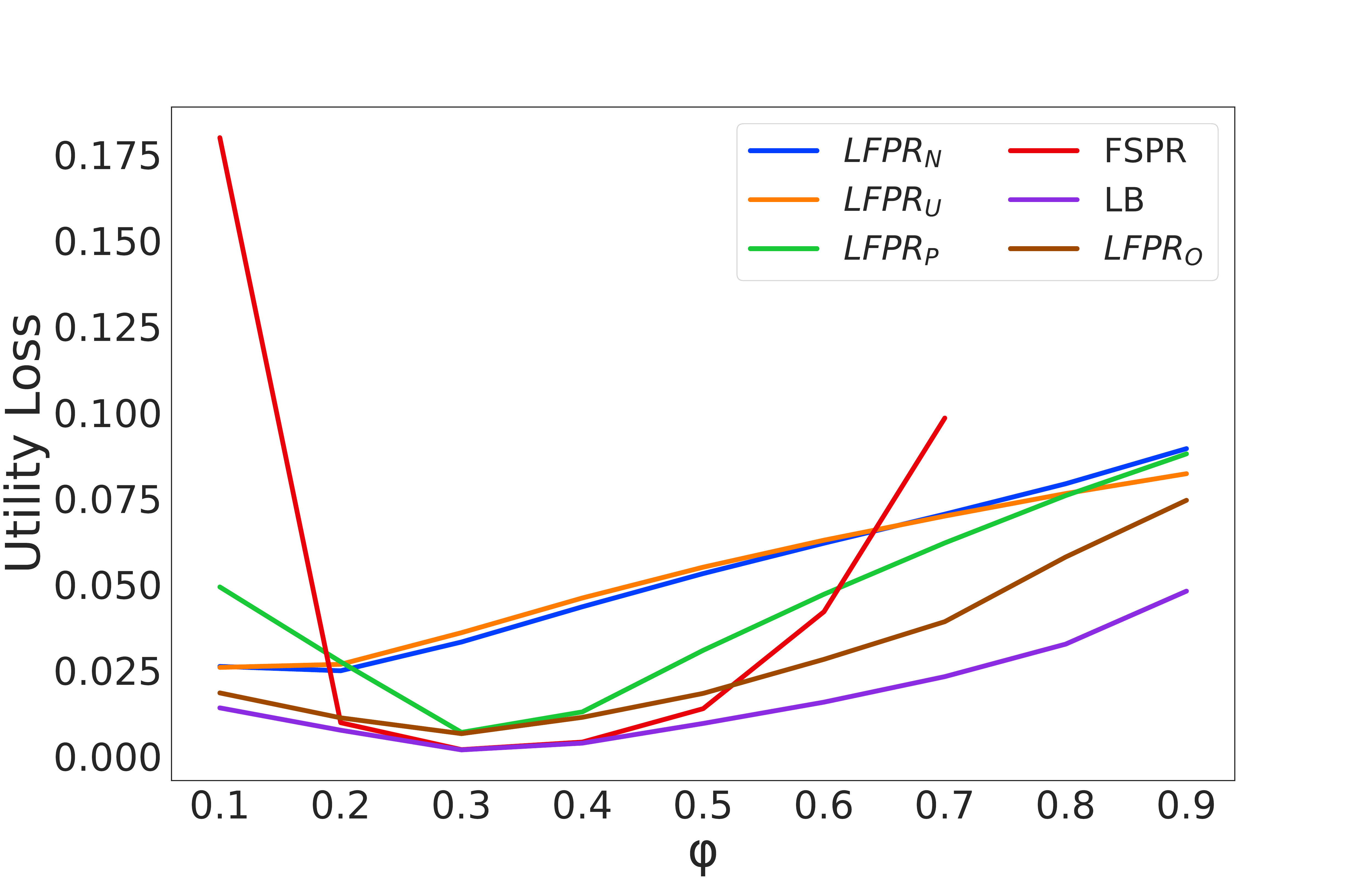}}
	}
	\subfigure[{\sc{dblp}}]{
		{\includegraphics[width = 0.22\textwidth]{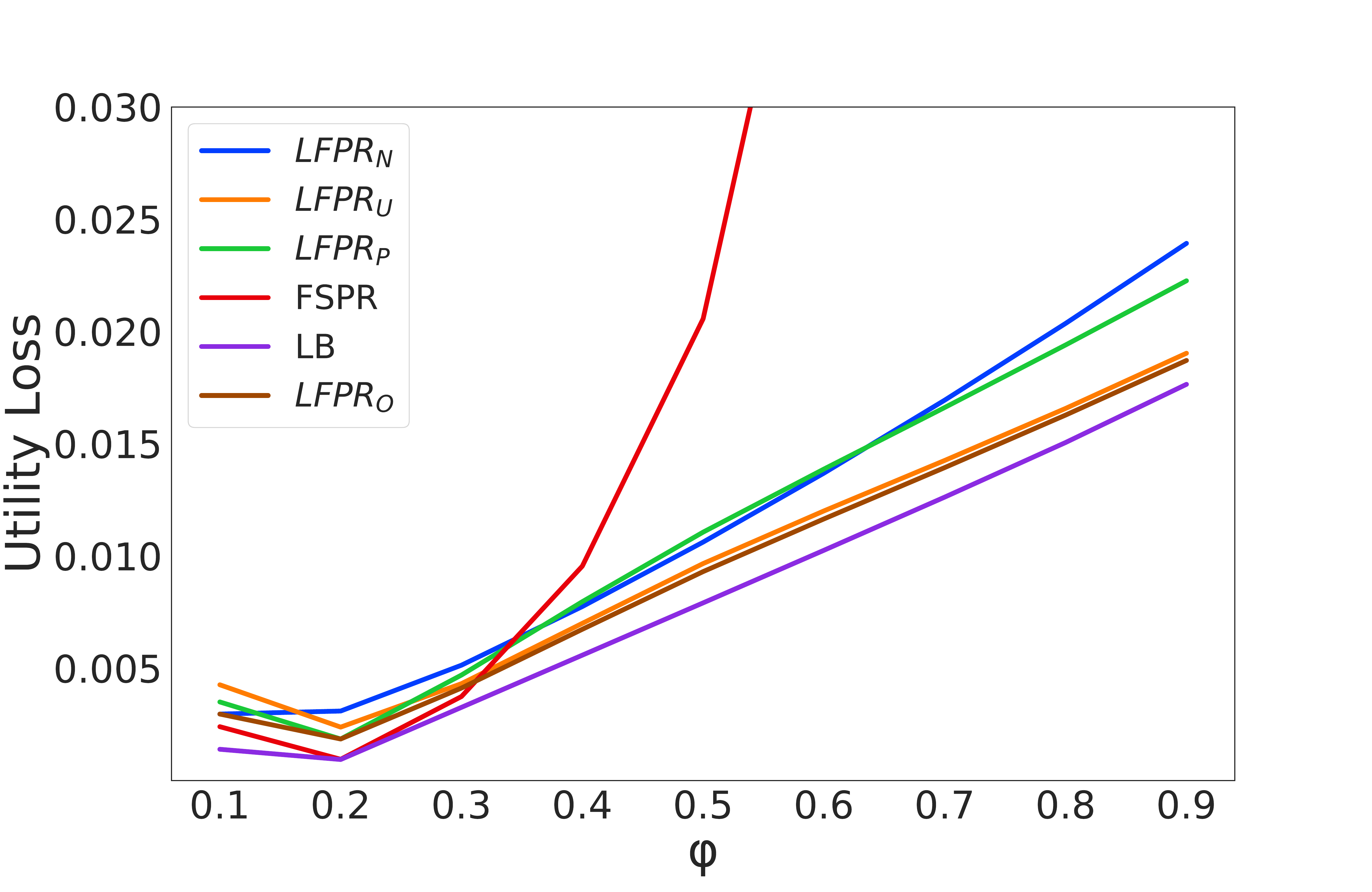}}
	}
	\subfigure[{\sc{twitter}}]{
		{\includegraphics[width = 0.22\textwidth]{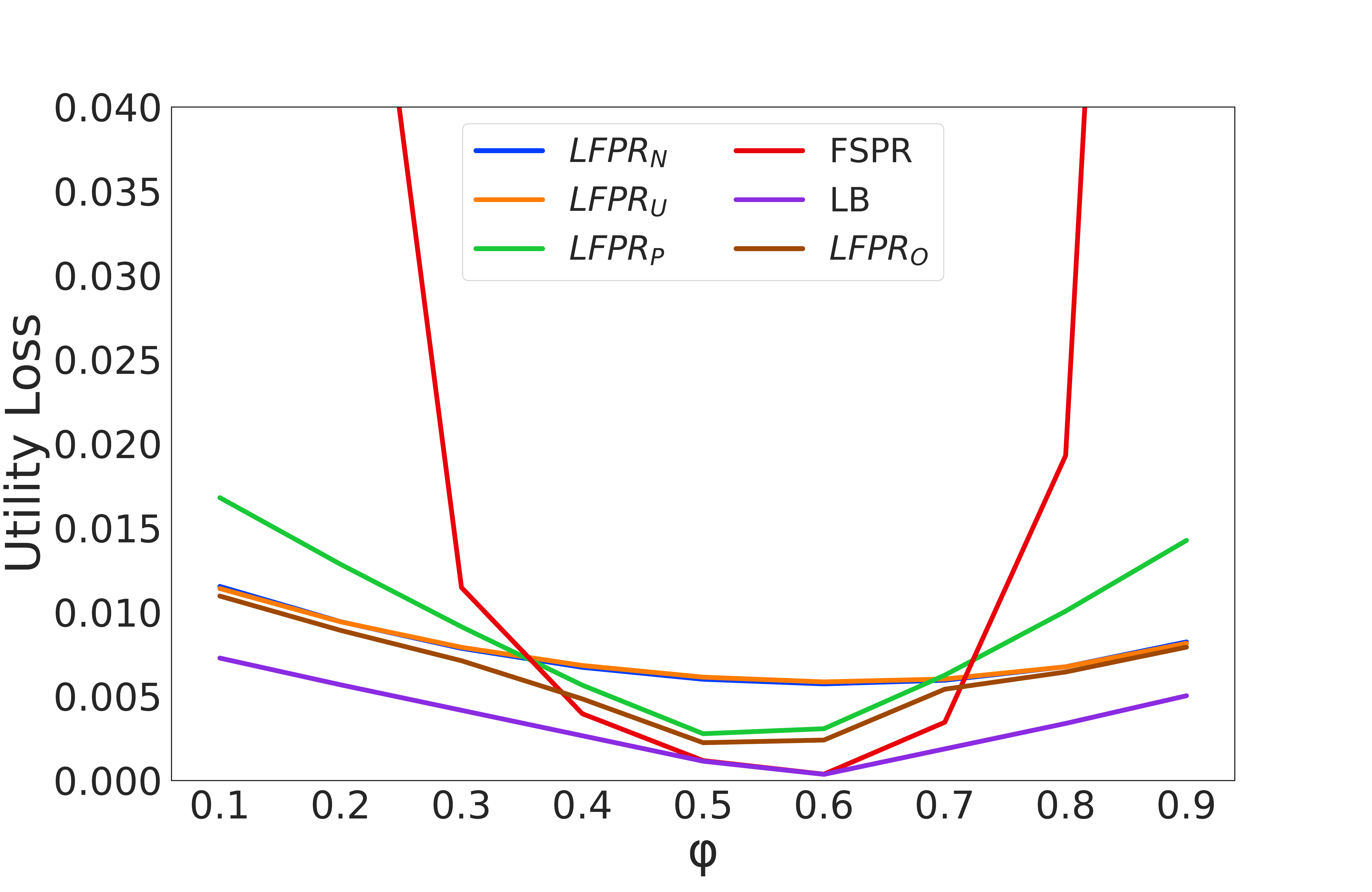}}
	}
	\caption{Utility loss for the real networks, (original red pagerank, \sc{books}: 0.48, \sc{blogs}: 0.33, \sc{dblp}: 0.16, and \sc{twitter}: 0.57).}
	\label{fig:real-utility}
\end{figure*}

Parameter $r$ controls the group size imbalance.
Parameter $\alpha$ controls the level of homophily: $\alpha < 0.5$ corresponds to heterophily,
$\alpha = 0.5$ to neutrality and $\alpha > 0.5$ to homophily.
We consider: (a) a symmetric case, where $\alpha$ is the same for both groups and (b)
an asymmetric case, where we set $\alpha$ = 0.5 for the blue group, making it neutral, and vary $\alpha$ for the red group.

The datasets and code are available in GitHub\footnote{\url{https://github.com/SotirisTsioutsiouliklis/FairLaR}}.

\subsection{When is Pagerank Fair?}
We study the conditions under which the original Pagerank and personalized Pagerank
algorithms are fair. We assume that the algorithms are fair, if they respect demographic parity, that is, if each group gets pagerank equal to its ratio in the overall population ($\phi$ = $r$). For brevity, we shall call the (personalized) pagerank allocated to red  nodes
\textit{red (personalized) pagerank} and the (personalized) pagerank allocated to blue  nodes \textit{blue (personalized) pagerank} .

First, we study homophily and size imbalance using synthetic datasets.
In Figure \ref{fig:local-synthetic-all}(a), we report the  red pagerank  
for the symmetric and in Figure \ref{fig:local-synthetic-all}(b) for the asymmetric case.
Fairness corresponds to the identity line (red pagerank = $r$).
Values above the line indicate unfairness towards the blue group, while values below the line unfairness towards the red group.

We also plot the distribution of the red personalized pagerank 
in Figures \ref{fig:sym-violin-plots} and  \ref{fig:asym-violin-plots}
for the symmetric and asymmetric case respectively.
To test whether the red personalized pagerank of a node depends on its color, 
we plot two distributions, one for the red personalized pagerank of the red nodes and one for the red personalized pagerank of the blue nodes.
Distributions are plotted in the form of violin plots. Personalized pagerank fairness corresponds to the case in which the two distributions overlap, with their mean on value $r$. 
Note that when a group is homophilic, it is expected that a large part of its personalized pagerannk goes to nodes of its own color, e.g., red homophilic nodes have large red personalized pageranks. 

\noindent \textbf {Symmetric case} (Figures \ref{fig:local-synthetic-all}(a) and \ref{fig:sym-violin-plots}): When both groups are  homophilic ($\alpha$ = 0.7, 0.9), the nodes tend to form two clusters, one with red nodes and one with blue nodes
sparsely connected to each other. 
This leads to an almost fair pagerank (with a very slight unfairness towards the minority group), but highly unfair personalized pageranks. 
On the contrary, when there is 
 heterophily ($\alpha = 0.1, 0.3$),
 there are no clusters, nodes tend to connect with nodes of the opposite color, and the larger group favors the smaller one. In this case, the pagerank and personalized pageranks of both the blue and the red nodes are all unfair towards the majority group. This is  especially evident when the imbalance in size is large (small $r$). 
 
\noindent \textbf {Asymmetric case} (Figures \ref{fig:local-synthetic-all}(b) and \ref{fig:asym-violin-plots}):
When the red nodes are homophilic ($\alpha$ = 0.7, 0.9), the red group keeps the pagerank to itself. As a result both pagerank and personalized pageranks are unfair towards the blue nodes, especially for larger $r$.
When the red nodes are heterophilic ($\alpha$ = 0.1, 0.3), the red group favors the blue group, and as a result, both the pagerank and the personalized pagerank are unfair towards the red nodes, especially for larger $r$. 
Thus, independently of the size $r$, 
pagerank is unfair to the  blue (the neutral) group in case of homophily, and unfair to the red group in the case of heterophily. 

\noindent \textbf {Universal fairness:} The only case when both pagerank and  personalized pageranks  are all fair is in a neutral network ($\alpha =  0.5$) with same-size groups ($r$ = 0.5) (middle violin plots in Figures 
\ref{fig:sym-violin-plots}(c) and \ref{fig:asym-violin-plots}(c)).

\noindent \textbf {Real datasets:}
For the real datasets, we report the red pagerank in Table \ref{table:real} ($p_R$ value) and plot the distributions of the red personalized pagerank of the blue and red nodes in Figure \ref{fig:real-violin-plots}.
For {\sc books}, there is no size imbalance and there is strong symmetric homophily leading to fair pagerank and highly unfair personalized pageranks.
For {\sc blogs}, there is no size imbalance, but the blue group is slightly more homophilic, which leads both to unfairness in pagerank for the red group, and unfairness of the personalized pagerank of the blue nodes towards the red ones. 
For {\sc dblp}, we have large size imbalance with the red group being the minority but almost neutral behavior in terms of homophily, leading to an almost fair pagepank, and the majority of  personalized pageranks being fair. 
Finally, for {\sc twitter}, the red group is larger than the blue group but less homophilic which leads to a slight unfairness towards the red group for both pagerank and personalized pageranks.

\subsection{What is the Utility Loss for Fairness?}
We now look into the utility loss
for achieving fairness.
We can view utility loss for each network as a measure of the cost we have to pay to achieve $\phi$-fairness for this network.
First, 
to assess the utility loss of our algorithms in absolute terms 
we compute a lower bound for the utility loss.

\noindent \textbf{Lower Bound.} 
We  compute a lower bound on the utility loss, by constructing the probability vector $\wvec$ that is $\phi$-fair, and it has the minimum utility loss compared to the original pagerank vector $\pvec_O$. 
Note that vector $\wvec$ is not necessarily attainable by any Pagerank algorithm in $\allPR$.

To compute $\wvec$, we start with $\pvec_O$ and we redistribute the probability between the two groups to make it fair. Let $\pvec_O(R)$ be the probability assigned to the red group by $\pvec$. Without loss of generality, assume that $\pvec_O(R)< \phi$, and let $\Delta = \phi-\pvec_O(R)$. To make the vector fair, we need to remove  $\Delta$ probability mass from the nodes in $B$, and redistribute it to the nodes in $R$. It is easy to show that to minimize the loss, the optimal redistribution will remove uniformly $\Delta/|B|$ probability from all nodes in $B$, and add uniformly $\Delta/|R|$ to all nodes in $R$. 
This follows from the fact that among all distribution vectors the one with the smallest length is the uniform one. However, this process does not guarantee that the resulting vector will not have negative entries, since some blue nodes may have probability less than $\Delta/|B|$. Let $\beta$ be the smallest non-zero such probability of any blue node. Our algorithm transfers $\beta$ probability from all the non-zero blue nodes to the red nodes, and then recursively applies the same procedure for the residual amount of probability that has not been transferred. It is not hard to show that this process will produce a fair vector with the minimum utility loss with respect to $\pvec_O$.

Figures \ref{fig:synth-utility-homo} and \ref{fig:synth-utility-phi} report the utility loss for selected synthetic networks 
and Figure \ref{fig:real-utility} for different values of
$\phi$ for the real datasets. $LB$ is the 
lower bound on utility loss.

\noindent \textbf{Effect of $\phi$:} In all cases, loss increases as the requested $\phi$ deviates from the red pagerank originally assigned to the red group
(Figures \ref{fig:synth-utility-phi} and  \ref{fig:real-utility}).

\noindent \textbf{\textsl{FSPR}:} \, In some cases {\sensitive} incurs high utility loss and, occasionally, it is even unable to find an appropriate solution. 
{\sensitive} achieves fairness by changing the jump vector of the Pagerank algorithm. 
The overall pagerank vector is a linear combination of the personalized pagerank vectors, with the jump vector providing the coefficients of this linear combination.  
{\sensitive} increases the jump probability for the vectors that favor the group it wants to promote and takes away probability from the vectors that favor the other group. However, when there are few, or no appropriate such vectors, {\sensitive} is forced to make extreme choices (assign a lot of probability to a few vectors) thus incurring high utility loss, or it is unable to find any solution.

There are several such examples in our datasets. 
For the {\sc dblp} dataset (Figure \ref{fig:real-utility}(c)), for small values of $\phi$ ($\phi \leq 0.3$), the utility loss of {\sensitive} is close to the optimal, but for $\phi \geq 0.4$, it skyrockets. Looking at Figure \ref{fig:real-violin-plots}, we see that there are very few personalized pagerank vectors with red pagerank larger than 0.4. As a result, for  $\phi \geq 0.4$, {\sensitive} is forced to allocate all the probability of the jump vector to these nodes, leading to high utility loss.
It is also interesting to observe the effect of homophily, or lack of, on the utility loss of {\sensitive}. In Figure \ref{fig:synth-utility-homo}(a) and (b), utility loss peaks when the network is neutral ($\alpha = 0.5$). In this case, there is no personalized pagerank vector that strongly favors one group over the other.

\noindent \textbf{\textsl{LFPR}:} \, Overall, for the locally fair family of algorithms, the utility loss function is smoother, avoiding high peaks.
The locally fair algorithms are directly affected by homophily, since this determines the composition of the neighborhoods of the nodes. As we deviate from neutrality, the loss increases (Figure	\ref{fig:synth-utility-homo}). 
This holds especially for the {\neighborlocal} algorithm.
This can be seen very clearly in  {\sc books} (Figure \ref{fig:real-utility}(a)),  where {\sensitive} almost achieves the lower bound, while {\neighborlocal} incurs high utility loss because of {\sc books} being very homophilic.
The utility loss of {\uniformlocal} and {\proportionallocal} follows in general the same trend as {\neighborlocal}.
Finally, {\residuallocal} redistributes any residual Pagerank so that the
utility loss is optimized and in many cases its utility loss is close to the lower bound ($LB$). 

\noindent \textbf{Summary:} 
{\sensitive} works well when there are enough unfair nodes (i.e., nodes with unfair personalized pageranks), as it can distribute the jump probability to them to achieve fairness.  
On the contrary, locally fair algorithms have high utility loss when there are many unfair nodes. {\neighborlocal} is the most sensitive to homophily.
The utility loss of {\neighborlocal} can be seen as a measure of local unfairness.
Overall, the locally fair algorithms are more stable than {\sensitive}. {\residuallocal} works very well in terms of utility loss and in many cases it achieves utility close to the lower bound.

\subsection{Qualitative Evaluation}
In this section, we provide some qualitative experiments, to better understand the properties of the fair algorithms.

\begin{figure}[t]
	\centering
\subfigure[Original {\pagerank}]{
	{\includegraphics[width = 0.175\textwidth]{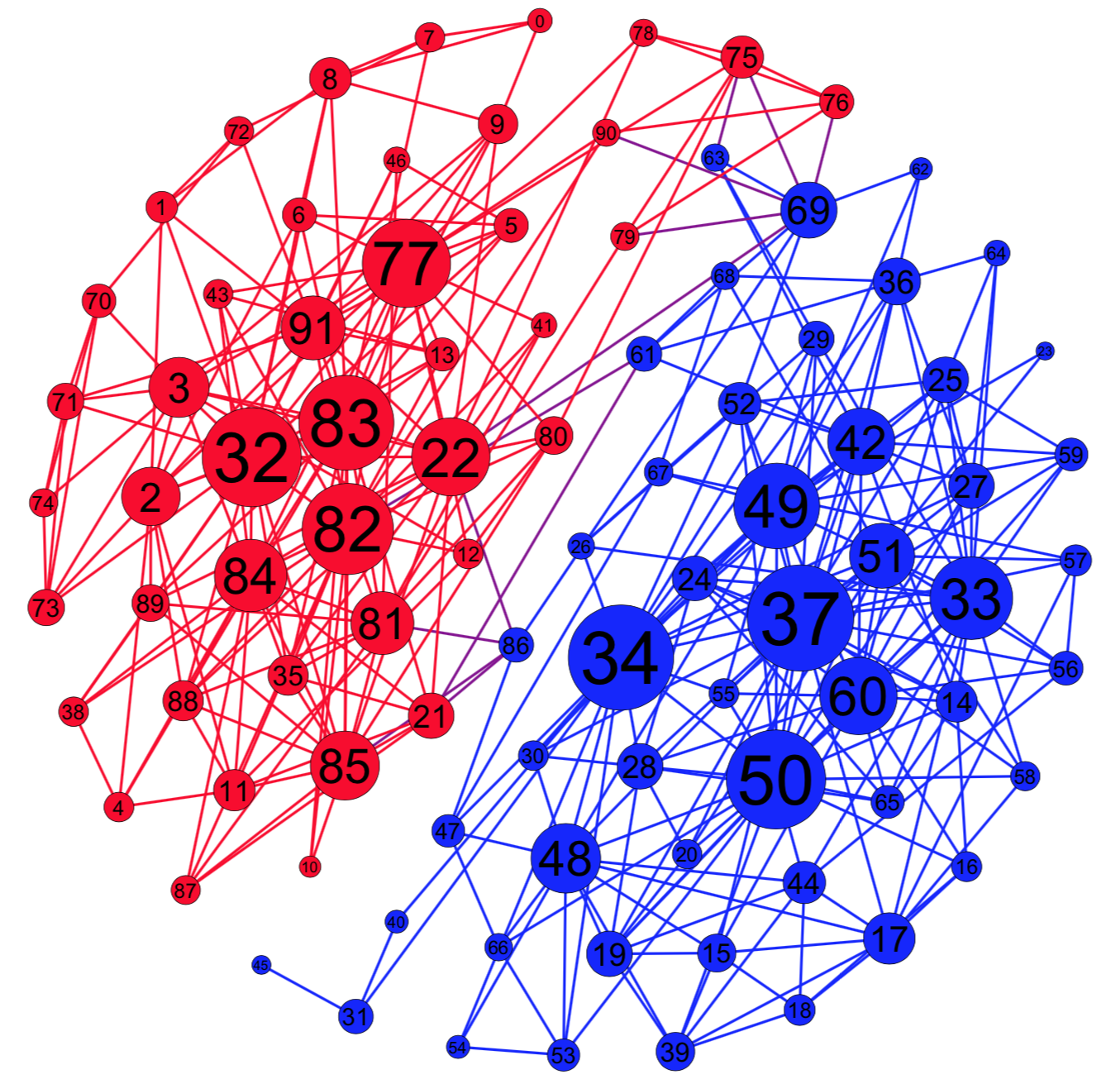}}
}
\subfigure[{\neighborlocal}]{
	{\includegraphics[width = 0.175\textwidth]{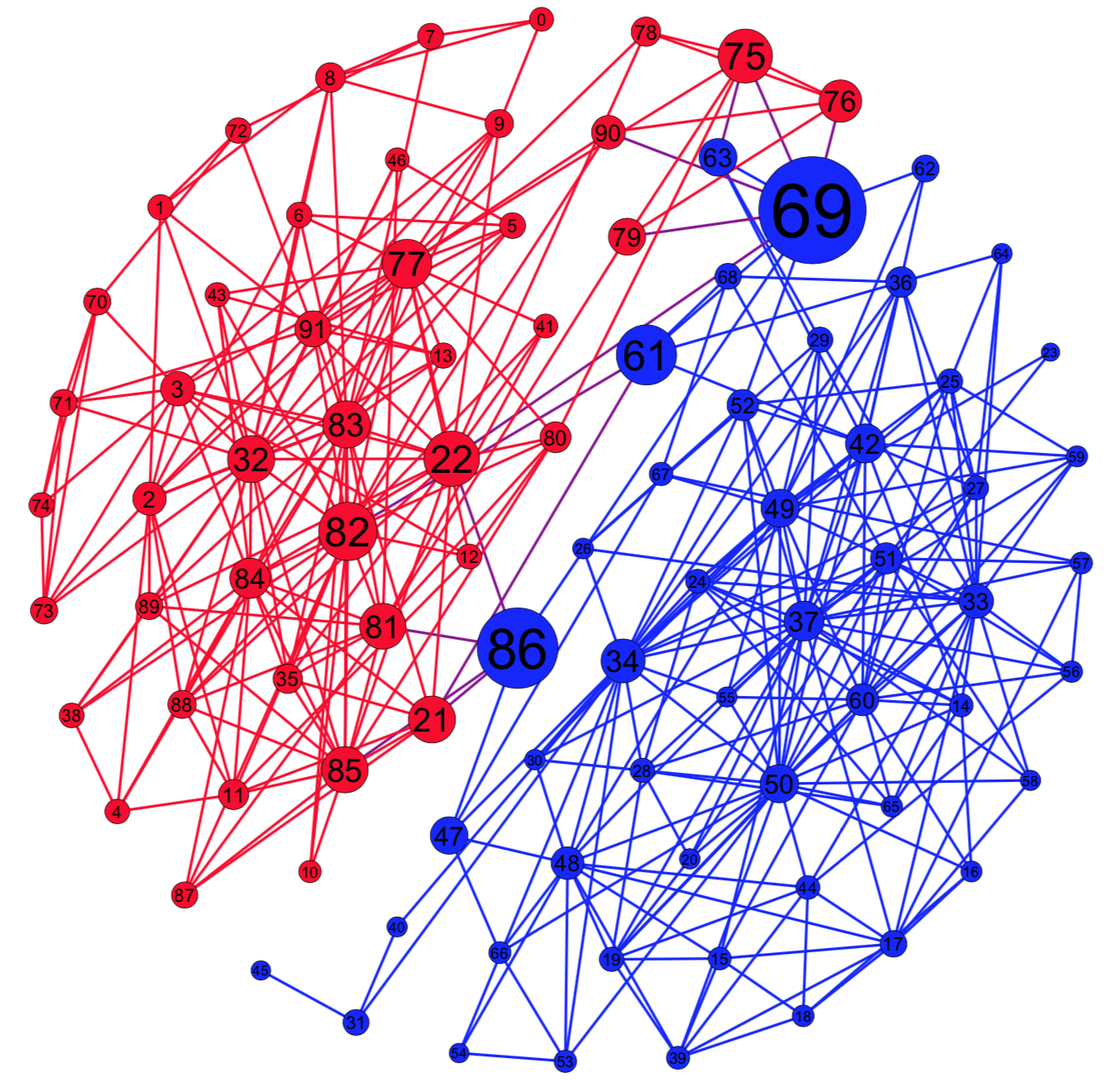}}
}
\subfigure[{\sensitive}]{
	{\includegraphics[width = 0.175\textwidth]{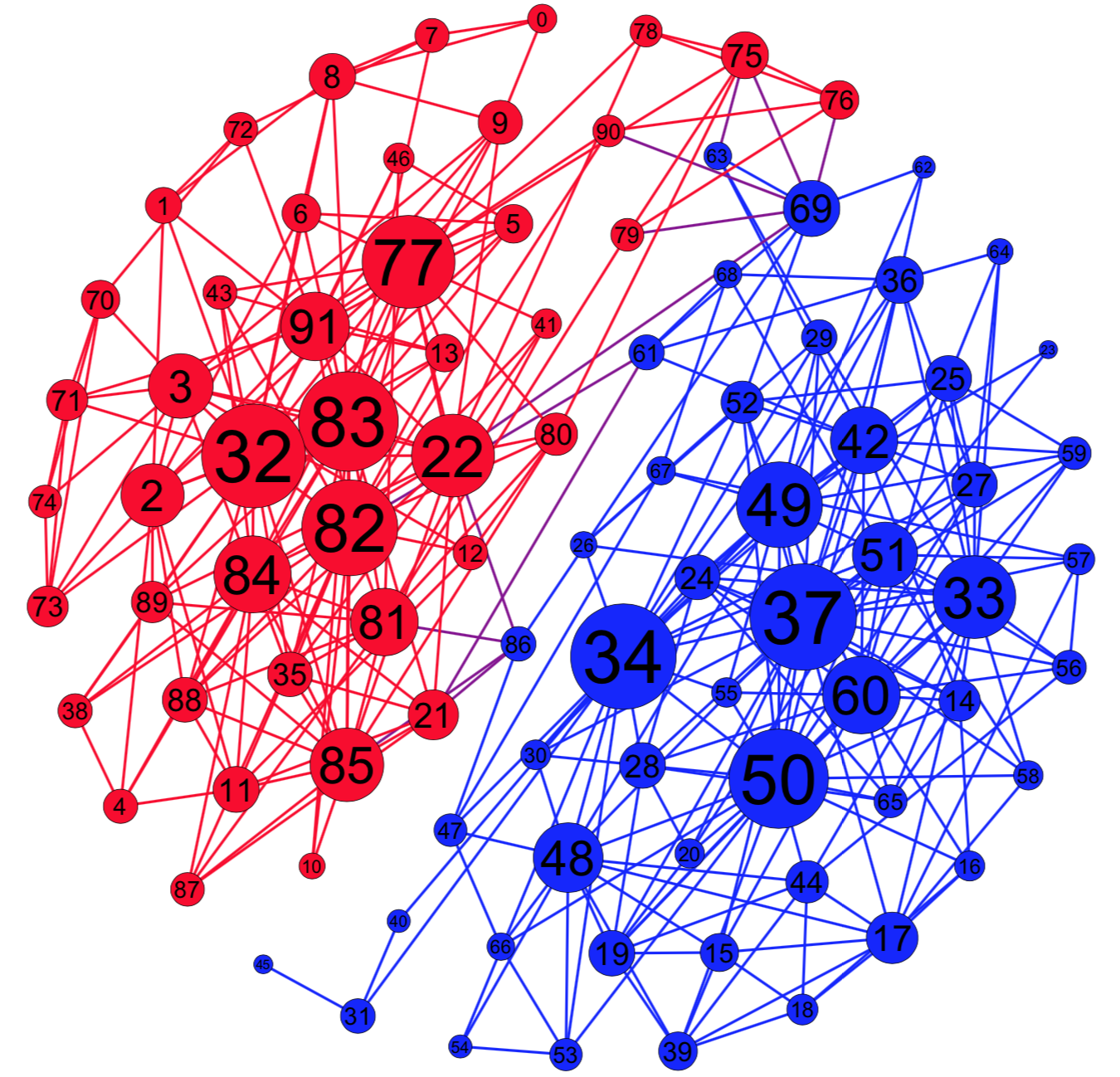}}
}
\subfigure[Jump vector for {\sensitive}]{
	{\includegraphics[width = 0.175\textwidth]{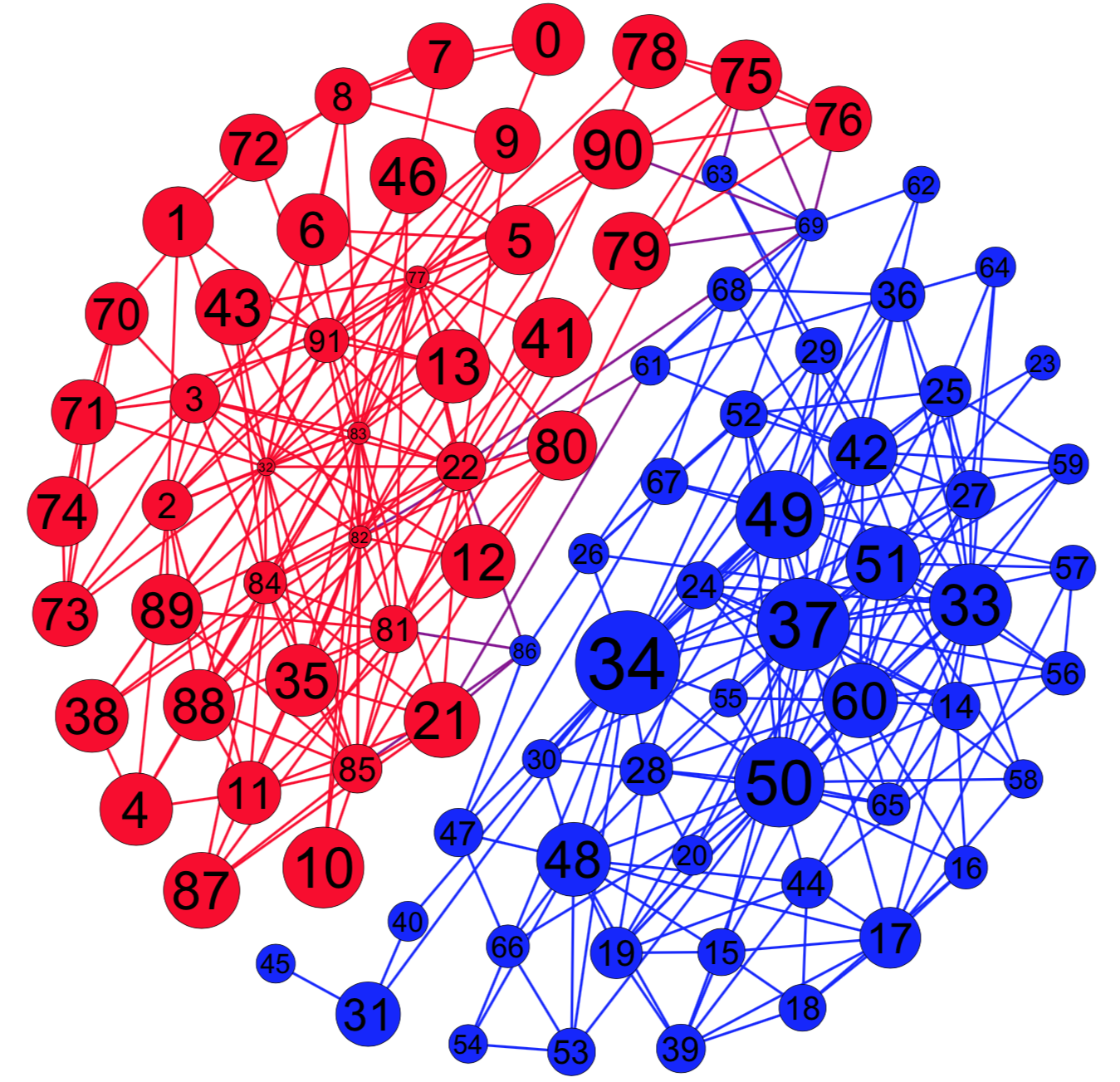}}
}
\caption{Visualization of the {\sc book} dataset.}
\label{fig:vis-books}
\end{figure}

\begin{figure}[t]
	\centering
	\subfigure[Original {\pagerank}]{
		{\includegraphics[width = 0.22\textwidth]{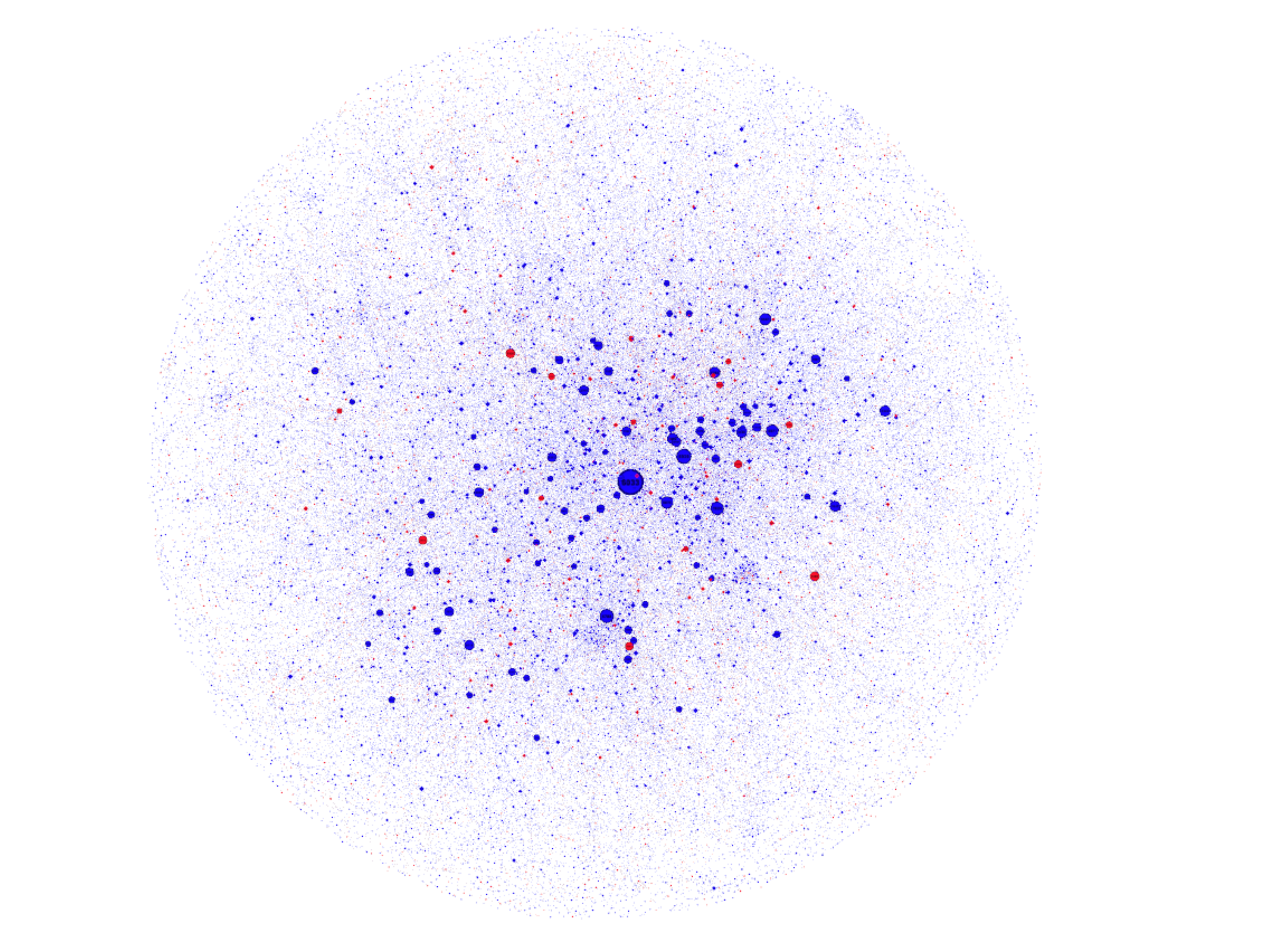}}
	}
	\subfigure[{\neighborlocal}]{
	{\includegraphics[width = 0.22\textwidth]{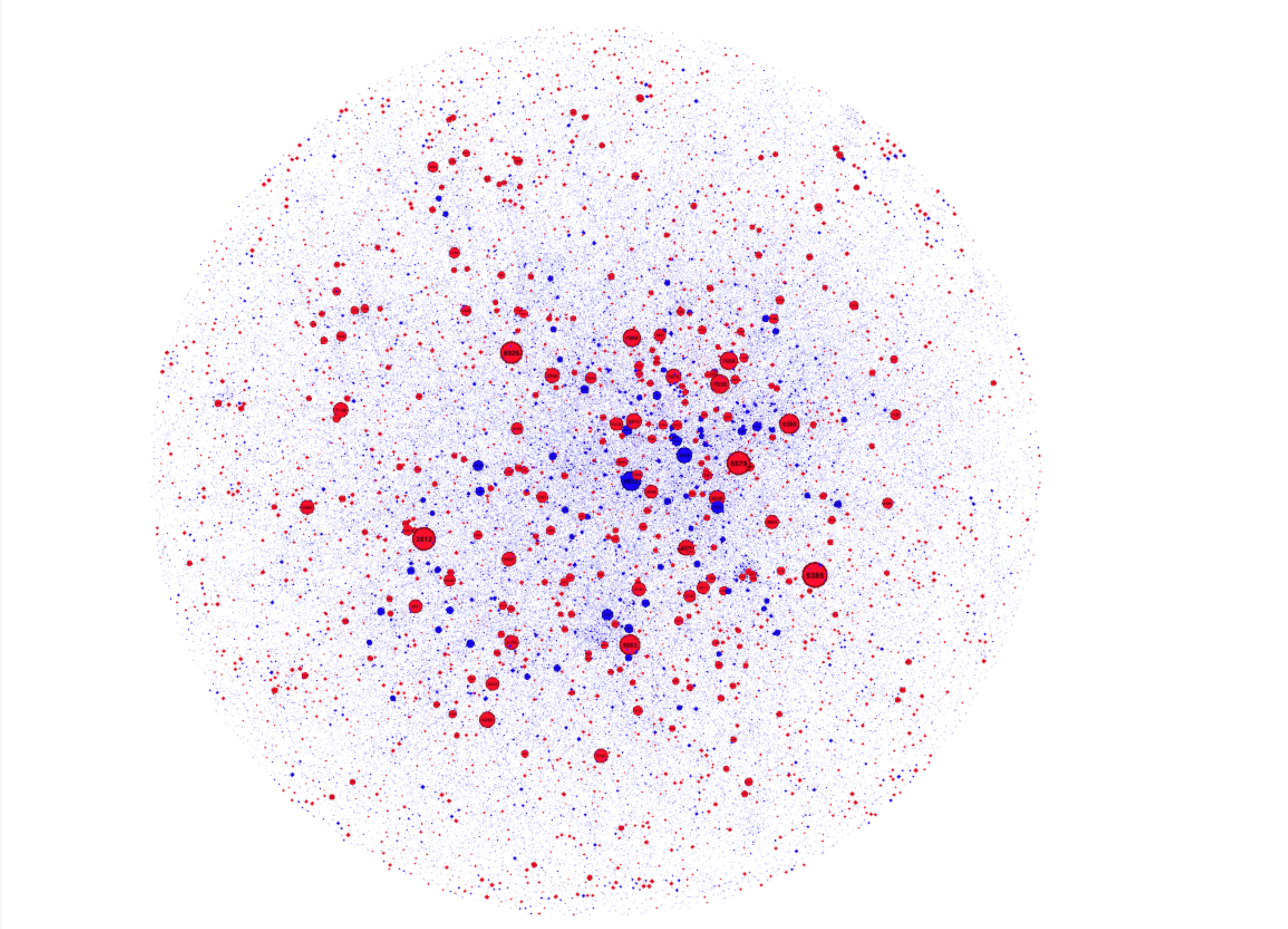}}
	}
	\subfigure[{\sensitive}]{
		{\includegraphics[width = 0.22\textwidth]{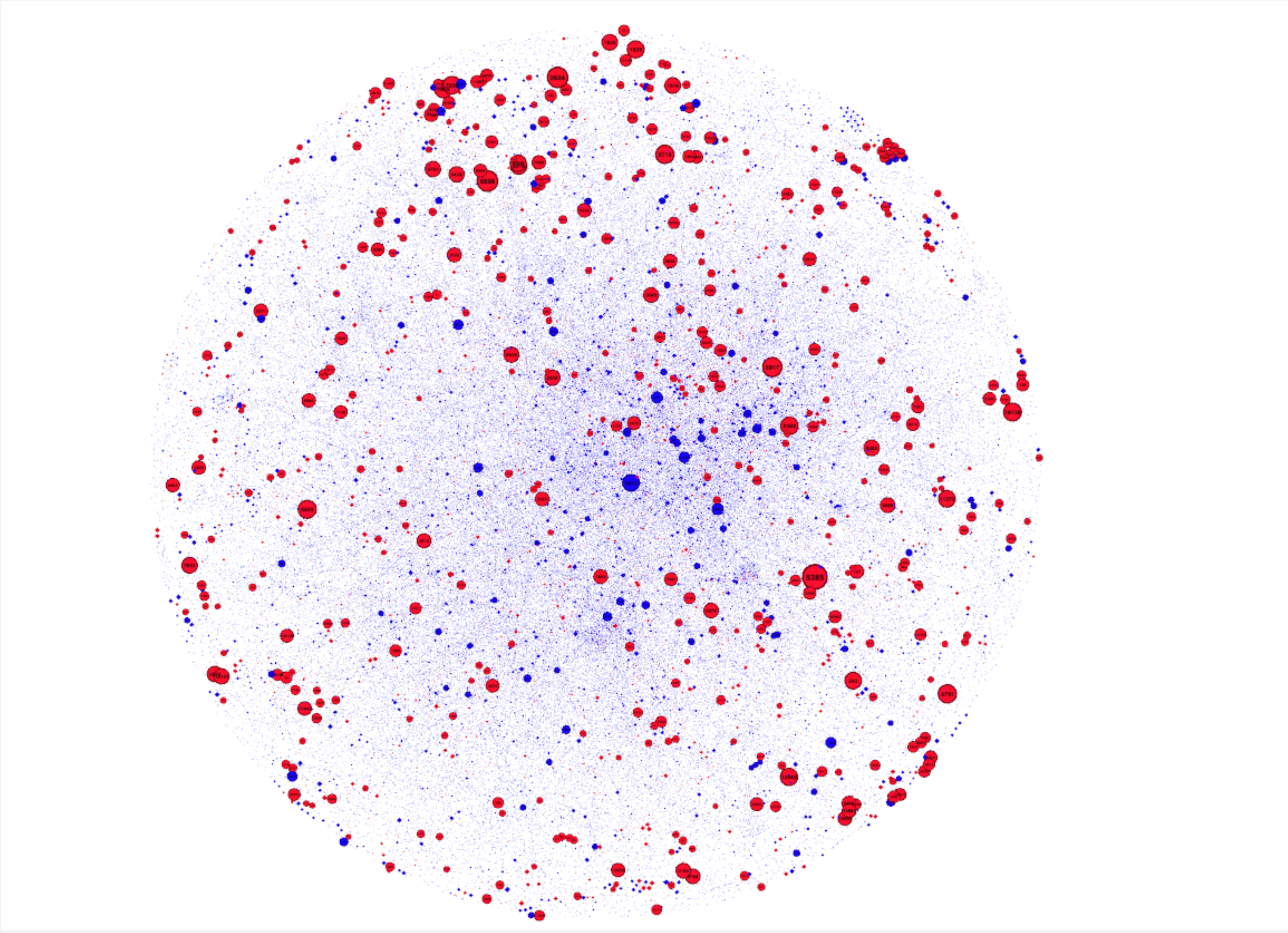}}
	}
	\subfigure[Jump vector for {\sensitive}]{
		{\includegraphics[width = 0.22\textwidth]{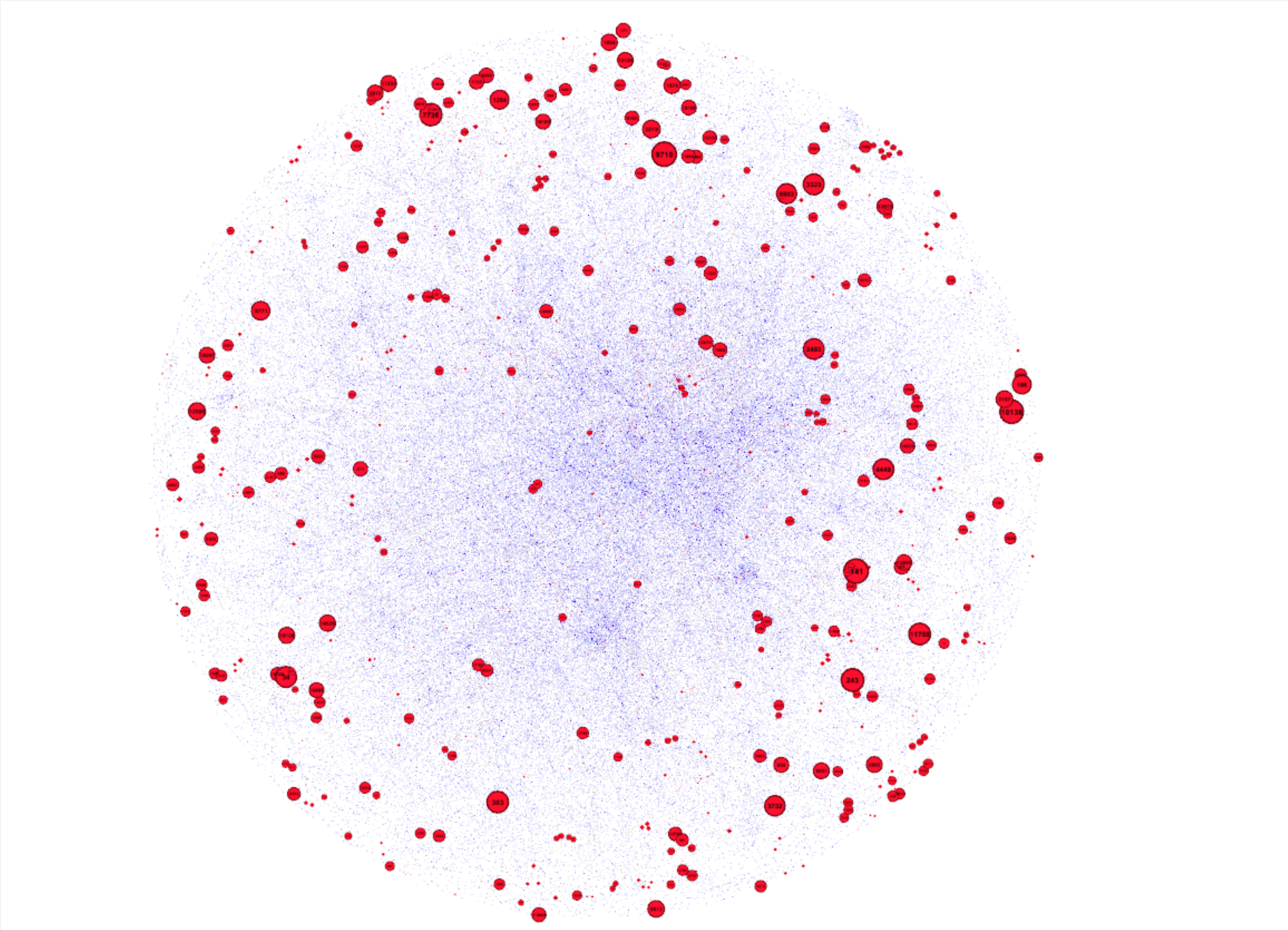}}
	}
	\caption{Visualization of the {\sc dblp} dataset.}
	\label{fig:vis-dblp}
\end{figure}

\noindent \textbf{Visualization.} 
In Figures \ref{fig:vis-books} and  \ref{fig:vis-dblp}, we visualize the results of the algorithms for the {\sc books}  and the {\sc dblp} dataset respectively, for $\phi$ = 0.5. Red nodes are colored red, and blue nodes are colored blue. Their size depends on the value of the quantity we visualize. We visualize the pagerank values for the original Pagerank, {\sensitive} and {\neighborlocal} algorithms, and the jump vector probabilities for  {\sensitive}. 

For  {\sc books}, where the original red pagerank is close to  $\phi$,
{\sensitive} is very similar to the original Pagerank algorithm.                                                                                                                                                                                                                                                                                                                     
{\sc books} is homophilic and the jump vector assigns rather uniform weights                                                                                                                                                                                                                                                                                                         
to almost all nodes.                                                                                                                                                                                                                                                                                                                                                                 
On the other hand, {\neighborlocal} promotes heavily nodes connecting the two opposite groups, i.e., nodes that are minorities in their neighborhoods. 
We observe a different result in {\sc dblp}, where $\phi$ is much larger than
the original red pagerank.
{\neighborlocal} distributes weights broadly in the red community, while {\sensitive} is much more aggressive. This is especially evident in the jump vector which promotes a few nodes in the periphery.

\begin{table}[ht]
	\centering
	\caption{Top-10 authors with $\phi$ = 0.3; the number in parenthesis is the position of the author in the original {\pagerank} ($\OPR$)  (female authors in bold).
	}
\footnotesize{
	\begin{tabular}{l l l}
		\toprule
		$\OPR$ &      \sensitive  &             \neighborlocal \\
		\midrule
		C. Faloutsos &          C. Faloutsos (1) &     C. Faloutsos (1)  \\
		G. Weikum &         G. Weikum  (2) &         G. Weikum (2) \\ 
		P. S. Yu &                  P. S. Yu  (3) &         \textbf{J. Widom}  (38) \\
		M. Stonebraker &    M. Stonebraker  (4) &    M. Stonebraker (4) \\ 
		M. J. Franklin &        M. J. Franklin (5)  & P. S. Yu  (3)\\
		H. Garcia-Molina &     H. Garcia-Molina  (6) &    \textbf{S. T. Dumais}  (28)  \\
		D. Kossmann &     D. Kossmann   (7) &      \textbf{M. Lalmas}   (27)  \\
		W. Lehner &   \textbf{E. A. Rundensteiner}   (22) &     P. Serdyukov   (17)  \\
		M. J. Carey  &     R. Agrawal   (11) &          \textbf{E. Bertino}   (25) \\
		M. de Rijke &    W. Lehner                  (8) &  \textbf{E. A. Rundensteiner}   (22) \\
		\bottomrule
	\end{tabular}
\label{table:case-study}
}
\end{table}

\noindent 
\textbf{Anecdotal Examples.} 
We will use the {\sc dblp} dataset for a qualitative evaluation of the results of the algorithms. Recall that for  this dataset, women are the minority with $r$ = 0.17, and the original red pagerank is equal to 0.16. 

To achieve a  fairer representation of women, we apply the {\neighborlocal} and {\sensitive} algorithms with $\phi$ = 0.3. In Table \ref{table:case-study}, we present the first 10 authors for each algorithm.
In the original Pagerank algorithm $\OPR$, there is no female author in the top-10 (the first one appears in position 22). {\sensitive} achieves fairness but also  minimizes utility loss, so the result is fair but also close to that of  $\OPR$.  {\neighborlocal} asks that all authors  
have $\phi$-fair inter-gender collaborations  resulting
 in a large number of female authors appearing in the top-10 positions.

\begin{table}[ht]
	\centering
	\caption{Top-3 female authors by conference, $\phi$ = 0.3.}
	\footnotesize {
		\begin{tabular}{lll}
			\toprule
			&SIGIR&SIGMOD\\
			\midrule
			\multirow{3}{*}{\sensitive}&
			Mounia Lalmas  & Elke A. Rundensteiner \\
			&Susan T. Dumais  & Elisa Bertino \\
			&Juliana Freire  & Tova Milo \\	
			\midrule
			\multirow{3}{*}{\neighborlocal}&
			Susan T. Dumais & Jennifer Widom \\
			& Mounia Lalmas  & Elke A. Rundensteiner \\
			& Emine Yilmaz  & Fatma Ozcan \\
			\bottomrule
		\end{tabular}
	}
	\label{table:case-study-conf}
\end{table}

We also use {\sc dblp} to study the targeted fair Pagerank algorithms.
In this case, we want to enforce fair behavior towards authors in specific conferences. We consider two such conferences, SIGIR and SIGMOD, and  select $S$ to include authors of each one of them.  In Table \ref{table:case-study-conf}, we show the top-3 women authors for each conference according to our algorithms. We observe that the algorithms produce different results depending on the conference, each time promoting women that are authorities in their respective fields, such as, Suzan Dumais when $S$ is SIGIR, and Jennifer Widom when $S$ is SIGMOD.

\section{Related Work}
\label{sec:related-work}
%In this paper, we study fairness in the context of link analysis algorithms and in particular the fairness of the node ranking produced by the
%{\pagerank} algorithm.

\noindent \textbf{Algorithmic  fairness.} Recently, there has been increasing interest in algorithmic fairness, especially in the context of machine learning.
Fairness is regarded as the lack of discrimination on the basis of some protective attribute.
Various definition of  fairness having proposed especially for classification \cite{fairness-awarness,fairness-study,fairness-measures,bias-fairness-survey}. We use a group-fairness definition, based on parity.
%In the seminal work of \cite{fairness-awarness}, 
%a dichotomy is made between \textit{individual fairness}, i.e., the requirement that
%similar individuals are treated similarly,
%and \textit{group fairness}, i.e., the requirement that a group as a whole is not being discriminated against.
%A survey of the different bias and fairness measures can be found in~\cite{fairness-study,fairness-measures}.
%Measures of group fairness can be classified as measures based on (a) base rates (or, parity) of the favorable outcome for the protected and the the favorable outcome for the non-protected group, (b) conditional accuracy asking whether the error rates for each group are similar, and (c) group-conditioned calibration \cite{fairness-study}.
Approaches to handing fairness can be classified as \textit{pre-processing}, that modify the input data, \textit{in-processing}, that modify the algorithm and \textit{post-processing} ones, that modify the output. We are mostly interested in in-processing techniques.

There is also prior work on fairness in ranking~\cite{fai*r,Julia17,Biega18,pair-wise}. %See~\cite{chato-ranking} for a short survey.
All of these works consider ranking as an ordered list of items, and use different rules for defining and enforcing fairness that consider different prefixes of the ranking~\cite{fai*r,Julia17}, pair-wise orderings~\cite{pair-wise}, or exposure and presentation bias~\cite{exposure-ranking,Biega18}. 

%The authors of \cite{fai*r} present a post-processing approach that given a list of $n$ candidates along a measure of their quality produces a ranking such
% that the proportion of the protected candidates in every prefix of the
%top-$k$ ranking remains statistically above or indistinguishable from a given minimum and utility constraints
%on the quality of the selected candidates are satisfied. The authors of \cite{Julia17}
%propose a statistical parity measure based on comparing
%the distributions of the protected and non-protected items on different prefixes 
%and then averaging them in a discounted manner. They also present an approach for learning fair rankings.
%The authors of \cite{exposure-ranking} present a framework for  formulating fairness constraints on rankings, and algorithms for computing utility maximizing rankings subject to such fairness constraints.
%The authors of \cite{Biega18}  focus on position bias. i.e., the bias that leads to disproportionately less attention being paid to low-ranked items, and propose measures to quantify this bias and amortize it across a series of rankings. Finally, the authors of \cite{pair-wise} propose measures based on 
%pair-wise comparison and incorporate such measures as a regularization factor in recommendation algorithms.

Our goal in this paper is not to propose a new definition of ranking fairness, but rather to initiate a study of fairness in link analysis.
%To this end, we adopt a ratio based approach. 
A distinguishing aspect of our approach is that we
take into account the actual Pagerank weights of the nodes, not just their ranking. 
%, since node weights play an important role in many algorithms and machine learning tasks~\cite{spam-classifier}.
%not only the position of the node in the ranking but also its actual weight, since node weights play an important role in many algorithms 
%
Furthermore, our focus in this paper, is to design in-processing algorithms that incorporate fairness in the inner working of the Pagerank algorithm. We present a post-processing approach as a means to estimate a lower bound on the utility loss. None of the previous approaches considers ranking in networks, so the proposed approaches are novel. 
%The novelty of this approach is that it is based on re-distributing node weights.

%discounted manner.
%	Prior work on fairness for ranking: prefix-based \cite{fai*r}, \cite{Julia17}, exposure based \cite{exposure-ranking}, equity of attention \cite{Biega18}, pair-wise fairness \cite{pair-wise}
%Most previous approaches place parity-based constraints on the fraction of items from each group in the ranking. 

\noindent \textbf{Fairness in networks.} 
There has been some  recent work on network fairness in the context of graph embeddings \cite{fair-embedding,filter-bubbles,fair-walk}.
The work in \cite{fair-embedding} follows an in-processing approach 
that extends the learning function with regulatory fairness enforcing terms, while the work in  \cite{filter-bubbles} follows a post-processing approach so as to promote link recommendations to nodes belonging
to specific groups. Both works are not related to our approach.
%The work in \cite{fair-embedding}  aims at ensuring that
%the learned representations do not correlate with
%certain attributes, such as age or gender, by
%learning a set of adversarial filters
%that remove information about particular sensitive attributes.
The work in \cite{fair-walk} extends the node2vec graph embedding method by modifying the
random walks used in node2vec with fair walks, where nodes are partitioned into groups and each group is given the same probability of being selected when a node makes a transition. The random walk 
introduced in \cite{fair-walk} has some similarity with the
random walk interpretation of {\neighborlocal}. It would
be interesting to see, whether our extended residual-based algorithms could be utilized also in the context of graph embeddings, besides its use in link analysis.

There are also previous studies on the effect of
homophily, preferential attachment and imbalances in group sizes.
It was shown that the combination of these three factors leads to 
%the phenomenon of glass ceiling, i.e, a barrier that prevents minorities from rising to top positions, which in the case of networks is demonstrated in 
uneven degree distributions between groups \cite{glass-ceiling}.
Recent work shows that this phenomenon is exaggerated by many link recommendation algorithms \cite{glass-ceiling-recommend}. Evidence of inequality between degree distribution of minorities and majorities was also found in many real networks \cite{homophily-ranking}.
Our work extends this line of research by looking at Pagerank values instead of degrees.
Along this lines, recent work studies in depth how homophily and size imbalance can affect the visibility
that different groups get in link recommendations, i.e, how often nodes in each group get recommended \cite{xyz}.
%a degree induced ranking.
Very recent work also looks at graph mining algorithms in general from the perspective of individual fairness, where the goal is to produce a similar output for similar nodes \cite{inform}.

Finally, there is previous work on diversity in network ranking. 
%In this line of research,
The goal is to find important nodes that also maximally cover the nodes in the network~\cite{grasshoper,divrank}.
%Grasshoper uses a series of absorbing random walks \cite{grasshoper}, while DivRank uses of a form of reinforced random walks where the transition probability to one state from others is
%reinforced by the number of previous visits to that state \cite{divrank}.
Our problem is fundamentally different, since we look for scoring functions that follow a parity constraint.

%Related work on Page Rank (general) and Link Analysis. Perhaps refer to other issues in networks, such as diversity and polarity.

\section{Conclusions}
\label{sec:conclusions}
In this paper, we initiate a study of fairness for Pagerank.
We provide definitions of fairness, and we propose two approaches for achieving fairness: one that modifies the jump vector, and one that imposes a fair behavior per node. We prove that the latter is equivalent to a stronger notion of fairness that also guarantees personalized Pagerank fairness. We also consider the problem of attaining fairness while minimizing the utility loss of Pagerank. Our experiments demonstrate the behavior of our different algorithms.

There are many direction for future work. First, we would like to study the role of
$\gamma$, i.e., the jump probability. Then, it would be interesting to explore other notions of Pagerank fairness, besides $\phi$-fairness, for instance ones based on
rank-aware fairness \cite{edbt-tut}. Furthermore, we plan to explore further applications of the theory behind our fair Pagerank algorithms to derive novel notions of fair random walks.

\begin{acks}
	Research work  supported by the Hellenic Foundation for Research and Innovation (H.F.R.I.)
	 under the “1st Call for H.F.R.I. Research Projects to Support Faculty Members \& Researchers and Procure High-Value Research Equipment” 
	Project Number:  HFRI-FM17-1873, GraphTempo.
	\end{acks}

%%
%% The next two lines define the bibliography style to be used, and
%% the bibliography file.
\bibliographystyle{ACM-Reference-Format}
\bibliography{main}

\end{document}